\theoremstyle{plain}
\newtheorem{theorem}{Theorem}
\newtheorem{lemma}[theorem]{Lemma}
\newtheorem{proposition}[theorem]{Proposition}
\newtheoremstyle{note}{\topsep}{\topsep}{\slshape}{}{\scshape}{}{ }{}
\theoremstyle{note}
\newtheorem{remark}[theorem]{Remark}
\newtheorem{example}[theorem]{Example}
\numberwithin{equation}{section}
\numberwithin{theorem}{section}
\newcommand\cX{{\mathcal X}}
\newcommand\scD{{\mathscr D}}
\newcommand\scI{{\mathscr I}}
\newcommand\scM{{\mathscr M}}
\newcommand\scN{{\mathscr N}}
\newcommand\scP{{\mathscr P}}
\newcommand\scR{{\mathscr R}}
\newcommand\scV{{\mathscr V}}
\newcommand\scX{{\mathscr X}}
\newcommand\scY{{\mathscr Y}}
\newcommand\mvector{\boldsymbol}
\newcommand\va{\mvector{a}}
\newcommand\vd{\mvector{d}}
\newcommand\vp{\mvector{p}}
\newcommand\vq{\mvector{q}}
\newcommand\vv{\mvector{v}}
\newcommand\vx{\mvector{x}}
\newcommand\vA{\mvector{A}}
\newcommand\vC{\mvector{C}}
\newcommand\vD{\mvector{D}}
\newcommand\vE{\mvector{E}}
\newcommand\vX{\mvector{X}}
\newcommand\veta{\mvector{\eta}}
\newcommand\vGamma{\mvector{\Gamma}}
\newcommand\vzero{\mvector{0}}
\newcommand\field{\mathbb}
\newcommand\R{\field{R}}
\newcommand\C{\field{C}}
\newcommand\Z{\field{Z}}
\newcommand\N{\field{N}}
\newcommand\Q{\field{Q}}
\newcommand\bbP{\mathbb{P}}
\newcommand\PP{\mathbb{P}}
\newcommand\bLambda{\boldsymbol{\Lambda}}
\newcommand\tr{\operatorname{Tr}}
\newcommand\res{\operatorname{res}}
 \newcommand\mult{\operatorname{mult}}
\newcommand\grad{\operatorname{grad}}
\newcommand\rmd{\mathrm{d}}
\newcommand\CP{\ensuremath{\C\bbP}}
\newcommand\rmi{\mathrm{i}\mspace{1mu}}
\newcommand\Dt{\frac{\mathrm{d}\phantom{t} }{\mathrm{d}\mspace{1mu} t}}
\newcommand\pder[2]{\dfrac{\partial #1 }{\partial #2}}
\newcommand\abs[1]{\lvert #1 \rvert}
\newcommand\mtext[1]{\quad\text{#1}\quad}
\newcommand\defset[2]{\left\{{#1}\;\vert \;\; {#2} \,\right\}}
\title{Darboux Points and Integrability Analysis of Hamiltonian Systems with Homogeneous Rational Potentials}
\author{Micha\l{} Studzi\'nski\\
Institute for Theoretical Physics and Astrophysics,\\
University of Gda{\'n}sk\\
Wita Stwosza 57,
PL-80-952 Gda{\'n}sk, Poland
\\ and \\
National Quantum Information Centre of Gda\'nsk\\
W\l{}. Andersa 27,
PL-81-824 Sopot, Poland\\
e-mail:  studzinski.m.g@gmail.com
\\ and \\
Maria Przybylska \\
Institute of Physics, University of Zielona G\'ora, \\
 Licealna 9, PL-65--417,  Zielona G\'ora, Poland
\\ e-mail: M.Przybylska@proton.if.uz.zgora.pl
}
\begin{document}
%
%

\maketitle

\begin{abstract}
  We study the integrability in the Liouville sense of natural
  Hamiltonian systems with a homogeneous rational potential $V(\vq)$.
  Strong necessary conditions for the integrability of such systems
  were obtained by an analysis of differential Galois group of
  variational equations along certain particular solutions. These
  conditions have the form of arithmetic restrictions putted on
  eigenvalues of Hessian $V''(\vd)$ calculated at a non-zero solution
  $\vd$ of equation $\grad V(\vd)=\vd$. Such solutions are called
  proper Darboux points.

It was recently proved that for generic polynomial homogeneous
  potentials there exist universal relations between eigenvalues of
  Hessians of the potential taken at all proper Darboux points.  The
  question about the existence of such relations for rational
  potentials seems to be hard. One of the reason of this fact is the
  presence of indeterminacy points of the potential and its
  gradient. Nevertheless, for two degrees of freedom we prove that
  such relation exists. This result is important because it allows to
  show that the set of admissible values for eigenvalues of Hessian at  a proper
  Darboux point for potentials satisfying necessary conditions for the
  integrability is finite. In turn, it gives a tool for classification
  of integrable rational potentials.
 
Also, quite recently, it was shoved that for polynomial homogeneous potentials
additional necessary conditions for the integrability can be deduced from the
existence of
improper Darboux points, i.e., points $\vd$ which are non-zero solution of
equation $\grad V(\vd)=\vzero$. These new conditions have also the form of
arithmetic restrictions imposed on eigenvalues of $V''(\vd)$.  In this paper we 
prove that for rational potentials improper Darboux points give the same
necessary conditions for the integrability.

{\textbf{Key words:.}   Non-integrability criteria; Integrability; Hamiltonian
systems; Differential Galois theory; Application of residue calculus}

\textbf{2010 Mathematics Subject Classification}: 37J30; 70H05; 70H07; 32A27 \\
\end{abstract}

\section{Introduction and main results}
\label{sec:intro}
In this paper we consider Hamiltonian systems given by Hamilton's
function of the form
\begin{equation}
  \label{ham}
  H=\frac{1}{2}\sum_{i=1}^np_i^2+V(\vq),
\end{equation} 
where $\vq=(q_1,\ldots,q_n)$ and $\vp=(p_1,\ldots,p_n)$ are canonical
coordinates and momenta in the complex phase space $\C^{2n}$. We assume that 
potential $V=V(\vq)\in\C(\vq)$ is a homogeneous rational function.  We
can write it the following form
\begin{equation}
  V(\vq):=\dfrac{W(\vq)}{U(\vq)},
  \label{eq:ratpot}
\end{equation} 
where $W(\vq)$ and $U(\vq)$ are relatively prime homogeneous
polynomials of degrees $r,s\in\N_{0}:=\N\cup\{0\}$,
respectively. Hence, potential $V(\vq)$ is a homogeneous function of
degree $k:=r-s\in\Z$.

Our aim is to investigate the integrability of such systems.
Recently, this problem was considered in papers \cite{mp:04::d,mp:05::c}
for systems with polynomial homogeneous potentials.  Methods developed
in these papers are strong. Thanks to them it was possible to give
necessary and sufficient conditions for the integrability of
homogeneous polynomial potentials of small degree $k$, and small
number of freedom $n$. The main purpose of this paper is to extend
methods and results of the cited papers to systems with rational
potentials. Some first attempts for such systematic analysis are contained in
B.Sc. thesis and M.Sc. thesis of the first author
\cite{Studzinski:09::,Studzinski:10::}.

The equations generated by Hamiltonian~ \eqref{ham} have the canonical
form
\begin{equation}
  \label{eq:ham}
  \Dt \vq=\vp,\qquad \Dt \vp=-V'(\vq),
\end{equation} 
where $V'(\vq):=\grad V(\vq)$. We assume that the time $t$ is complex.
We say that potential $V$ is integrable if this system is integrable
in the Liouville sense, i.e., equations \eqref{eq:ham} possess $n$
functionally independent commuting first integrals. It is useful to
consider the equivalence classes of potentials as it was made
in~\cite{mp:05::c}. Namely, let $\mathrm{PO}(n,\C)$ be the complex
projective orthogonal subgroup of $\mathrm{GL}(n,\C)$, i.e.,
\begin{equation}
  \mathrm{PO}(n,\C)=\{\vA\in\mathrm{GL}(n,\C),\ |\ \vA\vA^T=\alpha \vE_n,\;
  \alpha\in\C^\star\},
\end{equation}
where $\vE_n$ is $n$-dimensional identity matrix.
We say that $V$ and $\widetilde V$ are equivalent if there exists
$\vA\in \mathrm{PO}(n,\C)$ such that $\widetilde
V(\vq)=V_{\vA}(\vq):=V(\vA\vq)$. Obviously, if $V$ is integrable, then
also $V_{\vA}$ is integrable, for any $A\in \mathrm{PO}(n,\C)$.  In
further considerations a potential means a class of equivalent
potentials in the above sense.

The crucial role in our consideration plays the notion of Darboux
point.  A non-zero $\vd\in\C^n$ is called a Darboux point of potential
$V$ iff $V'(\vd)$ is parallel to $\vd$. Thus, $\vd$ is a solution of
the following system
\begin{equation}
  \label{propDar}
  V'(\vd)=\gamma\vd,
\end{equation} 
where $\gamma\in\C$, or, equivalently, it satisfies
\begin{equation}
  \label{eq:dar2}
  \vd\wedge V'(\vd)=0.
\end{equation}

We say that a Darboux point $\vd$ is a proper Darboux point iff
$V'(\vd)\neq \vzero$, otherwise it is called an improper one.

Proper Darboux point defines a straight-line
particular solution of the system \eqref{eq:ham} given by
\begin{equation}
  (\vp(t),\vq(t))=(\varphi(t)\vd,\dot{\varphi}(t)\vd),
  \label{eq:party}
\end{equation}
where $\varphi(t)$ is a scalar function satisfying
\begin{equation}
  \label{eq:pen}
  \ddot \varphi = -\gamma \varphi^{k-1}. 
\end{equation}
This equation has first integral
\begin{equation}
  h(\varphi,\dot\varphi)=\dfrac{1}{2}\dot\varphi^2+\dfrac{\gamma}{k}\varphi^k.
\end{equation} 
Thus, in fact, we have to our disposal a family of particular
solutions parametrised by the value of first integral $h$.

Let $\vGamma_{\varepsilon}$ denote the phase curve of particular
solution \eqref{eq:party} for which
$h(\varphi,\dot\varphi)=\varepsilon$, i.e.,
\begin{equation}
  \vGamma_{\varepsilon} :=\defset{(\vq,\vp)\in\C^{2n}}{ (\vq,\vp)=(\varphi
    \vd,\dot \varphi\vd),\, h(\varphi,\dot\varphi)=\varepsilon}.
  \label{eq:phasecurve}
\end{equation}
The variational equation along solution~\eqref{eq:party} can be
written in the following form
\begin{equation}
  \label{eq:var2}
  \ddot \vx = \varphi(t)^{k-2}V''(\vd)\vx,
\end{equation}
where $V''(\vd)$ is the Hessian  of the potential $V$ calculated at a
Darboux point $\vd$.
If matrix $V''(\vd)$ is diagonalisable, then without loss of
generality we can assume that the above system splits into a direct
sum of second order equations
\begin{equation}
  \label{eq:svar2}
  \ddot x_i= \varphi(t)^{k-2} x_i, \qquad i=1,\ldots, n.
\end{equation}
It appears that the analysis of properties of these variational
equations gives very strong and computable obstructions to the
integrability of potential~$V(\vq)$.  In the case of proper Darboux
point these obstructions were found by J.~J.~Morales-Ruiz and
J.-P.~Ramis in~\cite{Morales:01::a}. Here we formulate them in the
following form. 
\begin{theorem}
  \label{thm:MoRa}
  Assume that the Hamiltonian system defined by Hamiltonian
  \eqref{eq:ham} with a homogeneous potential $V\in\C(\vq)$ of degree
  $k\in\Z\setminus\{-2,0,2\}$ satisfies the following conditions:
  \begin{enumerate}
  \item the potential has a proper Darboux point $\vd$ satisfying
    \[
    V'(\vd)=\gamma\vd,\quad\text{with}\quad
    \gamma\in\C^{\star}:=\C\setminus\{0\},
    \]
  \item matrix $\gamma^{-1}V''(\vd)$ is diagonalisable with
    eigenvalues $\lambda_1, \ldots, \lambda_n$;
  \item the system is integrable in the Liouville sense with first
    integrals which are meromorphic in a connected neighbourhood $U$
    of phase curve $\vGamma_{\varepsilon}$ with $\varepsilon\neq0$,
    and independent on $U\setminus \vGamma_{\varepsilon}$.
  \end{enumerate}
  Then each pair $(k,\lambda_i)$ for $i=1,\ldots,n$ belongs to an item
  of the following list
  \begin{equation}
    \label{tabMo}
    \begin{array}{clcl}
      1.& \left( k, p + \dfrac{k}{2}p(p-1)\right),&2.& \left(k,\dfrac 1
        {2}\left[\dfrac {k-1} {k}+p(p+1)k\right]\right), \\[1em]
      3.& \left(3,-\dfrac 1 {24}+\dfrac 1 {6}\left( 1 +3p\right)^2\right), &
      4.& \left(3,-\dfrac 1 {24}+\dfrac 3 {32}\left(  1  +4p\right)^2\right),
      \\[1em]
      5.& \left(3,-\dfrac 1 {24}+\dfrac 3 {50}\left(  1  +5p\right)^2\right),&
      6.& \left(3,-\dfrac 1 {24}+\dfrac{3}{50}\left(2 +5p\right)^2\right),\\[1em]
      7.&  \left(4,-\dfrac 1 8 +\dfrac{2}{9} \left( 1+ 3p\right)^2\right),&
      8.& \left(5,-\dfrac 9 {40}+\dfrac 5 {18}\left(1+ 3p\right)^2\right),\\[1em]
      9.& \left(5,-\dfrac 9 {40}+\dfrac 1 {10}\left(2+5p\right)^2\right),&
      10.&\left(-3,\dfrac {25} {24}-\dfrac 1 {6}\left(1+ 3p\right)^2\right) ,\\[1em]
      11.&\left(-3,\dfrac{25}{24}-\dfrac 3 {32}\left(1+4p\right)^2\right),&
      12.&\left(-3,\dfrac{25}{24}-\dfrac 3 {50}\left(1+5p\right)^2 \right),\\[1em]
      13.&\left(-3,\dfrac{25}{24}-\dfrac{3}{50}\left(2+5p\right)^2\right),&
      14.&\left(-4,\dfrac 9 8-\dfrac{2}{9}\left( 1+ 3p\right)^2\right),\\[1em]
      15.&\left(-5,\dfrac {49} {40}-\dfrac{5}{18}\left(1+3p\right)^2\right),&
      16.&\left(-5,\dfrac {49}{40}-\dfrac{1}{10}(2 +5p)^2\right)\\
    \end{array}
  \end{equation}
  where $p$ is an integer.
\end{theorem}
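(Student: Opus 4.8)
The key external input I would use is the general Morales--Ramis theorem: under hypothesis (3) the identity component $G^{\circ}$ of the differential Galois group of the variational equation \eqref{eq:var2} along $\vGamma_{\varepsilon}$ must be abelian. First I would invoke hypothesis (2) to split \eqref{eq:var2} into the direct sum of scalar equations \eqref{eq:svar2}, one for each eigenvalue $\lambda_i$ of $\gamma^{-1}V''(\vd)$. Euler's relation for the homogeneous gradient gives $V''(\vd)\vd=(k-1)\gamma\vd$, so the value $\lambda=k-1$ always occurs and corresponds to the tangential direction, which may be discarded. Since the Galois group of the whole system projects onto that of each scalar factor, $G^{\circ}$ abelian forces the same for every factor, and it suffices to analyse one equation with $\lambda:=\lambda_i$.

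Next I would change the independent variable from $t$ to $z$ proportional to $\varphi(t)^{k}$, using the energy relation $\tfrac12\dot\varphi^{2}+\tfrac{\gamma}{k}\varphi^{k}=\varepsilon$; the assumptions $\varepsilon\neq0$ and $k\notin\{-2,0,2\}$ ensure that the branch points $\varphi=0$, $\dot\varphi=0$ and $\varphi=\infty$ are distinct and that the substitution is nondegenerate. A direct computation then turns the scalar equation into a Riemann $P$-equation with regular singular points at $z=0,1,\infty$ and exponent differences
\[
\Delta_{0}=\frac1k,\qquad \Delta_{1}=\frac12,\qquad
\Delta_{\infty}=\frac{1}{2k}\sqrt{(k-2)^{2}+8k\lambda},
\]
so that only $\Delta_{\infty}$ carries the dynamical parameter $\lambda$.

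I would then feed this Riemann scheme into the classification of hypergeometric equations whose identity component is abelian inside $\SLtwoC$. Exactly three mechanisms occur, each producing a block of the table. In the reducible case one of the combinations $\pm\Delta_{0}\pm\Delta_{1}\pm\Delta_{\infty}$ is an odd integer; this forces $(k-2)^{2}+8k\lambda=(2kp+2-k)^{2}$ and yields family~1 for every admissible $k$. In the imprimitive (infinite dihedral) case $\Delta_{\infty}\in\tfrac12+\Z$, so two of the three exponent differences are half-integers, giving family~2, again for all $k$. In the finite case $(\Delta_{0},\Delta_{1},\Delta_{\infty})$ must lie in the Schwarz table; this is compatible with the fixed values $\Delta_{1}=\tfrac12$ and $\Delta_{0}=1/k$ only for $k=\pm3,\pm4,\pm5$, and the tetrahedral, octahedral and icosahedral rows then produce items 3--16 distributed over those degrees exactly as tabulated.

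The bulk of the work, and the main obstacle, lies in this final classification rather than in the Galois-theoretic input. Applying Kimura's criterion is legitimate because an abelian group is in particular solvable, so the abelian requirement of Morales--Ramis implies Kimura's conditions (the small gap between the two, corresponding to the triangular non-abelian Borel case, sits inside the already reducible family~1 and so costs nothing here). The reducible and dihedral alternatives are then disposed of quickly and give the universal families~1 and~2, whereas the finite alternative must be checked row by row against the Schwarz table for each of the six degrees $k=\pm3,\pm4,\pm5$, solving $(k-2)^{2}+8k\lambda=(\text{rational})^{2}$ for $\lambda$ as a function of the integer $p$ in every case; this is where essentially all of the computation producing items 3--16 resides.
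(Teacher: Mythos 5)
Your sketch is correct and follows exactly the route the paper itself indicates: the paper does not reprove this theorem but cites Morales-Ruiz and Ramis, and its Remark~\ref{rem:yosan} outlines precisely your steps --- abelianity of the identity component of the Galois group, splitting via diagonalisability, the Yoshida change of variable $z=\varphi(t)^k/(k\varepsilon)$ reducing each factor to the hypergeometric equation with parameters \eqref{eq:abc}, and then Kimura's classification. Your exponent differences $\Delta_0=1/k$, $\Delta_1=1/2$, $\Delta_\infty=\frac{1}{2k}\sqrt{(k-2)^2+8k\lambda}$ check out against \eqref{eq:abc}, and your remark that the solvable-versus-abelian gap is harmless for a necessary condition because it is absorbed into family~1 is a legitimate way to justify invoking Kimura.
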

The proof of the above theorem is based on an analysis of the
differential Galois group of variational equations~\eqref{eq:svar2}. 
It appears that the existence of $n$ independent first
integrals which are in involution puts strong restrictions on this
group. Namely, its identity component must be Abelian, see
\cite{Morales:99::c}. This condition is translated into arithmetic
restrictions on eigenvalues of the Hessian $V''(\vd)$.
\begin{remark}
  \label{rem:yosan}
  The crucial  role in   the proof of the above theorem plays the
  following fact.  
Assuming that $k\neq 0$ and $\varepsilon\neq 0$, we make
transformation
\begin{equation}
  \label{eq:yo}
  t\longmapsto z:=\frac{1}{k\varepsilon} \varphi(t)^k.
\end{equation}
Then, $i$-th equation ~\eqref{eq:svar2} reads
\begin{equation}
  \label{eq:hyps}
  z(1-z)x'' + \left(\frac{k-1}{k} - \frac{3k-2}{2k}z\right)x' +
  \frac{\lambda_{i}}{2k} x=0 ,
\end{equation}
where prime denotes the differentiation with respect to $z$. It is
exactly the
Gauss hypergeometric equation
\begin{equation}
  z(1-z)x'' +[c-(a+b+1)z]x' -ab x=0,
  \label{eq:hyp}
\end{equation}
with parameters
\begin{equation}
  a+b=\frac{k-2}{2k},\qquad ab=-\frac{\lambda_{i}}{2k},\qquad c=1-\frac{1}{k}.
  \label{eq:abc}.
\end{equation}
For this equation the differential Galois group is perfectly known,
see, e.g., \cite{Kimura:69::}. 
We called transformation~\eqref{eq:yo} the Yoshida transformation as
it was found by  
  Haruo Yoshida  in~\cite{Yoshida:87::a}.
\end{remark}
For the excluded values of $k=\pm 2$ the variational equations do not
give any obstruction to the integrability. 

Case $k=0$ is special
because for this value of $k$ the phase curves corresponding to a
proper Darboux point have form different from that described by
\eqref{eq:phasecurve}. The necessary integrability conditions for this
case were found in \cite{mp:10::a}. They are following.
\begin{theorem}
\label{thm:gwe}
Assume  that  potential $V\in\C(\vq)$ 
homogeneous of degree $k=0$,  and that the following
conditions are satisfied:
\begin{enumerate}
 \item there exists a non-zero $\vd\in\C^n$ such that
   $V'(\vd)=\gamma\vd$ for a certain $\gamma\in\C^{\star}$; and
\item the potential is integrable in the Liouville sense with rational  first integrals.
\end{enumerate}
Then:
\begin{enumerate}
\item all eigenvalues of $\gamma^{-1}V''(\vd)$ are integers; and
 \item the matrix $V''(\vd)$ is diagonalizable.
\end{enumerate}
\end{theorem}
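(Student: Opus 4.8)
The plan is to run the Morales--Ramis program in the degree‑zero case, the essential new feature being that the phase curve is no longer algebraic. First I would record the particular solution: for $k=0$ equation~\eqref{eq:pen} reads $\ddot\varphi=-\gamma\varphi^{-1}$, so multiplying by $\dot\varphi$ and integrating gives the first integral $\tfrac12\dot\varphi^2+\gamma\log\varphi=\varepsilon$. The logarithm shows that $\vGamma_\varepsilon$ is transcendental, which is precisely why the Yoshida substitution~\eqref{eq:yo} (valid only for $k\ne0$) is unavailable and a separate argument is needed. The variational equation is~\eqref{eq:var2} with $k=0$, namely $\ddot\vx=\varphi^{-2}V''(\vd)\vx$; writing $V''(\vd)=\gamma M$ with $M=\gamma^{-1}V''(\vd)$ and putting $M$ in Jordan form, the diagonal entries produce scalar equations $\ddot x=\gamma\lambda\varphi^{-2}x$, where $\lambda$ is an eigenvalue of $M$.

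The key step is the change of independent variable $w:=\dot\varphi$. Since $\dot w=\ddot\varphi=-\gamma/\varphi$, one has $\tfrac{\rmd}{\rmd t}=-\tfrac{\gamma}{\varphi}\tfrac{\rmd}{\rmd w}$, and a short computation shows that the factor $\varphi^{-2}$ cancels completely, leaving the Hermite--Weber equation
\[
\gamma x''+w x'-\lambda x=0,
\]
with rational coefficients in $w$ (here $'=\rmd/\rmd w$). The correct base field for the differential Galois analysis is the field $\mathcal K$ of meromorphic functions on $\vGamma_\varepsilon$; parametrising the curve by $w$ gives $\varphi=\rme^{(\varepsilon-w^2/2)/\gamma}$, so $\mathcal K=\C(w)(\varphi)$ with $\varphi'/\varphi=-w/\gamma$. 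Thus $\mathcal K$ already contains the exponential $\varphi\propto\rme^{-w^2/2\gamma}$, and this is the crucial point distinguishing $k=0$ from the generic case.

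Next I would identify the Galois group of the Hermite--Weber equation over $\mathcal K$. Removing the first order term by $x=\varphi^{1/2}y$ (legitimate over a quadratic extension of $\mathcal K$, since $\varphi^{1/2}\propto\rme^{-w^2/4\gamma}$) turns it into a reduced Weber equation $y''=R(w)y$ with $R$ rational, whose only singular point is the irregular one at $w=\infty$. By the classical theory (Kovacic's algorithm and the known parabolic‑cylinder analysis) this equation has a Liouvillian solution $\rme^{\pm w^2/4\gamma}p(w)$, $p$ a polynomial, exactly when $\lambda\in\Z$, and otherwise its group is $\SLtwoC$. Over $\C(w)$ the exponential factor $\rme^{\pm w^2/4\gamma}$ contributes the exponential torus, so there $G^0$ is non‑abelian even for $\lambda\in\Z$; but over $\mathcal K$ this torus is absorbed into the base (that is what $\varphi\in\mathcal K$ buys us), so for $\lambda\in\Z$ the identity component reduces to the unipotent group $\{(\begin{smallmatrix}1&*\\0&1\end{smallmatrix})\}$, which is abelian, while for $\lambda\notin\Z$ it remains $\SLtwoC$. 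Since each Jordan block of $M$ yields such a scalar equation as a subquotient of the variational system, and integrability with rational first integrals forces, by the Morales--Ramis theorem~\cite{Morales:99::c}, the identity component of the Galois group of~\eqref{eq:var2} to be abelian, every eigenvalue $\lambda_i$ must be an integer; this proves item~(1).

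For item~(2) I would argue that a nontrivial Jordan block is incompatible with abelian $G^0$. If $M$ is not diagonalisable it contains a block $\left(\begin{smallmatrix}\lambda&1\\0&\lambda\end{smallmatrix}\right)$ with $\lambda\in\Z$, giving the coupled system $\gamma x_2''+wx_2'-\lambda x_2=0$, $\gamma x_1''+wx_1'-\lambda x_1=x_2$. The forcing term $x_2$ is itself a solution of the homogeneous equation, so the particular solution for $x_1$ is an iterated integral of the functions already generating the group $\mathbb{G}_{\mathrm a}$ attached to a single block; the resulting unipotent Galois group is nilpotent of class two, hence non‑abelian, contradicting the integrability obstruction. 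Therefore $V''(\vd)$ must be diagonalisable, and together with item~(1) this completes the proof. The main obstacle is the Galois‑group computation of the third paragraph: one must control the irregular singular point at $w=\infty$, in particular its Stokes data, carefully enough to be certain that over $\mathcal K$ the identity component is exactly abelian when $\lambda\in\Z$ and the full $\SLtwoC$ otherwise; moreover the non‑compactness of $\vGamma_\varepsilon$ means that the passage from rational first integrals to the statement ``$G^0$ is abelian'' must be justified here rather than quoted verbatim.
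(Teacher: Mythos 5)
The paper itself does not prove Theorem~\ref{thm:gwe}: it quotes the result from \cite{mp:10::a}, so there is no in-text proof to compare against. That said, your route --- parametrise the transcendental phase curve by $w=\dot\varphi$, note that the base field $\mathcal{K}=\C(w)(\varphi)$ with $\varphi'/\varphi=-w/\gamma$ already contains the exponential, and reduce each eigenvalue to a Hermite--Weber equation --- is exactly the mechanism of that reference, and your treatment of item~(1) is essentially sound and completable: for $\lambda\in\Z$ one solution (a polynomial $P(w)$ if $\lambda\geq 0$, or $\varphi\,P(w)$ if $\lambda\leq -1$) lies in $\mathcal{K}$ itself, so the group is unipotent and abelian; for $\lambda\notin\Z$ the group over $\C(w)$ is \SLtwoC{} and cannot shrink upon extending the base to $\mathcal{K}$, because the intermediate field $PV\cap\mathcal{K}$ would correspond to a normal subgroup of \SLtwoC{} with quotient $\mathbb{G}_m$ or a nontrivial finite group, and no such subgroup exists. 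Your closing caveat about justifying the Morales--Ramis obstruction over this non-algebraic curve is well placed; the hypothesis of \emph{rational} first integrals is precisely what makes $\C(w)(\varphi)$ (the restrictions of rational functions to $\vGamma_\varepsilon$) the correct base field.

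The genuine gap is item~(2). You assert that a $2\times2$ Jordan block yields a unipotent group that is ``nilpotent of class two, hence non-abelian,'' but a unipotent group built from iterated integrals can perfectly well be abelian if those integrals collapse into the base field; non-abelianness is exactly what must be proved, and it is the hard part of the Jordan obstruction (compare the effort spent on this in \cite{Maciejewski:09::}). Concretely, for the block $\ddot x_2=\gamma\varphi^{-2}\lambda x_2$, $\ddot x_1=\gamma\varphi^{-2}(\lambda x_1+x_2)$ one obtains, after the change of variable (say for $\lambda=0$), solutions involving $\int\varphi\,\rmd w$, $\int\varphi^{-1}\,\rmd w$ and the iterated integral $\int\varphi\bigl(\int\varphi^{-1}\bigr)\rmd w$; the resulting Heisenberg-type composition law is non-trivial only after one shows that $\int\varphi$ and $\int\varphi^{-1}$ do \emph{not} lie in $\mathcal{K}$ --- for instance by writing a putative primitive as $\sum_n a_n(w)\varphi^n$ and deriving the unsolvable equation $a_1'-\tfrac{w}{\gamma}a_1=1$ in $\C(w)$. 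Until this (or an equivalent) verification is carried out for every integer $\lambda$, the diagonalisability claim is not established.
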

The fact that $\vd$ is a Darboux point and $V$ is a homogeneous
function of degree $k$ implies that $\vd$ is an eigenvector of matrix
$\gamma^{-1}V''(\vd)$ with the corresponding eigenvalue $\lambda_n=k-1$.  This
eigenvalue belongs to the first item of table~\eqref{tabMo} and so it
does not give any restriction to the integrability. Later we will
called it the trivial eigenvalue of $V''(\vd)$.

For a fixed $k\in\Z\setminus\{-2,0,2\}$, the subset of rational
numbers $\lambda\in\Q$ such that the pair $(k,\lambda)$ belongs to an
item of the Morales-Ramis table is denoted by $\scM_k$. By
Theorem~\ref{thm:gwe}, we have also $\scM_0:=\Z$.  For the later
use we define also the set
\begin{equation}
  \scI_k=\{\Lambda\in\Q\ |\ \Lambda+1\in\scM_k\}.
\end{equation} 

Note that for an arbitrary $k\in\Z\setminus\{-2,0,2\}$ sets $\scM_k$
and $\scI_k$ are not finite.
\begin{remark}
  \label{rem:nd}
  As it was explained in \cite{Maciejewski:09::} the assumption that
  $V''(\vd)$ is diagonalisable is irrelevant.  In a case when
  $V''(\vd)$ is not diagonalisable the necessary conditions for the
  integrability are exactly the same: if the potential is integrable, then
  each eigenvalue $\lambda$ of $V''(\vd)$  belongs  to an item of
  table~\eqref{tabMo}. Furthermore, the presence of a Jordan block of dimensions greater than 1
  gives additional integrability obstructions.  Namely, if the Jordan
  form of $V''(\vd)$ has a block of dimension greater that two 
then the system is not integrable. Moreover, if it  has a two dimensional block  and the corresponding $\lambda$
  belongs to the first item of table~\eqref{tabMo}, then the system is
  also not integrable. This facts  were  proved in~\cite{Maciejewski:09::}.
\end{remark}
Necessary conditions for the integrability can be deduced also from an
analysis of the differential Galois groups of variational equations
along a particular solution given by an improper Darboux point. For
polynomial potentials it was done in \cite{mp:09::a}. Here we
generalise this result to the class of rational potentials.
\begin{theorem}
  \label{th:nonint}
  Assume that homogeneous potential $V\in \C(\vq)$ of degree
  $k\in\Z\setminus \{-2,0,2\}$ possesses an improper Darboux point
  $\vd$. If $V$ is integrable in Liouville sense with rational first
  integrals, then all eigenvalues of $V''(\vd)$ vanish.
\end{theorem}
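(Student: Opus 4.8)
The plan is to specialize the Morales--Ramis machinery already used for Theorem~\ref{thm:MoRa} to the degenerate particular solution attached to an improper Darboux point, and then to identify the resulting variational equation with a Bessel equation whose differential Galois group is forced to be too large unless the corresponding eigenvalue of $V''(\vd)$ vanishes. This follows the strategy of the polynomial case \cite{mp:09::a}, the only new points being those caused by the rational nature of $V$.

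First I would set up the particular solution. Since $\vd$ is improper, $\grad V(\vd)=\vzero$, i.e.\ $\gamma=0$ in \eqref{propDar}, so \eqref{eq:pen} degenerates to $\ddot\varphi=0$ and $\varphi$ is affine in $t$; after an affine reparametrisation of the complex time I may take $\varphi(t)=t$, so that the phase curve is the line $(\vq,\vp)=(t\vd,\vd)$, compactified to $\mathbb{P}^1$ with coordinate $t$. Differentiating Euler's identity $\vq\cdot\grad V=kV$ gives $V''(\vq)\vq=(k-1)\grad V(\vq)$, whence $V''(\vd)\vd=\vzero$: the tangential direction already carries the eigenvalue $0$. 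Writing $\Lambda_1,\dots,\Lambda_n$ for the eigenvalues of $V''(\vd)$, the variational equation \eqref{eq:var2} becomes $\ddot{\vx}=t^{k-2}V''(\vd)\vx$, which over $\C(t)$ splits---along the semisimple part, or through the successive subquotients of a Jordan filtration when $V''(\vd)$ is not diagonalisable---into the scalar equations
\begin{equation}
  \ddot x_i = \Lambda_i\, t^{k-2}\, x_i, \qquad i=1,\dots,n.
  \label{eq:improperVE}
\end{equation}
Because the identity component of a differential Galois group passes to closed subgroups and to quotients, it suffices to control each of \eqref{eq:improperVE} separately.

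Next I would analyse \eqref{eq:improperVE} for a fixed nonzero eigenvalue $\Lambda_i$. The substitution $x_i=\sqrt{t}\,w(\xi)$ with $\xi=\tfrac{2}{k}\sqrt{-\Lambda_i}\,t^{k/2}$---an algebraic change of variable, hence one leaving the identity component of the Galois group unchanged---turns \eqref{eq:improperVE} into the Bessel equation of index $\nu=1/k$. For the Bessel equation it is classical (the counterpart for us of the hypergeometric discussion in Remark~\ref{rem:yosan}, cf.\ \cite{Kimura:69::}) that the identity component of its differential Galois group is abelian precisely when $\nu\in\tfrac12+\Z$, and is the whole $\SLtwoC$ otherwise. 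Solving $1/k\in\tfrac12+\Z$ over the integers forces $k\in\{2,-2\}$, both excluded by hypothesis. Hence for every admissible $k$ and every $\Lambda_i\neq0$ the equation \eqref{eq:improperVE} has differential Galois group with non-abelian identity component, whereas $\Lambda_i=0$ gives $\ddot x_i=0$ with trivial group.

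Finally I would invoke Morales--Ramis: integrability in the Liouville sense with rational first integrals forces the identity component of the Galois group of the variational equation, and therefore of each subquotient \eqref{eq:improperVE}, to be abelian \cite{Morales:99::c}. Combined with the previous step this excludes $\Lambda_i\neq0$, so every eigenvalue of $V''(\vd)$ vanishes, as claimed. The main obstacle I anticipate is bookkeeping rather than conceptual: one must check that for a rational $V=W/U$ the point $\vd$ is a regular point of $\grad V$ and of $V''$ along the chosen line, so that the indeterminacy locus of $W/U$ causes no trouble and \eqref{eq:improperVE} genuinely describes the variational system, and one must justify the passage to the scalar subquotients in the non-diagonalisable case. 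Both are settled exactly as in \cite{mp:09::a}; the Bessel computation is the heart of the matter.
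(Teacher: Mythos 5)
Your proposal is correct, but the core of your argument runs along a genuinely different track than the paper's. Both start identically: take $\varphi(t)=t$, reduce the variational system along the line $(t\vd,\vd)$ to the scalar equations $\ddot x_i=\Lambda_i t^{k-2}x_i$ via the Jordan form of $V''(\vd)$, and then must show that for $\Lambda_i\neq0$ and $k\in\Z\setminus\{-2,0,2\}$ the differential Galois group is $\SLtwoC$. The paper normalises the eigenvalue into the time scale to get $\ddot\eta=t^{\alpha}\eta$ with $\alpha=k-2$, cites its earlier polynomial-case result for $\alpha\in\N$, uses the inversion $\tau=1/t$, $\xi=\tau\eta$ to map $\alpha$ to $\alpha-4$ and thereby reduce $\alpha<-4$ to the positive case, and finally runs the Kovacic algorithm by hand on the two leftover cases $\alpha\in\{-1,-3\}$. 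You instead make the single substitution $x=\sqrt{t}\,w(\xi)$, $\xi=\tfrac{2}{k}\sqrt{-\Lambda_i}\,t^{k/2}$ (which I checked does land on the Bessel equation of index $\nu=1/k$) and invoke the classical fact that Bessel's equation has abelian identity component exactly for $\nu\in\tfrac12+\Z$, i.e.\ exactly for $k=\pm2$. Your route is more uniform --- it treats all signs of $k$ at once, needs no citation of the polynomial case, and makes the excluded values $k=\pm2$ appear for a transparent reason --- at the price of an algebraic (non-rational) change of variables, for which you correctly note that finite algebraic extensions of the base field and multiplication of the unknown by an element of that extension preserve the identity component; this is the same argument that justifies the Yoshida transformation in Remark~\ref{rem:yosan}, so nothing new is needed. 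The remaining ingredients (Euler's identity giving the zero tangential eigenvalue, passage to Jordan subquotients, regularity of $V$ and $V''$ at $\vd$ being built into the paper's definition of a Darboux point, and the Morales--Ramis conclusion) match the paper. One cosmetic remark: you use $\Lambda_i$ for the raw eigenvalues of $V''(\vd)$, whereas elsewhere in the paper $\Lambda$ is reserved for the shifted eigenvalues $\lambda-1$; harmless here, but worth renaming to avoid confusion.
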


In order to find additional integrability obstructions we restrict our
considerations to potentials with two degrees of freedom. In this case
for a proper Darboux point $\vd$ there is only one non-trivial
eigenvalue of Hessian $ V''(\vd)$. Its value is given by
\begin{equation*}
  \lambda=\gamma^{-1}\tr V''(\vd)-(k-1). 
\end{equation*} 
We say that $\vd$ is a simple Darboux point iff $\lambda\neq 1$. Later
we will give more algebraic definition of simplicity.

For an improper Darboux point $\vd$ we have also only one non-trivial
eigenvalue of the form
\begin{equation*}
  \lambda=\tr V''(\vd).
\end{equation*}
The trivial eigenvalue of $ V''(\vd)$ for an improper Darboux point
$\vd$ is $\lambda_2=0$.

In further part of this paper we show that for a homogeneous rational
potential of degree $k=r-s$, if number of Darboux points is finite,
then there exists $l\leq r+s$ proper Darboux points. For each proper
Darboux point $\vd_i$ we calculate the ``shifted'' non-trivial
eigenvalue
\begin{equation}
  \Lambda_i=\lambda_i-1=\gamma^{-1}\tr V''(\vd_i)-k,
  \label{eq:bigL}
\end{equation}
and we set $\bLambda=(\Lambda_i,\ldots,\Lambda_{l})\in\C^{l}$.  It
appears that $\bLambda$  cannot be an arbitrary element of $\C^l$. Namely, there
is a
certain universal relation between them.
\begin{theorem}
  \label{thm:rel}
  Assume that a rational homogeneous potential $V(q_1,q_2)$ of degree
  $k\in\Z$ satisfies three conditions:
  \begin{description}
  \item[C1] it has $0<l\leq r+s$ proper Darboux points and all of them
    are simple;
  \item [C2] $U$ is not factorisable neither by $(q_2+\rmi q_1)$, nor
    by $(q_2-\rmi q_1)$;
  \item [C3] if $W$ is factorisable by $(q_2+ \rmi q_1)$, or by $(q_2-
    \rmi q_1)$ then multiplicity of these factor is one.
  \end{description}
  Then the shifted non-trivial eigenvalues $\Lambda_i$ given by
  \eqref{eq:bigL} satisfy the relation
  \begin{equation}
    \sum_{i=1}^{l}\dfrac{1}{\Lambda_i}=-1.
    \label{eq:relka1}
  \end{equation} 
\end{theorem}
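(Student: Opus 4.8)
The plan is to pass to the affine chart $x=q_2/q_1$ on $\CPOne$ and to recast the statement as the residue theorem for a single meromorphic $1$-form. Writing $v(x):=V(1,x)=\tilde W(x)/\tilde U(x)$, homogeneity gives $V(q_1,q_2)=q_1^{k}v(x)$, and a short differentiation shows that for $\vd=(1,x)$ the Darboux condition $\vd\wedge\grad V(\vd)=0$ reduces to the vanishing of
\[
g(x):=(1+x^2)\,v'(x)-k\,x\,v(x),
\]
while the associated constant $\gamma$ equals $\eta(x):=k\,v(x)-x\,v'(x)$. Thus the finite proper Darboux points are exactly the zeros $x_1,\dots,x_l$ of $g$ at which $\eta\neq0$. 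By the $\mathrm{PO}(2,\C)$-equivalence (which leaves $\bLambda$ and the hypotheses invariant) I would first normalise the potential so that the direction $q_1=0$ is neither a Darboux point nor a zero of $W$ or $U$; this is possible because $\mathrm{PO}(2,\C)$ fixes the isotropic pair $\{q_2=\pm\rmi q_1\}$ yet acts transitively on the remaining directions of $\CPOne$, so a generic rotation attains general position.

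Next I would compute the shifted eigenvalue locally. Differentiating twice and using homogeneity one gets the polynomial identity $\tr V''(1,x)=k\,\eta(x)+g'(x)$. Dividing by $\gamma=\eta(x_i)$ and subtracting $k$ yields the crucial formula
\[
\Lambda_i=\gamma^{-1}\tr V''(\vd_i)-k=\frac{g'(x_i)}{\eta(x_i)},\qquad\text{hence}\qquad \frac{1}{\Lambda_i}=\frac{\eta(x_i)}{g'(x_i)}=\res_{x=x_i}\frac{\eta(x)}{g(x)}\,\rmd x .
\]
In particular the simplicity condition C1 ($\lambda\neq1$, i.e. $\Lambda_i\neq0$) is, through this formula, exactly the statement that $x_i$ is a \emph{simple} zero of $g$, which is what makes the last equality (a simple-pole residue) legitimate and keeps $1/\Lambda_i$ finite. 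Consequently $\sum_i 1/\Lambda_i$ is a partial sum of residues of the single $1$-form $\omega:=(\eta/g)\,\rmd x$ on $\CPOne$, and the theorem becomes the claim that all the \emph{other} residues of $\omega$ add up to $+1$.

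To pin those down I would exploit the identity $(1+x^2)\eta(x)+x\,g(x)=k\,v(x)$, which splits
\[
\omega=\frac{k\,v(x)}{(1+x^2)\,g(x)}\,\rmd x-\frac{x}{1+x^2}\,\rmd x=\frac{k\,\tilde W\tilde U}{(1+x^2)\,Q}\,\rmd x-\frac{x}{1+x^2}\,\rmd x ,
\]
where $Q:=\tilde U^{2}g$ is a polynomial of degree at most $r+s$ (its top coefficient cancels), with $\deg Q=r+s$ precisely when $q_1=0$ is not a Darboux point. In the normalised frame $\deg Q=r+s$, so the first summand is $O(x^{-2})$ and carries no residue at infinity, while $-x/(1+x^2)=-1/x+O(x^{-3})$ gives residue $-1$ there; hence $\res_\infty\omega=1$. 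It remains to rule out spurious finite poles: the poles of $v$ (zeros of $\tilde U$) are cancelled in $\eta/g$ and contribute nothing, and the circular points $x=\pm\rmi$, where $1+x^2=0$, are governed exactly by C2 and C3. By C2 one has $\tilde U(\pm\rmi)\neq0$, so $\pm\rmi$ are never indeterminacy points; if moreover $W$ is not divisible by $q_2\pm\rmi q_1$ then $g(\pm\rmi)\neq0$ and the two half-residues $\tfrac12$ of the split pieces cancel, so $\pm\rmi$ contribute nothing to $\omega$; if $W$ is divisible by such a factor, C3 forces multiplicity one, which turns $\pm\rmi$ into a simple proper Darboux point already present among the $x_i$. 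Either way the finite residues of $\omega$ are exactly $\{1/\Lambda_i\}$, and $\sum\res\,\omega=0$ gives $\sum_{i=1}^{l}1/\Lambda_i=-\res_\infty\omega=-1$.

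I expect the bookkeeping at $x=\pm\rmi$ and at infinity to be the genuinely delicate part: once $\res_\infty\omega=1$ is established and C2, C3 are shown to forbid an uncounted residue at the circular points, the conclusion is forced. The hypotheses enter exactly here — verifying that a factor $q_2\pm\rmi q_1$ of $W$ of multiplicity one produces a simple proper Darboux point (and that higher multiplicity would destroy either simplicity or the degree count underlying $\res_\infty\omega=1$), and justifying that the normalisation $q_1=0\notin\{\text{Darboux points}\}\cup\{W=0\}\cup\{U=0\}$ can be met without altering the $\Lambda_i$ or the validity of C1–C3. Improper Darboux points, being common zeros of $\eta$ and $g$, would appear as extra zeros of $Q$; I would check that under C1 they are absent from the affine chart (or contribute no residue), so that the poles of $\omega$ reduce precisely to the $l$ proper ones.
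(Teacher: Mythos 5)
Your proposal is correct and follows essentially the same route as the paper's proof: you identify $1/\Lambda_i$ with the residue at $z_i$ of the meromorphic form $\omega=(h/g)\,\rmd z$ on $\CPOne$ (your $\eta/g$ equals the paper's $h/g$ after clearing $u^2$), normalise by a rotation so that all Darboux points lie in the affine chart and $\res(\omega,\infty)=1$, and apply the global residue theorem, with C2--C3 excluding uncounted contributions at $z=\pm\rmi$ and at common roots of $g$ and $h$. The only difference is bookkeeping: you evaluate the residues at infinity and at the circular points via the splitting $\omega=\frac{kv}{(1+z^2)f}\,\rmd z-\frac{z}{1+z^2}\,\rmd z$, whereas the paper reads the residue at infinity off the leading coefficients $g_0=-h_0$ and disposes of the spurious poles through its multiplicity lemmas.
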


In order to formulate a generalisation of the above theorem we have to
introduce some definitions. As we will explain later a non-constant
rational homogeneous function $F(q_1,q_2)$ can be represented
uniquely, up to multiplicative constant, in the following form
\[
F(q_1,q_2)=\prod_{i=1}^rl_i(q_1,q_2)^{e_i},
\]
where $r\in\N$, $e_i\in\Z^{\star}$ and $l_i(q_1,q_2)$ are homogeneous
polynomials of degree $1$. We say that $l_i(q_1,q_2)$ with $1\leq
i\leq r$ is a factor of $F(q_1,q_2)$ iff $e_i>0$.  In this case
integer $e_i$ is called the multiplicity of factor $l_i(q_1,q_2)$. Let
$r_{\pm}$ and $s_{\pm}$ be the respective multiplicities of linear
factors $(q_2\pm\rmi q_1)$ of $W$ and $U$, respectively.  We also
define  the following step function
\begin{equation}
  \theta_{x,y}:=\begin{cases}
    0 &\text{for}\quad x<y,   \\
    1 &\text{for}\quad x\geq y, 
  \end{cases}
  \label{eq:schodkowa}
\end{equation}
where $x,y\in\R$.
\begin{theorem}
  \label{thm:rell2}
  Assume that a rational homogeneous potential $V(q_1,q_2)$ of degree
  $k=r-s\in\Z$ satisfies three conditions:
  \begin{description}
  \item[C1] it has $0<l\leq r+s$ proper Darboux points and all of them
    are simple;
  \item [C2] neither $r_{+}$, nor $r_{-}$ is equal to $k/2$;
  \item [C3] neither $s_{+}$, nor $s_{-}$ is equal to $-k/2$.
  \end{description}
  Then the shifted non-trivial eigenvalues given in \eqref{eq:bigL}
  satisfy the relation
  \begin{equation}
    \label{eq:relka2}
    \begin{split}
      \sum_{i=1}^l\frac{1}{\Lambda_i} = -1 &- \theta_{r_+,2}
      \frac{r_+}{k-2r_+} - \theta_{r_-,2} \frac{r_-}{k-2r_-} +
      \theta_{s_+,1} \frac{s_+}{k+2s_+} + \theta_{s_-,1}
      \frac{s_-}{k+2s_-}.
    \end{split}
  \end{equation}
\end{theorem}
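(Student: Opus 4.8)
My plan is to pass to the affine coordinate $x=q_2/q_1$ on $\CPOne$ and reduce the statement to the residue theorem. Writing the potential as $V(q_1,q_2)=q_1^k v(x)$ with $v(x)=V(1,x)=w(x)/u(x)$, a direct computation gives $q_1\partial_{q_2}V-q_2\partial_{q_1}V=q_1^k\,g(x)$, where
\[
g(x):=(1+x^2)v'(x)-kx\,v(x),
\]
so that, for points with $q_1\neq0$, the Darboux condition \eqref{eq:dar2} is equivalent to $g(x)=0$. First I would prove the local eigenvalue formula: if $x_i$ is the affine coordinate of a proper Darboux point with $x_i\notin\{\rmi,-\rmi,\infty\}$, then $v(x_i)\neq0$, $1+x_i^2\neq0$, $u(x_i)\neq0$, and by the simplicity hypothesis \textbf{C1} the zero of $g$ at $x_i$ is simple. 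Expressing $\gamma$ and $\tr V''(\vd_i)$ through $v,v',v''$ at $x_i$ and using $g(x_i)=0$, one checks that the quantity $\Lambda_i$ of \eqref{eq:bigL} equals $(1+x_i^2)g'(x_i)/(k\,v(x_i))$, so that
\[
\frac{1}{\Lambda_i}=\frac{k\,v(x_i)}{(1+x_i^2)\,g'(x_i)}=\res_{x=x_i}\eta,\qquad \eta:=\frac{k\,v(x)}{(1+x^2)\,g(x)}\,\rmd x .
\]

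The core of the argument is the residue theorem $\sum_{p\in\CPOne}\res_{x=p}\eta=0$. I would first dispose of the ``ordinary'' poles: at a finite indeterminacy point of $V$ (a zero of $U$ different from $\pm\rmi$) the function $v$ has a pole while $g$ has a pole of one higher order, and at a finite improper Darboux point $v$ has a double zero while $g$ has a zero of one lower order; in both cases a short expansion shows that $\eta$ has a zero, not a pole, so these points contribute nothing. Combining this with the local formula, the residues at finite proper Darboux points away from $\{\pm\rmi,\infty\}$ account for all of those $1/\Lambda_i$, and one is left with
\[
\sum_{i=1}^{l}\frac{1}{\Lambda_i}=C_{\rmi}+C_{-\rmi}+C_{\infty},\qquad C_p:=\Big[\tfrac{1}{\Lambda}\text{ at }p,\text{ if }p\text{ is a proper Darboux point}\Big]-\res_{x=p}\eta,
\]
where the bracket is present only when $p$ itself is a proper Darboux point; the subtraction $-\res_{x=p}\eta$ is forced by the residue theorem.

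The main obstacle, and the source of the step functions $\theta$, is the analysis at the isotropic points $x=\pm\rmi$, i.e.\ the self-orthogonal directions $q_2\mp\rmi q_1=0$. The clean tool is that $D:=q_1\partial_{q_2}-q_2\partial_{q_1}$ satisfies $D(q_2\mp\rmi q_1)=\pm\rmi\,(q_2\mp\rmi q_1)$, so near $x=\rmi$ one has $v(x)\sim(x-\rmi)^{r_--s_-}$ while $1+x^2$ vanishes simply; since $W,U$ are coprime, at least one of $r_-,s_-$ is zero. A local expansion, valid precisely because \textbf{C2}--\textbf{C3} rule out the leading cancellation $2(r_--s_-)=k$, yields the simple pole
\[
\res_{x=\rmi}\eta=\frac{k}{2\,\bigl(k-2(r_--s_-)\bigr)} .
\]
The delicate bookkeeping is that $x=\rmi$ is a proper Darboux point exactly when $r_-=1$ (then $V'(\vd)\neq\vzero$ and $\Lambda=k-2$), it is improper when $r_-\geq2$, and it is merely a pole of $v$ when $s_-\geq1$; in the proper case $\res_{x=\rmi}\eta\neq1/\Lambda$, and the gap is exactly what $C_{\rmi}$ records. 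Evaluating $C_{\rmi}$ in the four regimes $r_-\in\{0\},\{1\},\{\geq2\}$ and $s_-\geq1$ separately, every case collapses to
\[
C_{\rmi}=-\tfrac12-\theta_{r_-,2}\,\frac{r_-}{k-2r_-}+\theta_{s_-,1}\,\frac{s_-}{k+2s_-},
\]
and symmetrically $C_{-\rmi}$ with $(r_+,s_+)$.

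Finally I would check $C_{\infty}=0$: since $[0:1]$ is not an isotropic direction, the identity $1/\Lambda=\res\eta$ holds there as well, and a degree count (using $g=\hat g/u^2$ with $\deg\hat g\le r+s$) shows $\eta$ decays like $x^{-2}$ unless $[0:1]$ is itself a proper Darboux point, in which case $\res_{x=\infty}\eta$ equals the corresponding $1/\Lambda$ and $C_\infty$ still vanishes. Adding $C_{\rmi}+C_{-\rmi}+C_{\infty}$, the two half-units combine into the leading $-1$ and the $\theta$-terms assemble into exactly \eqref{eq:relka2}. The only genuinely technical steps are the local expansion of $g$ and $\eta$ at $\pm\rmi$ and the regime-by-regime evaluation of $C_{\pm\rmi}$; everything else is either the elementary reduction to $g$ or the standard residue theorem.
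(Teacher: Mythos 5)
Your argument is correct and follows essentially the same route as the paper: both establish that $1/\Lambda_i$ is the residue of a meromorphic $1$-form on $\CPOne$ at the proper Darboux points and then invoke the global residue theorem, with the exceptional contributions at the isotropic directions $x=\pm\rmi$ producing the $\theta$-terms. The only difference is cosmetic — your form $\eta=\frac{kv}{(1+x^2)f}\,\rmd x$ and the paper's $\omega=\frac{h}{g}\,\rmd z$ differ by $\frac{x\,\rmd x}{1+x^2}$, which merely relocates the constant $-1$ from the residue at infinity (paper) into two contributions of $-\tfrac12$ at $x=\pm\rmi$ (yours) — and your regime-by-regime bookkeeping at $\pm\rmi$ (in particular the case $r_\pm=1$, where the isotropic point is itself a proper Darboux point) is if anything more explicit than the paper's.
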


The importance of relations \eqref{eq:relka1} and \eqref{eq:relka2}
 follows from the following two theorems.
\begin{theorem}
  \label{thm:finiteness}
  Let us consider \eqref{eq:relka1} or \eqref{eq:relka2} as an
  equation for
  $(\Lambda_1,\ldots,\Lambda_{l})\in\underbrace{\C^{\star}\times\cdots\times
    \C^{\star}}_{l}$.  Then, for $k\in\Z\setminus\{-2,2\}$, it has at most a finite number of
  solutions contained in $\underbrace{\scI_k\times\cdots\times
    \scI_k}_{l}$.
\end{theorem}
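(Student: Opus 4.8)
The plan is to reduce both \eqref{eq:relka1} and \eqref{eq:relka2} to a single normal form and then exploit the arithmetic structure of the set $\scI_k$. In each case the right-hand side is a rational number determined entirely by the fixed data $k,r_{\pm},s_{\pm}$; call it $C$. Thus it suffices to prove the self-contained statement: for a fixed $C\in\Q$ and a fixed $l\in\N$, the equation $\sum_{i=1}^{l}\Lambda_i^{-1}=C$ has only finitely many solutions with every $\Lambda_i\in\scI_k\setminus\{0\}$.

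The first step is to record the structure of $\scI_k$ read off from the Morales--Ramis table \eqref{tabMo}. For each admissible item the shifted eigenvalue $\Lambda=\lambda-1$ of \eqref{eq:bigL} is a quadratic polynomial in an integer parameter $p$, and $\scI_k$ is the union of the finitely many resulting one-parameter families. Two features are decisive. First, each family has $\abs{\Lambda}\to\infty$ as $\abs{p}\to\infty$, so on every bounded interval only finitely many values occur; hence $\scI_k$ is discrete, with no finite accumulation point. Second, and this is what makes the theorem true, the leading coefficient of every such quadratic has the sign of $k$ (this is manifest for items~1 and~2, and is checked directly for the sporadic entries attached to $k=\pm3,\pm4,\pm5$). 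Consequently $\scI_k$ is bounded below when $k>0$ and bounded above when $k<0$; in short, $\scI_k$ is one-sided.

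I would then treat $k>0$, the case $k<0$ being identical after the substitution $\Lambda\mapsto-\Lambda$. One-sidedness together with discreteness says that the negative admissible values $\scI_k\cap(-\infty,0)$ form a finite set, whereas the positive ones $\scI_k\cap(0,\infty)$ are discrete with a least element $\mu>0$. Given a solution, I split the indices into $M=\{i:\Lambda_i<0\}$ and $P=\{i:\Lambda_i>0\}$. There are finitely many such partitions; for each, the entries $(\Lambda_i)_{i\in M}$ range over a finite set and so admit finitely many choices, and fixing them turns the problem into $\sum_{i\in P}\Lambda_i^{-1}=C'$ with $C'$ a fixed number and every $\Lambda_i\geq\mu>0$. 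The remaining lemma is the classical positive Egyptian-fraction bound: if $C'\leq0$ no configuration with $P\neq\varnothing$ can occur, while if $C'>0$ the largest reciprocal is at least $C'/l$, forcing the smallest denominator into the finite set $\scI_k\cap[\mu,l/C']$; peeling it off leaves either a strictly positive target, on which one recurses, or a non-positive one, which admits no further positive terms. The recursion thus has finite branching and bounded depth, and finiteness follows.

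The main obstacle, and the reason for keeping $k$ away from the degenerate values, is exactly the vanishing-subsum phenomenon: over $\Z\setminus\{0\}$ the homogeneous equation $\Lambda^{-1}+(-\Lambda)^{-1}=0$ already produces an infinite family, so any approach that merely bounds one denominator at a time collapses as soon as a partial sum hits $0$. The one-sidedness of $\scI_k$ for $k\neq0$ is precisely the structural input that forbids such cancellations once the finitely many wrong-sign entries are fixed. I would therefore emphasise that the argument genuinely uses $k\neq0$; the borderline set $\scI_0=\Z$ is two-sided, and for it the equation must be controlled by separate means (or the corresponding hypotheses excluded), since otherwise infinite families of the shape above reappear already for $l\geq3$.
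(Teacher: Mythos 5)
Your proof is correct and follows essentially the same route as the paper: isolate the finitely many wrong-sign entries of $\scI_k$, then run the Egyptian-fraction induction on the positive part (the paper phrases this via the reciprocals $X_i=1/\Lambda_i$ and a lemma on discrete subsets of $(0,\infty)$ accumulating only at $0$; your bound ``smallest denominator lies in $\scI_k\cap[\mu,l/C']$'' is exactly that lemma's step ``some $X_i\geq c/m$''). Your closing remark about $k=0$ is well taken: the statement as written admits $k=0$, yet $\scI_0=\Z$ yields infinite families such as $(-1,n,-n)$ for $l=3$, and the paper's own proof likewise treats only $k\in\N\setminus\{2\}$ and $k\in-\N\setminus\{-2\}$, so the hypothesis should read $k\in\Z\setminus\{-2,0,2\}$, as the paper's own gloss immediately after the theorem in fact does.
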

In other words, this theorems says that if the potential of degree
$k\in\Z\setminus\{-2,0,2\}$ is integrable, then we have only a finite
number of choices for eigenvalues of  $V''(\vd)$.

The plan of this paper is the following. In Section~\ref{DarPoints} we
describe Darboux points of rational potentials as elements of complex
projective space $\mathbb{CP}^{n-1}$ using homogeneous as well as
affine coordinates and we show some their general properties.  In
Section~\ref{sec:improp} we prove Theorem~\ref{th:nonint}.  The
next Section~\ref{sec:two} contains more precise description of
properties of Darboux points for homogeneous rational potentials with
two degrees of freedom.  In particular, in Subsection ~\ref{sec:darb2}
we determine the number of possible proper Darboux points,
characterise potentials with multiple proper and improper Darboux
points. In the end of this subsection we determine potentials with
infinitely many proper Darboux points as well as without proper
Darboux points. In Subsection~\ref{NFI} we prove
Theorems~\ref{thm:rel} and~\ref{thm:rell2} which give us universal
relations between all eigenvalues of Hessian $V''(\vd)$. The importance of such
relations is presented in Subsection
\ref{sec:finiteness}, where the proof of Theorem~\ref{thm:finiteness}
is presented.  This
theorem is the bases of the classification programme of integrable
rational potentials.

In Section~\ref{appl} we give some applications of the presented
theory. In Subsection~\ref{k-} we introduced a class of potentials
with a negative degree of homogeneity which satisfy all necessary
integrability. We obtain it by reconstruction from the fixed
non-trivial eigenvalues calculated at proper Darboux points assuming
that their number is maximal and all these eigenvalues are the same
and belong to first family from table~\ref{tabMo}. In
Subsection~\ref{dorizzi} we distinguish special sub-class of these
potentials which have additional first integrals.  In the end of
this paper in Subsection~\ref{k+} we made analogical analysis like in
Subsection~\ref{k-} but for homogeneous rational potential of positive
degree of homogeneity. 

\section{General properties of Darboux points of rational potentials}
\label{DarPoints}

Here we will use some standard notions of algebraic geometry. Thus,
for a collection of polynomials $F_1,\ldots,F_s\in\C[\vq]$, we denote
by $\scV(F_1,\ldots,F_s)$ the set of their common zeros, i.e.,
\[
\scV(F_1,\ldots,F_s)=\left\{\vq\in\C^n\ |\
  F_1(\vq)=\cdots=F_s(\vq)=0\right\}.
\]
Such a subset of $\C^n$ is called an algebraic set.

Let us note that a rational potential
\begin{equation}
  V(\vq)=\dfrac{W(\vq)}{U(\vq)}
  \label{eq:ratek}
\end{equation} 
cannot be considered as a function which is well defined for all
$\vq\in\C^n$.  In fact, if polynomials $W(\vq)$ and $U(\vq)$ are
relatively prime and $\deg U(\vq)>0$, then $V(\vq)$ is not well
defined whenever $U(\vq)=0$. At such a point value of $V(\vq)$ is
either infinity or is indefinite. The second case occurs at points
$\vq\in\C$ such that $U(\vq)=W(\vq)=0$. The set of all such points of
potential $V$ we denote by $\scN(V)$. Thus $\scN(V)=\scV(U,W)$. The
set of poles $\scP(V)$ of potential $V$ is the set of points
$\vq\in\C$ at which the values of $V$ are infinite. Thus
$\scP(V)=\scV(U)\setminus\scN(V)$.

We show that for $n=2$ we have $\scN(V)=\{0\}$. In fact, we know that
every homogeneous complex polynomial in two variables is a product of
linear forms.  Thus we can write
\[
\begin{split}
  W(q_1,q_2)&=\prod_{i=1}^r\left(\alpha_1^{(i)}q_1+\alpha_2^{(i)}q_2
  \right),\quad |\alpha_1^{(i)}|^2+|\alpha_2^{(i)}|^2\neq0,\quad
  \text{for}\quad
  i=1,\ldots,r,\\
  U(q_1,q_2)&=\prod_{j=1}^s\left(\beta_1^{(j)}q_1+\beta_2^{(j)}q_2
  \right),\quad |\beta_1^{(j)}|^2+|\beta_2^{(j)}|^2\neq0,\quad
  \text{for}\quad j=1,\ldots,s.
\end{split}
\]
If $W(q_1,q_2)=U(q_1,q_2)=0$ for a certain $(q_1,q_2)$, then

\[
\left(\alpha_1^{(i)}q_1+\alpha_2^{(i)}q_2\right)=\left(\beta_1^{(j)}
  q_1+\beta_2^ { (j) } q_2 \right)=0,
\]
for a certain $i\in\{1,\ldots,r\}$ and $j\in\{1,\ldots,s\}$.  If
$(q_1,q_2)\neq(0,0)$, then
\[
\beta_1^{(j)}=\gamma\alpha_1^{(i)},\quad\text{and}\quad
\beta_2^{(j)}=\gamma\alpha_2^{(i)},
\]
for a certain $\gamma\in\C$.  However, it means that $U$ and $W$ have
a common factor $\left(\alpha_1^{(i)}q_1+\alpha_2^{(i)}q_2\right)$ and
this is a contradiction with our assumption that $U$ and $W$ are
relatively prime.

In order to give a precise definition of Darboux points we consider
rational vector $V'(\vq)\in\C(\vq)^n$. Although we assumed that
$V(\vq)=W(\vq)/U(\vq)$ with relatively prime polynomials $W$ and $U$,
it does not imply that polynomials $U$ and
\begin{equation}
  S_i=\dfrac{\partial W}{\partial q_i}U-W\dfrac{\partial U}{\partial
    q_i},\qquad i=1,\ldots,n,
\end{equation} 
are relatively prime. In other words writing
\[
\partial_iV=\dfrac{\partial V}{\partial q_i}=\dfrac{S_i}{U^2},
\]
we cannot be sure that the sets of indefiniteness of $\partial_iV$ is
$\scV(S_i,U)$.

From definition \eqref{eq:dar2} it follows that if $\vd$ is a Darboux
point of potential $V$, then $\alpha\vd$, where $\alpha\in\C^{\star}$,
is also a Darboux point which gives the same phase curves
\eqref{eq:phasecurve}. This is why it is convenient to consider
Darboux points as points in projective space. Thus we consider a
non-zero $\vq\in\C^n$ as the homogeneous coordinates of a point in
$\mathbb{CP}^{n-1}$ and we write
$[\vq]:=[q_1:\cdots:q_n]\in\mathbb{CP}^{n-1}$.

In order to avoid formalities we say that $[\vd]\in\mathbb{CP}^{n-1}$
is a Darboux point of the potential $V$ iff
\begin{enumerate}
\item components of $V'(\vq)$ are well defined at $\vq=\vd$ and ;
\item $V(\vq)$ is well defined at $\vq=\vd$ and;
\item $\vd\wedge V'(\vd)=0$.
\end{enumerate}

The set of all Darboux points of potential $V$ we denote by
$\scD(V)$. A Darboux point $[\vd]\in\scD(V)$ is called a proper
Darboux point iff $V'(\vd)\neq0$, otherwise it is called an improper
Darboux point. The set of all proper Darboux points of potential $V$
is denoted by ${\scD}^\star(V)$.  We say that $[\vd]\in\scD(V)$ is an
isotropic Darboux point iff
\[
d_1^2+\cdots+d_n^2=0.
\]
The set of all isotropic Darboux points of potential $V$ is denoted by
$\scD_0(V)$.

As it was shown in \cite{mp:09::a} for a polynomial potential $V$ set
$\scD(V)$ is an algebraic subset of $\mathbb{CP}^{n-1}$. For a
rational $V$ it is generally not the case. However, we can consider
the following homogeneous polynomials
\begin{equation}
  R_{i,j}=q_i\left(\dfrac{\partial W}{\partial q_j}U-\dfrac{\partial U}{\partial
      q_j}W\right)-q_j\left(\dfrac{\partial W}{\partial q_i}U-\dfrac{\partial
      U}{\partial q_i}W\right),\quad 1\leq i<j\leq n.
\end{equation} 
Then the algebraic set
\begin{equation}
  \scR(V)=\scV(R_{1,2},\ldots,R_{n-1,n})\subset \mathbb{CP}^{n-1},
\end{equation} 
contains all Darboux points i.e. $\scD(V)\subset \scR(V)$, and clearly
$\scN(V)\subset \scR(V)$.

It appears that it is convenient to perform some calculations in
affine coordinates on $\mathbb{CP}^{n-1}$. They are introduced in the
following way.  A point in the $(n-1)$ dimensional complex projective
space $\CP^{n-1}$ is specified by its homogeneous coordinates
$[\vq]=[q_1: \cdots:q_{n}]$, where $\vq=(q_1,\ldots,q_n)\in\C^{n}$. 
We define $n$ open subsets $U_i$ of $\CP^{n-1}$
\begin{equation}
  U_i:=\defset{[q_1: \cdots:q_n]\in\CP^{n-1}}{q_i\neq 0}\mtext{for}i=1,\ldots,n.
\end{equation} 
Clearly
\begin{equation}
  \CP^{n-1}=\bigcup_{i=1}^n U_i,
\end{equation} 
and we have natural coordinate maps
\begin{equation*}
  \theta_i:\CP^{n-1} \supset U_i\rightarrow \C^{n-1} ,\qquad
  \theta_i([\vq])=(x_1,\ldots,x_{n-1}),
\end{equation*} 
where
\begin{equation}
  (x_1,\ldots,x_{n-1})=\left(\frac{q_1}{q_i}, \ldots, 
    \frac{q_{i-1}}{q_i}, \frac{q_{i+1}}{q_i}, \ldots, \frac{q_{n}}{q_i}\right). 
\end{equation} 
Each $U_i$ is homeomorphic to $\C^{n-1}$. It is easy to check that
charts $(U_i,\theta_i)$, $i=1,\ldots, n$ form an atlas which makes
$\CP^{n-1}$ an holomorphic $(n-1)$-dimensional manifold.  It is
customary to choose one $U_i$, e.g., $U_1$, and call it the affine
part of $\CP^{n-1}$ and coordinates on them we call affine
coordinates.

It means that on $U_1:= \defset{[\vq]\in\CP^{n-1}}{q_1 \neq0}$ we
define affine coordinates
\begin{equation}
  \label{eq:th1}
  \theta_1:U_1\rightarrow \C^{n-1}, \quad \vx:=(x_1,\ldots,x_{n-1})=
  \theta_1([\vq]),
\end{equation}
by
\begin{equation}
  \label{eq:xi}
  x_i=\frac{q_{i+1}}{q_1}, \mtext{for}i=1,\ldots, n-1.
\end{equation}

For a homogeneous polynomial $F\in\C[\vq]$ we define its
dehomogenisation $f\in\C[x_1,\ldots,x_{n-1}]$ in the following way
\[
f(x_1,\ldots,x_{n-1}):=F(1,x_1,\ldots,x_{n-1}).
\]

Now we prove the following lemma.
\begin{lemma}
  \label{lem:aff}
  \begin{equation}
    \label{eq:dvu1}
    \theta_1(\scR(V)\cap U_1)=\scV(g_1,\dots,g_{n-1}),
  \end{equation}
  where polynomials $g_1,\ldots,g_{n-1}\in\C[ \vx]$ are given by
  \begin{equation}
    \label{eq:g0}
    g_0:= k uw -\sum_{i=1}^{n-1}
    x_i\left (u\pder{w}{x_i}-w\pder{u}{x_i}\right),  \qquad k=r-s,
  \end{equation}
  and
  \begin{equation}
    \label{eq:gi}
    g_i := u\pder{w}{x_i}- w\pder{v}{x_i}-x_i g_0, \mtext{for}i=1, \ldots, n-1.
  \end{equation}
  Moreover, if $[\vd]\in \scR(V)\cap U_1$ is an improper Darboux, then
  its affine coordinates $ \va := \theta_1([\vd])\in\C^{n-1}$
  satisfy $g_0(\va)=0$.
\end{lemma}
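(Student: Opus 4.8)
The plan is to realise each $g_i$ as the dehomogenisation of the homogeneous polynomial $R_{1,i+1}$, and to reduce the full system $\{R_{i,j}\}$ to the $n-1$ relations $\{R_{1,j}\}$ on the chart $U_1$. First I would observe that on $U_1$, where $q_1\neq 0$, the vanishing of all $R_{i,j}$ is equivalent to the vanishing of $R_{1,j}$ for $j=2,\ldots,n$ alone. Writing $S_i=\pder{W}{q_i}U-W\pder{U}{q_i}$ so that $R_{i,j}=q_iS_j-q_jS_i$, if $d_1\neq 0$ and $d_1S_j=d_jS_1$ for every $j$, then
\[
d_1\bigl(d_iS_j-d_jS_i\bigr)=d_i\bigl(d_1S_j\bigr)-d_j\bigl(d_1S_i\bigr)=d_id_jS_1-d_jd_iS_1=0,
\]
whence $R_{i,j}(\vd)=0$ for all $i<j$. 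Thus $\scR(V)\cap U_1=\scV(R_{1,2},\ldots,R_{1,n})\cap U_1$.

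Next I would dehomogenise in the coordinates $q_1=1$, $q_{i+1}=x_i$, writing $w(\vx)=W(1,\vx)$ and $u(\vx)=U(1,\vx)$. Since $\pder{W}{q_{i+1}}(1,\vx)=\pder{w}{x_i}$ and similarly for $U$, the chain rule gives at once
\[
S_{i+1}(1,\vx)=u\pder{w}{x_i}-w\pder{u}{x_i},\qquad i=1,\ldots,n-1,
\]
which is precisely the bracketed expression appearing in $g_0$ and $g_i$. The one genuinely nontrivial step is the treatment of $S_1$, whose term $\pder{W}{q_1}$ and $\pder{U}{q_1}$ are not coordinate derivatives on $U_1$. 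Here I would invoke Euler's identity for the homogeneous polynomials $W$ and $U$: from $\sum_{i=1}^n q_i\pder{W}{q_i}=rW$ one gets, after setting $q_1=1$,
\[
\pder{W}{q_1}(1,\vx)=rw-\sum_{i=1}^{n-1}x_i\pder{w}{x_i},
\]
and analogously $\pder{U}{q_1}(1,\vx)=su-\sum_{i=1}^{n-1} x_i\pder{u}{x_i}$. Substituting these into $S_1(1,\vx)=\pder{W}{q_1}(1,\vx)\,u-w\,\pder{U}{q_1}(1,\vx)$ and collecting terms yields exactly
\[
S_1(1,\vx)=(r-s)uw-\sum_{i=1}^{n-1}x_i\left(u\pder{w}{x_i}-w\pder{u}{x_i}\right)=g_0 .
\]
Consequently $R_{1,i+1}(1,\vx)=S_{i+1}(1,\vx)-x_iS_1(1,\vx)=g_i$ (with $\pder{u}{x_i}$ in place of the misprinted $\pder{v}{x_i}$). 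This gives the set equality $\theta_1(\scR(V)\cap U_1)=\scV(g_1,\ldots,g_{n-1})$: the forward inclusion is evaluation of $R_{1,j}$ at a point of $\scR(V)\cap U_1$, and the reverse follows by lifting $\va\in\scV(g_1,\ldots,g_{n-1})$ to $\vd=(1,\va)$ and applying the reduction of the first paragraph.

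For the final assertion I would argue as follows. If $[\vd]\in\scR(V)\cap U_1$ is an improper Darboux point, then $V'(\vd)=\vzero$, and since $V$ is required to be well defined at $\vd$ we have $U(\vd)\neq 0$. As $\partial_1 V=S_1/U^2$ as rational functions, the relation $0=\partial_1 V(\vd)=S_1(\vd)/U(\vd)^2$ forces $S_1(\vd)=0$. Taking $\vd=(1,\va)$ and recalling $g_0=S_1(1,\cdot)$ gives $g_0(\va)=S_1(\vd)=0$, as claimed. I expect no serious obstacle here: the content is the Euler-identity computation converting $S_1$ into $g_0$, together with the observation that the relations $R_{1,j}$ already suffice on $U_1$. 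The only point requiring care is the well-definedness hypothesis guaranteeing $U(\vd)\neq 0$, which is exactly what allows the passage from $\partial_1 V(\vd)=0$ to $S_1(\vd)=0$.
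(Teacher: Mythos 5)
Your proposal is correct and follows essentially the same route as the paper: dehomogenise on $U_1$, use Euler's identity to express $\partial W/\partial q_1$ and $\partial U/\partial q_1$ in affine coordinates so that $S_1(1,\vx)=g_0$ and $R_{1,i+1}(1,\vx)=g_i$, reduce the full system $\{R_{i,j}\}$ to the first row (the paper does this via the polynomial identity $r_{i+1,j+1}=x_ig_j-x_jg_i$, you do it pointwise, which is equivalent), and deduce $g_0(\va)=0$ at an improper point from $U(\vd)\neq0$. You also correctly identify $\pder{v}{x_i}$ in \eqref{eq:gi} as a misprint for $\pder{u}{x_i}$.
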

\begin{proof}
  Darboux points are determined by common zeros of polynomials
  $R_{ij}$. We calculate their  dehomogenisations $r_{ij}$.

  At first we express partial derivatives of $W$ and $U$ in affine
  coordinates
  \begin{equation}
    \begin{split}
      &\dfrac{\partial W}{\partial
        q_1}(\vq)=q_1^{r-1}\left[rw-\sum_{j=1}^{n-1}x_j\dfrac{\partial
          w}{\partial x_j}(\vx)\right],\qquad \dfrac{\partial
        W}{\partial q_i}(\vq)=q_1^{r-1}\dfrac{\partial w}{\partial
        x_{i-1}}(\vx),\\
      &\dfrac{\partial U}{\partial
        q_1}(\vq)=q_1^{s-1}\left[su-\sum_{j=1}^{n-1}x_j\dfrac{\partial
          u}{\partial x_j}(\vx)\right],\qquad \dfrac{\partial
        U}{\partial q_i}(\vq)=q_1^{s-1}\dfrac{\partial u}{\partial
        x_{i-1}}(\vx),
    \end{split}
  \end{equation} 
  for $i=2,\ldots,n-1$. As
  \[
  r_{i,j}(x_1,\ldots,x_{n-1}):=R_{i,j}(1,x_1,\ldots,x_{n-1}),\mtext{for}
  1\leq i<j\leq n-1,
  \]
  through direct calculations we obtain 
  \begin{equation}
    \begin{split}
      g_i:= &r_{1,i}=\dfrac{\partial u(\vx)}{\partial
        x_{i-1}}w(\vx)-\dfrac{\partial w(\vx)}{\partial
        x_{i-1}}u(\vx)+\\ &x_{i-1}\left[
        kuw+
\sum_{j=1}^{n-1}x_j\left(w(\vx)\dfrac{\partial
            u(\vx)}{\partial x_j}-u(\vx)\dfrac{\partial
            w(\vx)}{\partial x_j}
        \right)\right]
    \end{split}
  \end{equation}
for $1\leq i\leq n-1$. Then we easily get 
\begin{equation}
     r_{i+1,j+1}=x_ig_j-x_jg_i,\qquad \mtext{for} 1\leq i<j\leq n-1.
  \end{equation}
  Thus clearly
  $\scV(r_{1,2},\ldots,r_{n-1,n})=\scV(g_1,\dots,g_{n-1})$.

  At an improper Darboux point $[\vd]$ we have
  \[
  V'(\vd)=\dfrac{1}{U(\vd)^2}\left(W'(\vd)U(\vd)-U'(\vd)W(\vd)
  \right)=0.
  \]
  By assumptions, $U(\vd)\neq 0$, so
  \[
  W'(\vd)U(\vd)-U'(\vd)W(\vd)=0.
  \]
  Passing into affine coordinates we obtain
  \[
  \begin{split}
    & \dfrac{\partial W}{\partial q_1}U-\dfrac{\partial U}{\partial
      q_1}W=q_1^{r+s-1}g_0(\vx),\\
    &\dfrac{\partial W}{\partial q_j}U-\dfrac{\partial U}{\partial
      q_j}W=q_1^{r+s-1}\left(\dfrac{\partial w}{\partial
        x_{j-1}}u-\dfrac{\partial u}{\partial
        x_{j-1}}w\right)=q_1^{r+s-1}(g_{j-1}(\vx)+x_{j-1}g_0(\vx)),
  \end{split}
  \]
  for $j=2,\ldots,n$. Thus, in fact at improper Darboux point we have
  $g_{0}(\vx)=0$, and this ends the proof.
\end{proof}
\section{Proof of Theorem~\ref{th:nonint}}
\label{sec:improp}

The variational equations along particular solution
\begin{equation*}
  t\mapsto (\varphi (t)\vd,\dot\varphi (t)\vd)
\end{equation*} 
have the form
\begin{equation}
  \ddot\vx=-\varphi(t)^{k-2}V''(\vd)\vx.
  \label{eq:wariat}
\end{equation} 
We make a linear change of variables $\vx=\vC\veta$ which transforms
\eqref{eq:wariat} into
\begin{equation}
  \ddot\veta=-\varphi(t)^{k-2}\vD\veta,\qquad \vD=\vC^{-1}V''(\vd)\vC.
  \label{eq:wariatjor}
\end{equation} 
Hence we can choose $\vC$ in such a way that $\vD$ is the Jordan form
of $V''(\vd)$. System \eqref{eq:wariatjor} contains as a subsystem the
following direct product of equations
\begin{equation*}
  \ddot\eta_i=-\lambda_i\varphi(t)^{k-2}\eta_i,\qquad i=1,\ldots,m\leq n,
\end{equation*} 
where $\lambda_1,\ldots,\lambda_m$ are pairwise different eigenvalues
of $V''(\vd)$.

We have to prove that if $\lambda_i\neq0$ for a certain $1\leq i\leq
m$, then the identity component of differential Galois group of system
\eqref{eq:wariatjor} is not Abelian. To this end it is enough to show
that the identity component of differential Galois group of a single
equation
\begin{equation}
  \ddot\eta_i=-\lambda_i\varphi(t)^{k-2}\eta_i,\quad\text{with}\quad
  \lambda_i\neq0,
  \label{eq:niezerowe}
\end{equation}
is not Abelian for an appropriate choice of $\varphi$. For an improper
Darboux point $\varphi(t)$ satisfies $\ddot\varphi=0$. So, we can take
$\varphi(t)=At$, where $A\in\C$ and we choose $A$ such that
$-\lambda_i(At)^{k-2}=t^{k-2}$.

Denoting $\eta:=\eta_i$ and $\alpha=k-2$ we rewrite
\eqref{eq:niezerowe} as
\begin{equation}
  \label{war}
  \ddot \eta=t^{\alpha}\eta.
\end{equation} 

Now we prove the following lemma
\begin{lemma}
  For $\alpha\in\Z\setminus\{-4,-2,0\}$ the differential Galois group
  of equation \eqref{war} is $\mathrm{SL}(2,\C)$.
\end{lemma}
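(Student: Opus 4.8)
The plan is to transform \eqref{war} into Bessel's equation and then to read off its differential Galois group from the classical classification. First I would record that the Galois group $G$ of \eqref{war} over $\C(t)$ is contained in $\mathrm{SL}(2,\C)$: the equation carries no term in $\dot\eta$, so by Abel's identity the Wronskian of any fundamental system is constant, whence every element of $G$ has determinant one. Consequently it suffices to prove that the identity component $G^{0}$ equals $\mathrm{SL}(2,\C)$, since then $G^{0}\subseteq G\subseteq\mathrm{SL}(2,\C)$ and the connectedness of $\mathrm{SL}(2,\C)$ forces $G=\mathrm{SL}(2,\C)$.

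Next I would reduce \eqref{war} to Bessel's equation. Because $\alpha\neq-2$, the number $\beta:=\tfrac12(\alpha+2)$ is nonzero, and the substitution
\[
  \eta=t^{1/2}w,\qquad z=\tfrac{1}{\beta}\,t^{\beta},
\]
carries \eqref{war} into a Bessel equation of index
\[
  \nu=\frac{1}{\alpha+2},
\]
namely (in its modified form, the ordinary form being obtained by the rational rescaling $z\mapsto\rmi z$) $z^{2}w''+zw'-(z^{2}+\nu^{2})w=0$. Both ingredients of this substitution are algebraic over $\C(t)$: the factor $t^{1/2}$ is a rank-one twist, and $z=\beta^{-1}t^{\beta}$ is a finite algebraic change of the independent variable, as $\beta\in\tfrac12\Z$. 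The point I would invoke here is that neither a rank-one twist by an algebraic function nor the passage to a finite algebraic extension of the base field alters the identity component of the differential Galois group; this is a standard feature of the Morales--Ramis framework, see \cite{Morales:99::c}. Hence $G^{0}$ coincides with the identity component of the Galois group of the associated Bessel equation.

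I would then quote the classical description of the Galois group of Bessel's equation, obtained for instance by running the Kovacic algorithm (cf.\ \cite{Morales:99::c}): the equation admits Liouvillian solutions if and only if $2\nu$ is an odd integer, i.e.\ $\nu\in\tfrac12+\Z$; in every other case its Galois group is the whole of $\mathrm{SL}(2,\C)$, so in particular its identity component is $\mathrm{SL}(2,\C)$. It then remains to check the arithmetic. Writing $n:=\alpha+2\in\Z\setminus\{-2,0,2\}$, the equation $\nu=1/n\in\tfrac12+\Z$ reads $2=n(2m+1)$ for some $m\in\Z$, and since $2m+1$ is odd this forces $n\in\{-2,2\}$, that is $\alpha\in\{-4,0\}$. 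These are precisely the excluded values, so for every $\alpha\in\Z\setminus\{-4,-2,0\}$ the index $\nu=1/(\alpha+2)$ is not a half-integer. Combining the three steps yields $G^{0}=\mathrm{SL}(2,\C)$, hence $G=\mathrm{SL}(2,\C)$.

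I expect the only delicate point to be the second step. One must be careful that the substitution reaching Bessel's equation is genuinely algebraic (the exponents $\beta$ and $\tfrac12$ need not be integers) and therefore invoke invariance of the \emph{identity component} $G^{0}$ under algebraic gauge transformations and finite base extensions, rather than an invariance of $G$ itself. The excluded value $\alpha=-2$ enters exactly at this stage, since the reduction degenerates when $\beta=0$: there \eqref{war} becomes an Euler equation whose Galois group is a one-dimensional torus, so $G^{0}\neq\mathrm{SL}(2,\C)$, consistent with its exclusion.
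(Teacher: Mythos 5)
Your proof is correct, but it takes a genuinely different route from the paper. The paper argues case by case: for $\alpha\in\N$ it cites the earlier result of \cite{mp:09::a}; for $\alpha\leq 0$ it applies the purely rational transformation $\tau=1/t$, $\xi=\tau\eta$, which sends the exponent $\alpha$ to $-\alpha-4$ and so reduces $\alpha<-4$ to the positive case already settled; the two leftover cases $\alpha=-1$ and $\alpha=-3$ are then killed by checking the necessary conditions of Kovacic's algorithm directly (order of the pole at $0$ and at $\infty$, parity of the exponent differences). You instead make one uniform reduction to the modified Bessel equation of index $\nu=1/(\alpha+2)$ and invoke the classical fact that Bessel is non-Liouvillian unless $\nu\in\tfrac12+\Z$, which for integer $\alpha$ happens exactly at the excluded values $\alpha\in\{0,-4\}$. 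Your computation of the index is right, and the arithmetic at the end is right. The trade-off: your argument is shorter, uniform in $\alpha$, and explains \emph{why} the excluded values are excluded ($\nu=\pm\tfrac12$ at $\alpha\in\{0,-4\}$, degeneration of the change of variable at $\alpha=-2$); but because the exponents $\tfrac12$ and $\beta=\tfrac12(\alpha+2)$ are genuinely non-integral, you must pass through a finite algebraic extension of $\C(t)$, so you control only the identity component $G^{0}$ and need the extra (correct) step $G^{0}=\mathrm{SL}(2,\C)\subseteq G\subseteq\mathrm{SL}(2,\C)$ to recover $G$ itself, as well as a citation for the Bessel classification. The paper's transformations are all rational, so the full Galois group is preserved at every step and no appeal to invariance of $G^{0}$ under finite coverings is needed; the price is the case analysis and the reliance on \cite{mp:09::a} for positive exponents. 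It is worth noting that the paper's involution $\alpha\mapsto-\alpha-4$ is exactly the symmetry $\nu\mapsto-\nu$ of your Bessel index, so the two arguments are close relatives.
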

\begin{proof}
  For $\alpha\in\N$ the statement of lemma is proved in
  \cite{mp:09::a}, see proof of Theorem 2.4 in \cite{mp:09::a}. Hence,
  we assume that $\alpha=-\beta\leq 0$.  Let us introduce new
  independent variable $\tau=1/t$. Equation \eqref{war} transforms
  into
  \[
  \eta''+\dfrac{2}{\tau}\eta'-s^{\beta-4}\eta=0,
  \]
  where prime denotes the differentiation with respect to $\tau$. Now,
  introducing new dependent variable $\xi=\tau\eta$ we obtain
  \begin{equation}
    \xi''=\tau^{\beta-4}\xi.
    \label{eq:szajs}
  \end{equation} 
  The transformation that we made, does not change the differential
  Galois group of the considered equation. Hence, our lemma is already
  proved for $\beta>4$, that is, for $\alpha<-4$. We have only to
  prove our lemma for $\alpha=-1$, and for $\alpha=-3$.

  We prove this with the help of the Kovacic algorithm, for details,
  see \cite{Kovacic:86::}. This algorithm allows to decide whether the
  considered second order linear equation with rational coefficients
  has only Liouvillian solution. At the same time it provides detailed
  information about the differential Galois group of this equation. In
  particular, if equation does not admit a Liouvillian solution, then
  its differential Galois group is $\mathrm{SL}(2,\C)$.

  We apply the Kovacic algorithm in its original formulation
  \cite{Kovacic:86::}.  All definitions and notation is exactly the
  same as in this paper.

  The algorithm is divided into mutually exclusive cases. Only in
  first three cases the equation admits a Liouvillian solution. If
  none of these three cases occurs, then the differential Galois group
  of the considered equation is $\mathrm{SL}(2,\C)$. In this case the
  whole group coincides with  its identity
  component, so both are not Abelian.

  We show that for $\alpha=-1$ none of first three cases of the
  Kovacic algorithm is allowed. To this end we apply Theorem on page 8
  of \cite{Kovacic:86::} which gives the necessary conditions for the
  respective cases. Equation \eqref{war} has two singularities: a pole
  of order $-\alpha$ at $t=0$ and the infinity which has also order
  $-\alpha$.

  The necessary condition for case I is that the order of infinity is
  either even or greater than 2. For $\alpha=-1$ it is not the case.

  Case II cannot occur because the necessary condition for it is that
  there is at least one pole of odd order greater than 2 or else of
  order two.

  Case III cannot also occurs  because the necessary condition for it
  is that the infinity has order at least 2.

  As result, we showed that for $\alpha=-1$ equation \eqref{war} has
  no Liouvillian solution and its differential Galois group is
  $\mathrm{SL}(2,\C)$.

  Using the same theorem for $\alpha=-3$ we easily notice that if
  equation \eqref{war} has a Liouvillian solution, then we fall into
  case II of the algorithm. In this case exponents at the
  singularities are
  \[
  E_0=\{3\}\qquad \text{and}\qquad E_{\infty}=\{0,2,4\},
  \]
  According to the algorithm if this case occurs, then there exist
  $e_0\in E_0$ and $e_{\infty}\in E_{\infty}$ such that number
  \[
  d=\frac{1}{2}(e_{\infty}-e_0)
  \]
  is a non-negative integer. But it is impossible because $e_0$ is odd
  and $e_{\infty}$ even.  Thus also for $\alpha=-3$ differential
  Galois group of \eqref{war} is $\operatorname{SL}(2,\C)$.
\end{proof}
Let us notice that if   $\alpha=-4$, then  the  degree of homogeneity
of the potential  $k=-2$. In this case  equation \eqref{war} has
two Liouvillian solutions
\[
\eta_1=te^{\frac{\sqrt{\lambda}}{t}},\qquad
\eta_2=te^{-\frac{\sqrt{\lambda}}{t}}
\]
satisfying the following relations
\[
\dfrac{\eta_1'}{\eta_1}=\dfrac{t-\sqrt{\lambda}}{t^2}\in\C(t),\qquad
\dfrac{\eta_2'}{\eta_2}=\dfrac{t+\sqrt{\lambda}}{t^2}\in\C(t),\qquad
\eta_1\eta_2=t^2\in\C(t).
\]

For $\alpha=-2$, i.e., $k=0$, the  variational equation  has solutions
\[
\eta_1=t^{\frac{1}{2}+\sqrt{1+4\lambda}},\qquad
\eta_2=t^{\frac{1}{2}-\sqrt{1+4\lambda}},
\]
 satisfying the following relations
\[
\dfrac{\eta_1'}{\eta_1}=\dfrac{1+\sqrt{1+4\lambda}}{2t}\in\C(t),\qquad
\dfrac{\eta_2'}{\eta_2}=\dfrac{1-\sqrt{1+4\lambda}}{2t}\in\C(t),\qquad
\eta_1\eta_2=t\in\C(t).
\]
The above shows that in cases with $k=-2$ and $k=0$ the differential Galois
of variational equations is Abelian as it is 
subgroup of diagonal group. It  is an
open problem  if the presence of a Jordan block gives additional 
 obstructions for the integrability  in these cases. For $\alpha=0$, corresponding
to $k=2$, variational equation \eqref{war} becomes an equation with
constant coefficient, thus it does not give any integrability
obstructions.

\section{Two degrees of freedom}
\label{sec:two}
\subsection{More about properties of Darboux points}
\label{sec:darb2}
In this section we assume that $n=2$. A rational homogeneous potential
can be written in the following form
\begin{equation}
  \label{pot2zm}
  \begin{split}
    &V(\vq)=\dfrac{W(\vq)}{U(\vq)},\quad \text{where}\\
    &W(\vq)=\sum_{i=0}^rw_{r-i}q_1^{r-i}q_2^i,\quad
    U(\vq)=\sum_{j=0}^su_{s-j}q_1^{ s-j } q_2^j.
  \end{split}
\end{equation} 
We assume that polynomials $W$ and $U$ are relatively prime.  Since
polynomials $W$ and $U$ are homogeneous of degrees $r$ and $s$,
respectively, potential $V$ is a homogeneous function of degree
$k=r-s$. 

We denote by $z=q_2/q_1$ the affine coordinate on $U_1\subset
\mathbb{CP}^1$.  Then, dehomogenizations $w(z)$, $u(z)$ and $v(z)$ of
polynomials $W(\vq)$, $U(\vq)$ and potential $V(\vq)$ have the
following forms
\[
w(z)=\sum_{i=0}^rw_{r-i}z^i,\qquad u(z)=\sum_{j=0}^su_{s-j}z^j,\qquad
v(z)=\dfrac{w(z)}{u(z)}.
\]
On $U_2\subset \mathbb{CP}^1$ as a coordinate we take
$\zeta=1/z$. Now, dehomogenizations of $U$, $V$ and $W$ are following
\begin{equation}
  \label{eq:wtuvw}
  \widetilde
  u(\zeta):=U(\zeta,1)=\zeta^su(1/\zeta)=\sum_{i=1}^su_i\zeta^i, \qquad
  \widetilde v(\zeta):=\zeta^kv(1/\zeta), \qquad \widetilde w(\zeta):=\zeta^rw(1/\zeta).
\end{equation}

We can also write
\begin{equation}
  \label{eq:wuab}
  w(z)=\gamma\prod_{i=1}^l(z-a_i)^{\alpha_i}, \mtext{and} u(z)=\prod_{i-1}^m(z-b_i)^{\beta_i},
\end{equation}
where $\alpha_i, \beta_i\in\N$. We assume that for $w(z)=\gamma$ for
$l=0$, and $u(z)=1$ for $m=0$.

For $n=2$ Darboux points of potential $V$ are zeros of one rational
function
\[
F(\vq)=\vq\wedge V'(\vq)=q_1\dfrac{\partial V}{\partial
  q_2}-q_2\dfrac{\partial V}{\partial q_1}.
\]
Its dehomogenisation $f(z)$ is following
\begin{equation}
  \label{eq:fz}
  f(z)=F(1,z)=(1+z^2)v'(z)-kzv(z).
\end{equation} 

Now, polynomials $g:=g_1$ and $h:=g_0$, which are defined in
Lemma~\ref{lem:aff}, have the forms
\begin{equation}
  \label{funGH}
  \begin{split}
    h(z)&=kw(z)u(z)-z\left[w'(z)u(z)-u'(z)w(z)\right],\\
    g(z)&=(1+z^2)\left[w'(z)u(z)-u'(z)w(z)\right]-kzw(z)u(z).
  \end{split}
\end{equation} 
Notice that $g=fu^2$.

For an investigation of equivalent potentials in affine coordinates we
introduce the following notation. Let $A=\left[\begin{smallmatrix}a &
  b\\-b & a
\end{smallmatrix}\right]
$, with $a^2+b^2\neq 0$ be an element of $\mathrm{PO}(2,\C)$, and
$R(\vq)$ a rational homogeneous function of degree $p$. We define
$R_{A}(\vq):=R(A\vq)$. Let $r(z)$ and $\widetilde r(\zeta)$ be
dehomogenizations of $R(\vq)$. Then the dehomogenizations $r_{A}(z)$
and $\widetilde r_{A}(\zeta)$ are given by
\begin{equation}
  \label{eq:ra}
  r_{A}(z)=\left( a+b z \right)^p r(\tau_A(z)), \qquad
  \widetilde r_{A}(\zeta)=\left( a -b \zeta \right)^p r(\tau_{A}^{-1}(\zeta)), 
\end{equation}
where
\begin{equation*}
  \tau_A(z)=\frac{az-b}{bz+a}, \qquad \tau_A^{-1}(z)=\frac{az+b}{-bz+a}
\end{equation*}

In order to avoid numerous repetitions we assume implicitly in this
section that polynomials $U$ and $W$ are relatively prime. Moreover,
in order to simplify formulations instead of saying that $z\in\C$ is
an affine coordinate of a Darboux point we just say that $z$ is a
Darboux point. Taking into account this convention the sets of all
Darboux points, proper and improper Darboux points of potential $V$
which are contained in the affine part of $\CP^1$ can be defined in
the following way
\begin{equation}
  \begin{split}
    \label{zb1}
    \scD(V)&:=\left\lbrace z\in\C\ | \ g(z)=0 \ \mtext{and} \ u(z)\neq
      0 \right\rbrace,
    \\
    \scD^{\star}(V)&:=\left\lbrace z\in \C \ | \ g(z)=0 \ \text{and} \
      h(z)\neq 0 \ \text{and} \ u(z)\neq 0 \right\rbrace,
    \\
    \scD(V)\setminus\scD^{\star}(V)&:=\left\lbrace z\in \C \ | \
      g(z)=0 \ \text{and} \ h(z)= 0 \ \text{and} \ u(z)\neq
      0\right\rbrace,
  \end{split}
\end{equation}  
respectively.  We also define the set of isotropic Darboux points like
\begin{equation}
  \label{zb2}
  \scD_0(V)=\left\lbrace z\in \left\{-\rmi,\rmi\right\}\ | \ g(z)=0 \ \text{and}
    \  u(z)\neq
    0 \right\rbrace .
\end{equation} 
If polynomial $g$ is not identically zero, then the number of Darboux points of
the potential is bounded by its degree. Just direct calculations show
the following.

\begin{proposition}
  \label{pro:maxd}
  The degree of polynomial $g$ is not greater than $r+s$. Moreover,
  $\deg g=r+s$ if and only if
  \begin{equation}
    \label{eq:maxdeg}
    w_1u_0-w_0u_1\neq 0.
  \end{equation}
\end{proposition}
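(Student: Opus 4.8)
The plan is to extract the two top-degree coefficients of $g$ directly from its defining formula \eqref{funGH}. I would first abbreviate $P(z):=w'(z)u(z)-u'(z)w(z)$, so that $g=(1+z^2)P-kz\,wu$. Since $\deg w\le r$ and $\deg u\le s$, the Wronskian-type term $P$ has degree at most $r+s-1$ and the product $wu$ has degree at most $r+s$; hence both $(1+z^2)P$ and $kz\,wu$ have degree at most $r+s+1$, giving the a priori bound $\deg g\le r+s+1$. It then suffices to compute the coefficients of $z^{r+s+1}$ and of $z^{r+s}$ in $g$, and to show that the first vanishes identically while the second equals, up to sign, $w_1u_0-w_0u_1$.

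To do this I would expand $w(z)=w_0z^r+w_1z^{r-1}+\cdots$ and $u(z)=u_0z^s+u_1z^{s-1}+\cdots$, together with the corresponding expansions of $w'$ and $u'$. The coefficient of $z^{r+s-1}$ in $P$ is then $rw_0u_0-sw_0u_0=kw_0u_0$, so the coefficient of $z^{r+s+1}$ in $(1+z^2)P$ equals $kw_0u_0$. The same monomial in $kz\,wu$ also has coefficient $kw_0u_0$, and the two cancel: $[z^{r+s+1}]g=0$, so in fact $\deg g\le r+s$.

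For the decisive coefficient I would compute $[z^{r+s-2}]P=(k+1)w_0u_1+(k-1)w_1u_0$, whence $[z^{r+s}]\big((1+z^2)P\big)=(k+1)w_0u_1+(k-1)w_1u_0$, while $[z^{r+s}](kz\,wu)=k(w_0u_1+w_1u_0)$. Subtracting, the terms carrying the factor $k$ cancel and one is left with
\[
[z^{r+s}]g=w_0u_1-w_1u_0=-(w_1u_0-w_0u_1).
\]
Combined with $\deg g\le r+s$, this yields $\deg g=r+s$ if and only if this coefficient is nonzero, i.e.\ if and only if \eqref{eq:maxdeg} holds.

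I expect the only delicate point to be the bookkeeping: one must verify carefully that the $z^{r+s+1}$ contributions cancel and that exactly the four leading coefficients $w_0,w_1,u_0,u_1$ enter $[z^{r+s}]g$. A reassuring feature is that every step is a formal polynomial identity in these coefficients, so no nonvanishing hypothesis on $w_0$ or $u_0$ is needed; in particular the argument stays valid when $q_1$ divides $W$ or $U$ (that is, when $w_0=0$ or $u_0=0$), and the stated criterion is unaffected.
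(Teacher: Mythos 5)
Your computation is correct and is exactly the ``direct calculation'' the paper alludes to: the coefficient of $z^{r+s+1}$ in $g=(1+z^2)(w'u-u'w)-kz\,wu$ cancels, and the coefficient of $z^{r+s}$ comes out as $w_0u_1-w_1u_0$, consistent with the paper's later formula $g_0=u_1w_0-u_0w_1$ in \eqref{eq:g0h0}. Nothing further is needed.
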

It is worth to notice that condition
\begin{equation}
  \label{eq:max0}
  w_1u_0-w_0u_1=0,
\end{equation}
means that polynomial
\begin{equation*}
  S_1=\pder{W}{q_1}U -W\pder{U}{q_1},
\end{equation*}
vanishes at points $(0,q_2)$ with $q_2\in\C^{\star}$. Thus, if
$U(0,1)\neq0$, then $[0:1]$ is a Darboux point of $V$.

We show that only radial potentials have infinite number of Darboux
points.
\begin{proposition}
  \label{pro:inf}
  A homogeneous rational potential $V\in\C(q_1,q_2)$ possesses an
  infinite number of Darboux points if and only if $V$ is radial,
  i.e., it has the following form
  \begin{equation}
    \label{eq:r}
    V(\vq):=c(q_1^2+q_2^2)^l, \qquad k=2l, 
  \end{equation}
  for a certain $l\in\Z$.
\end{proposition}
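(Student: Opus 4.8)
The plan is to reduce the statement to whether the single polynomial $g$ of \eqref{funGH} vanishes identically, and then to a first-order differential equation for the dehomogenised potential $v$. First I would note that, by Proposition~\ref{pro:maxd}, if $g\not\equiv0$ then $\deg g\le r+s$, so $g$ has only finitely many roots; the affine Darboux points are a subset of these roots and at most one further Darboux point $[0:1]$ lies at infinity, so the total number is finite. Conversely, if $g\equiv0$ then every $z$ with $u(z)\neq0$ is a Darboux point, and as $u$ has only finitely many zeros this produces infinitely many. Hence $V$ has infinitely many Darboux points if and only if $g\equiv0$, and since $g=fu^2$ with $u\not\equiv0$ this is equivalent to $f\equiv0$, where $f(z)=(1+z^2)v'(z)-kzv(z)$ as in \eqref{eq:fz}.

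The core step is to solve $f\equiv0$. Away from the zeros and poles of $v$ the relation $(1+z^2)v'=kzv$ takes the logarithmic form
\begin{equation*}
  \frac{v'}{v}=\frac{kz}{1+z^2}=\frac{k/2}{z-\rmi}+\frac{k/2}{z+\rmi}.
\end{equation*}
Writing $w$ and $u$ in the factored form \eqref{eq:wuab} and using that $W$ and $U$ are relatively prime, the left-hand side equals $\sum_i\alpha_i/(z-a_i)-\sum_j\beta_j/(z-b_j)$, a partial-fraction decomposition whose poles are precisely the distinct zeros and poles of $v$, with residues equal to their orders. Matching this against the right-hand side forces every zero and pole of $v$ to lie in $\{\rmi,-\rmi\}$, so $v(z)=\gamma(z-\rmi)^{e_+}(z+\rmi)^{e_-}$ for some $\gamma\in\C^\star$ and $e_\pm\in\Z$, and comparing residues at $\pm\rmi$ gives $e_+=e_-=k/2$.

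Since $e_\pm$ are integers, this already forces $k$ to be even, say $k=2l$ with $l\in\Z$, and then $v(z)=\gamma(z-\rmi)^l(z+\rmi)^l=\gamma(1+z^2)^l$. Rehomogenising through $V(\vq)=q_1^k\,v(q_2/q_1)$ yields $V(\vq)=\gamma(q_1^2+q_2^2)^l$, the asserted radial form. For the converse I would simply substitute $v(z)=c(1+z^2)^l$ with $k=2l$ into \eqref{eq:fz}: the two terms cancel, so $f\equiv0$, hence $g\equiv0$, and every point with $u(z)\neq0$ is a Darboux point, giving infinitely many.

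I expect the middle step to be the main obstacle: extracting the exact shape of $v$ from $f\equiv0$ subject to $v$ being rational. The residue bookkeeping must be handled with care---one must rule out any zero or pole of $v$ lying off $\{\pm\rmi\}$, and the resulting integer equalities $e_+=e_-=k/2$ are exactly what both pin down the radial exponent $l$ and show that odd degrees $k$ are impossible.
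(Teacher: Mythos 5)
Your proof is correct and follows essentially the same route as the paper: both directions reduce to the identical vanishing of $f(z)=(1+z^2)v'(z)-kzv(z)$, which is then solved as a first-order equation for the rational function $v$. The only difference is cosmetic --- the paper invokes separation of variables and then appeals to rationality of $v$ to force $k=2l$, whereas you carry out that same step rigorously via the partial-fraction decomposition of $v'/v$; both land on $v(z)=c(1+z^2)^{k/2}$ and rehomogenise.
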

\begin{proof}
  If potential has the form \eqref{eq:r}, then
  \begin{equation*}
    V'(\vq)= \frac{k}{r^2}V(\vq)\vq, \qquad r^2:=q_1^2+q_2^2,
  \end{equation*}
  so for any $\vq\in\C^2$, we have
  \begin{equation*}
    \vq\wedge V'(\vq)=0.
  \end{equation*}
  Hence, in fact, an arbitrary non-zero $\vq\in\C^2$ gives a Darboux
  point.

  On the other hand, if $V$ has an infinite number Darboux points,
  then rational function $f(z)$ given by~\eqref{eq:fz} has infinite
  number of zeros, so it vanishes identically. Thus, we have
  \begin{equation}
    \label{eq:radz}
    (1+z^2)v'(z)-kzv(z)=0.
  \end{equation}
 Using separation of variables we obtain the following solution of this
equation 
  \begin{equation*}
    v(z)=c(1+z^2)^{k/2}.
  \end{equation*}
  Because, by assumptions, $v(z)$ is rational, $k=2l$ for a certain
  integer $l$. The homogenisation of $v(z)$ gives \eqref{eq:r}, and
  this finishes the proof.
\end{proof}
We show that if the considered potential is not radial, then we can
always choose its representative such  that the polynomial $g(z)$ has degree
$r+s$.
\begin{proposition}
  \label{prop:reprezent}
  Assume that the potential has a finite number of Darboux
  points. Then it has a representative such that the corresponding
  polynomial $g(z)$ has degree $r+s$.
\end{proposition}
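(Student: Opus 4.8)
The plan is to translate the degree condition into a statement about a single point of $\CP^1$ and then to move that point out of a finite exceptional set by a suitable element of $\mathrm{PO}(2,\C)$. Introduce the homogeneous polynomial
\begin{equation*}
  G(\vq):=U(\vq)^2\,F(\vq)=q_1\!\left(\pder{W}{q_2}U-\pder{U}{q_2}W\right)-q_2\!\left(\pder{W}{q_1}U-\pder{U}{q_1}W\right),
\end{equation*}
which is homogeneous of degree $r+s$ and satisfies $g(z)=G(1,z)$. Hence $\deg g=r+s$ if and only if the coefficient of $z^{r+s}$ in $g$, namely $G(0,1)$, is non-zero; this coefficient is, up to sign, the quantity $w_1u_0-w_0u_1$ of Proposition~\ref{pro:maxd}, so $\deg g=r+s$ is precisely the condition $[0:1]\notin\scV(G)$. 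Because the potential has finitely many Darboux points it is not radial (Proposition~\ref{pro:inf}), so $F\not\equiv0$, $G\not\equiv0$, and $\scV(G)\subset\CP^1$ is a finite set of at most $r+s$ points.

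Next I would establish how $G$ transforms under the group. For $A=\left(\begin{smallmatrix} a & b\\ -b & a\end{smallmatrix}\right)\in\mathrm{PO}(2,\C)$ a direct computation, using $A^{T}=(a^2+b^2)A^{-1}$, gives $F_A(\vq)=F(A\vq)$ and $U_A(\vq)=U(A\vq)$, whence the polynomial attached to the representative $V_A$ is
\begin{equation*}
  G_A(\vq)=U_A(\vq)^2F_A(\vq)=G(A\vq).
\end{equation*}
Since $A$ is invertible, $W_A$ and $U_A$ remain relatively prime homogeneous polynomials of degrees $r$ and $s$, so the discussion above applies verbatim to $V_A$: the corresponding polynomial $g_A$ has degree $r+s$ if and only if $G_A(0,1)=G\!\left(A\,(0,1)^{T}\right)\neq0$, that is, if and only if $A[0:1]\notin\scV(G)$.

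It remains to choose $A$ so that $A[0:1]$ avoids the finite set $\scV(G)$. Since $A[0:1]=[b:a]$ and the only constraint is $a^2+b^2\neq0$, the orbit of $[0:1]$ under $\mathrm{PO}(2,\C)$ is exactly $\CP^1$ with the two isotropic points $[1:\rmi]$ and $[1:-\rmi]$ removed. This orbit is infinite while $\scV(G)$ is finite, so one may pick a non-isotropic point $P\notin\scV(G)$ and an $A$ with $A[0:1]=P$; for this representative $\deg g_A=r+s$. The only point to watch is that the two isotropic directions are unreachable, but this is harmless precisely because $\scV(G)$ is finite; the genuine content of the argument is the equivariance $G_A=G\circ A$ together with the identification of the orbit of $[0:1]$.
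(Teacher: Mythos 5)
Your proof is correct, and it reaches the same algebraic heart as the paper's argument by a noticeably different (and arguably cleaner) route. The paper proceeds by contradiction: it expresses the would-be leading coefficient of $g_A$ through the dehomogenizations at infinity, computes it explicitly as $a^{r+s}$ times $\bigl[(1+x^2)(w'u-wu')-kxwu\bigr]$ at $x=a/b$, and observes that its vanishing for \emph{all} $A$ forces the ODE $(1+x^2)v'=kv$, i.e.\ a radial potential, contradicting the finiteness of the set of Darboux points. You instead argue directly: you package the same quantity as the value of the degree-$(r+s)$ homogeneous polynomial $G=U^2F$ (the paper's $R_{1,2}$) at the point $A[0:1]=[b:a]$, prove the equivariance $G_A=G\circ A$ from $A^T=\det(A)\,A^{-1}$, note that non-radiality makes $\scV(G)\subset\CP^1$ a finite set, and then use that the orbit of $[0:1]$ under $\mathrm{PO}(2,\C)$ is $\CP^1$ minus the two isotropic points, hence infinite. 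The two proofs rest on the identical computation (leading coefficient of $g_A$ equals, up to a nonzero factor, $g$ evaluated at $a/b$), but yours makes the group action and the finiteness count explicit rather than funnelling everything through the radial ODE, and it is also more careful on one small point the paper glosses over: the directions $[1:\pm\rmi]$ are genuinely unreachable by the orbit, which is harmless only because $\scV(G)$ is finite. All the supporting steps you use (relative primeness of $W_A,U_A$, $G(0,1)=\pm(w_1u_0-w_0u_1)$, $G\not\equiv0$ for non-radial potentials) check out against Propositions~\ref{pro:maxd} and~\ref{pro:inf}.
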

\begin{proof}
  From formulae~\eqref{eq:wtuvw} it follows that
  \begin{equation}
    \label{eq:u0u1}
    u_0=\widetilde u(0),\qquad u_1={\widetilde u\, }'(0), \mtext{and} 
    w_0=\widetilde w(0),\qquad w_1={\widetilde w\,}'(0)
  \end{equation}
  So, the condition that $g(z)$ has degree $r+s$ can be written in the
  following form

  \begin{equation}
    \label{eq:sz}
    t:= \widetilde w(0){\widetilde u\, }'(0) - {\widetilde
      w\,}'(0)\widetilde u(0)\neq 0.
  \end{equation}
  Let us take a potential $V_A(\vq):=V(A\vq)$ equivalent to
  $V(\vq)$. We assume that $A=\left[\begin{smallmatrix}a & b\\-b & a
  \end{smallmatrix}\right]
  $, with $a^2+b^2\neq 0$.

  Let us assume the $V$ is not radial and for an arbitrary
  $A\in\mathrm{PO}(2,\C)$, the polynomial $g_A(z)$ corresponding to
  $V_A$ has degree smaller that $r+s$. Since $v_A(z)=w_A(z)/u_A(z)$,
  it implies that
  \begin{equation}
    \label{eq:sza}
    t_A:= \widetilde w_A(0){{\widetilde u}\,}'_A(0) - {\widetilde
      w_A}'(0)\widetilde u_A(0)= 0, 
  \end{equation}
  for all $A\in\mathrm{PO}(2,\C)$. We have
  \begin{equation}
    \label{eq:uwa}
    \widetilde u_{A}(\zeta)=\left( a -b \zeta \right)^s
    u(\tau_{A}^{-1}(\zeta)), \qquad 
    \widetilde w_{A}(\zeta)=\left( a -b \zeta \right)^r w(\tau_{A}^{-1}(\zeta)),
  \end{equation}
  and simple calculations give
  \begin{equation}
    \label{eq:tuap}
    \widetilde u_A(0)= a^s u(x), \qquad \widetilde u_A'(0)= a^s \left[
      (1+x^2)u'(x) -sx u(x)  \right],
  \end{equation}
  and
  \begin{equation}
    \label{eq:twap}
    \widetilde w_A(0)= a^r w(x), \qquad \widetilde w_A'(0)= a^r \left[
      (1+x^2)w'(x) -rx w(x)  \right],
  \end{equation}
  where $x:=a/b\in\C$. Now, condition~\eqref{eq:sza} reads
  \begin{equation}
    \label{eq:szax}
    (1+x^2)(w'(x)u(x)-w(x)u'(x)) -(r-s)x w(x)u(x)=0 \mtext{for} x\in\C. 
  \end{equation}
  Dividing by $u(x)^2$ we obtain differential equation
  \begin{equation*}
    (1+x^2)v'(x) -k v(x)=0,  \qquad v(x)=\frac{w(x)}{u(x)}, \quad k=r-s, 
  \end{equation*}
  coinciding with equation~\eqref{eq:radz}. But from the proof of
  Proposition~\ref{pro:inf} we know that for odd $k$ this equation
  does not have a non-zero rational solutions, and for even $k$, the
  only rational solutions are radial. In any case we have a
  contradiction and this finishes the proof.
\end{proof}
The number of different Darboux points of the potential, if it is
finite, can be smaller, then maximal predicted by
Proposition~\ref{pro:maxd}.  It can happens in the following cases.
\begin{enumerate}
\item Potential $V$ has an improper Darboux point;
\item Polynomial $U$ has a multiple linear factor;
\item Polynomial $U(q_1,q_2)$ has a linear factor $(\pm\rmi q_1+q_2)$;
\item Potential $V$ has a multiple proper Darboux point.
\end{enumerate}
It appears that cases 1--3 occurs only if polynomials $g$ and $h$ have
a common root.
Here it is also worth to remark that an  improper Darboux point appears in a
non-generic
situation. In fact, if $z_{\star}$ is an improper Darboux point, then
it is a root of polynomial $g$ and $h$, so they are not relatively
prime.
\begin{proposition}
  \label{pro:ghcz}
  Assume that $k\neq 0$. Then polynomials $g(z)$ and $h(z)$ possess a
  common root $z_{\star}$ if and only if $z_{\star}$ is a multiple
  root of either $w(z)$, or $u(z)$.
\end{proposition}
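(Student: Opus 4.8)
The plan is to work with the explicit formulas for the polynomials $g$ and $h$ given in \eqref{funGH} and reduce the statement to a clean algebraic identity. The key observation is that both $g(z)$ and $h(z)$ are built from the same two building blocks: the product $p(z):=w(z)u(z)$ and the ``Wronskian-like'' combination $D(z):=w'(z)u(z)-u'(z)w(z)$. Indeed, from \eqref{funGH} we have $h=kp-zD$ and $g=(1+z^2)D-kzp$. First I would invert this linear relation to express $p$ and $D$ in terms of $g$ and $h$. Solving the $2\times 2$ system, the determinant is $(1+z^2)(-z)-(-kz)\cdot\!$ wait — more carefully, treating $(p,D)$ as unknowns, the coefficient matrix is $\bigl(\begin{smallmatrix} k & -z \\ -kz & 1+z^2\end{smallmatrix}\bigr)$ with determinant $k(1+z^2)-kz^2=k$. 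Since $k\neq 0$ by hypothesis, this system is invertible and we obtain
\begin{equation*}
  k\,p(z)=(1+z^2)h(z)+z\,g(z),\qquad k\,D(z)=k h(z)+(1+z^2) g(z)\cdot\tfrac{1}{\,?\,}
\end{equation*}
so that $g(z_\star)=h(z_\star)=0$ forces both $p(z_\star)=0$ and $D(z_\star)=0$. This is the crux: a common root of $g$ and $h$ is exactly a common root of $p=wu$ and $D$.

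Next I would interpret the two conditions $p(z_\star)=0$ and $D(z_\star)=0$. The equation $p(z_\star)=w(z_\star)u(z_\star)=0$ means $z_\star$ is a root of $w$ or of $u$. By the standing assumption in this section that $W$ and $U$ are relatively prime, $z_\star$ cannot be a root of both, so exactly one of them vanishes; say $w(z_\star)=0$ and $u(z_\star)\neq 0$ (the other case is symmetric). Then the second condition $D(z_\star)=w'(z_\star)u(z_\star)-u'(z_\star)w(z_\star)=w'(z_\star)u(z_\star)=0$ together with $u(z_\star)\neq 0$ forces $w'(z_\star)=0$. Thus $z_\star$ is a root of both $w$ and $w'$, which is precisely the statement that $z_\star$ is a multiple root of $w$. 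The symmetric argument with the roles of $w$ and $u$ interchanged handles the case $u(z_\star)=0$, giving a multiple root of $u$. This establishes the forward implication.

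For the converse, I would simply run the computation backwards. Suppose $z_\star$ is a multiple root of $w$ (the case of $u$ being identical by symmetry), so $w(z_\star)=w'(z_\star)=0$. Then immediately $p(z_\star)=w(z_\star)u(z_\star)=0$ and $D(z_\star)=w'(z_\star)u(z_\star)-u'(z_\star)w(z_\star)=0$. Substituting back into $h=kp-zD$ and $g=(1+z^2)D-kzp$ yields $h(z_\star)=g(z_\star)=0$, so $z_\star$ is indeed a common root. I expect the main obstacle to be purely bookkeeping: one must make sure that the hypothesis $k\neq 0$ is invoked exactly where the linear system becomes invertible (it is genuinely needed, since for $k=0$ the determinant vanishes and the equivalence can break down), and one must use coprimality of $W,U$ to rule out the degenerate possibility that $z_\star$ is a simple common root of $w$ and $u$ rather than a multiple root of one of them. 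No deep machinery is required beyond the change of basis between $\{g,h\}$ and $\{p,D\}$.
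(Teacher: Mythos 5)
Your proof is correct and follows essentially the same route as the paper: both arguments exploit that $g$ and $h$ are linear combinations of $p=wu$ and the Wronskian $D=w'u-u'w$ with determinant $k\neq 0$ (the paper uses the specific identity $zh+g=w'u-u'w$ and then evaluates $g$ and $h$ at $z_\star$ to get $kwu=0$), and both finish by invoking coprimality of $w$ and $u$. The only blemish is the garbled second display — the correct inversion is $kD(z)=kz\,h(z)+k\,g(z)$, i.e.\ $D=zh+g$ — but this does not affect the logic, since invertibility of the $2\times 2$ system alone already yields $p(z_\star)=D(z_\star)=0$.
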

\begin{proof}
  If $z_{\star}$ is a multiple root of $w(z)$, then
  $w(z_{\star})=w'(z_{\star})=0$ and formulae \eqref{funGH} give
  immediately $g(z_{\star})=h(z_{\star})= 0$.  Similarly, a multiple
  root of $u(z)$ means $u(z_{\star})=u'(z_{\star})=0$ and gives
  immediately $g(z_{\star})=h(z_{\star})= 0$.

  Now we assume that $z_{\star}$ is a common root of $g(z)$ and
  $h(z)$, i.e. $g(z_{\star})=h(z_{\star})= 0$. From formulae
  \eqref{funGH} it follows that
  \[
  zh(z)+g(z)=w'(z)u(z)-u'(z)w(z).
  \]

  Thus for $z=z_{\star}$ we obtain
  \begin{equation}
    w'(z_{\star})u(z_{\star})-u'(z_{\star})w(z_{\star})=0.
    \label{eq:naimprop}
  \end{equation}
  But then from \eqref{funGH}
  $0=g(z_{\star})=-kz_{\star}w(z_{\star})u(z_{\star})$ and
  $0=h(z_{\star})=kw(z_{\star})u(z_{\star})$. For $k\neq0$ we obtain
  that $w(z_{\star})=0$ or $u(z_{\star})=0$. If we choose
  $w(z_{\star})=0$, then \eqref{eq:naimprop} simplifies to
  $w'(z_{\star})u(z_{\star})=0$, that gives $w'(z_{\star})=0$ because
  $u$ and $w$ are relatively prime. It means that $z_{\star}$ is a
  multiple root of $w(z)$. If we choose $u(z_{\star})=0$, then
  \eqref{eq:naimprop} simplifies to $u'(z_{\star})w(z_{\star})=0$, that
  gives $u'(z_{\star})=0$ because $u$ and $w$ are relatively prime. As
  result $z_{\star}$ is a multiple root of $u(z)$.
\end{proof}
In our further considerations we need to know multiplicities of common
roots of $g$ and $u$, as well as, $g$ and $w$. The answer to this
question is given in the following two lemmas.
\begin{lemma}
  \label{lem:wzgz}
  Assume that $\mult(w,z_{\star})=l\geq 1$. Then
  \begin{equation}
    \label{eq:gmulu}
    \mult(g,z_{\star})=
    \begin{cases}
      > l & \mtext{if} z_{\star}\in  \{-\rmi, \rmi\}, \mtext{and} k=2l, \\
      = l & \mtext{if} z_{\star}\in  \{-\rmi, \rmi\}, \mtext{and} k\neq2l, \\
      = l-1 & \mtext{otherwise.}
    \end{cases}
  \end{equation}
\end{lemma}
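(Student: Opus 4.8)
The plan is to read off the order of vanishing of $g$ at $z_\star$ directly from the explicit formula \eqref{funGH}, by factoring out the appropriate power of $(z-z_\star)$ and inspecting when the leading coefficient degenerates. First I would note that since $w(z_\star)=0$ and $w,u$ are relatively prime, necessarily $u(z_\star)\neq 0$. Writing $w(z)=(z-z_\star)^l\widetilde w(z)$ with $\widetilde w(z_\star)\neq 0$ and differentiating, the term $w'u$ carries a factor $(z-z_\star)^{l-1}$ while $u'w$ carries $(z-z_\star)^l$, so the ``Wronskian'' combination factors as
\[
w'(z)u(z)-u'(z)w(z)=(z-z_\star)^{l-1}P(z),\qquad P(z_\star)=l\,\widetilde w(z_\star)u(z_\star)\neq 0 .
\]
Since $kz\,w(z)u(z)$ vanishes to order at least $l$ at $z_\star$, substituting into \eqref{funGH} yields
\[
g(z)=(z-z_\star)^{l-1}G(z),\qquad G(z_\star)=(1+z_\star^2)\,l\,\widetilde w(z_\star)u(z_\star).
\]

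The proof then splits according to whether $1+z_\star^2$ vanishes. If $z_\star\notin\{-\rmi,\rmi\}$, then $1+z_\star^2\neq 0$, hence $G(z_\star)\neq 0$ and $\mult(g,z_\star)=l-1$; this is the ``otherwise'' branch of \eqref{eq:gmulu}, and it already covers $z_\star=0$. The delicate case is $z_\star\in\{-\rmi,\rmi\}$, where $G(z_\star)$ degenerates and I must go one order deeper. Here I would use the factorisation $1+z^2=(z-z_\star)(z+z_\star)$, valid exactly when $z_\star^2=-1$, to extract an additional factor and write $g(z)=(z-z_\star)^l H(z)$ with $H(z)=(z+z_\star)P(z)-kz\,\widetilde w(z)u(z)$. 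A short evaluation gives
\[
H(z_\star)=z_\star(2l-k)\,\widetilde w(z_\star)u(z_\star).
\]
Because $z_\star\neq 0$ and $\widetilde w(z_\star)u(z_\star)\neq 0$, this is nonzero precisely when $k\neq 2l$, so $\mult(g,z_\star)=l$ in that subcase, whereas $k=2l$ forces $H(z_\star)=0$ and hence $\mult(g,z_\star)>l$. These three outcomes match \eqref{eq:gmulu} exactly.

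The computation is entirely elementary, and I would expect no genuine obstacle beyond careful bookkeeping. The only point that requires attention is the isotropic case $z_\star\in\{-\rmi,\rmi\}$: there the leading coefficient $(1+z_\star^2)l\,\widetilde w(z_\star)u(z_\star)$ vanishes, and one must track the next coefficient $H(z_\star)$ to separate $k=2l$ from $k\neq 2l$. It is precisely the factor $(2l-k)$ appearing in $H(z_\star)$ that produces the dichotomy in the first two lines of \eqref{eq:gmulu}.
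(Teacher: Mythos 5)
Your proposal is correct and follows essentially the same route as the paper: factor $w=(z-z_\star)^l\widetilde w$, pull out $(z-z_\star)^{l-1}$ from $g$, observe that the cofactor equals $l(1+z_\star^2)\widetilde w(z_\star)u(z_\star)$ at $z_\star$, and in the isotropic case use $1+z^2=(z-z_\star)(z+z_\star)$ to extract one more factor, arriving at the same leading coefficient $z_\star(2l-k)\widetilde w(z_\star)u(z_\star)$ that produces the $k=2l$ versus $k\neq 2l$ dichotomy. The only cosmetic difference is that you isolate the Wronskian factor $P(z)$ before substituting into $g$, whereas the paper substitutes directly; the computations are identical.
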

\begin{proof}
  We can write
  \begin{equation*}
    w=(z-z_{\star})^l\widetilde w  \qquad l, \in \N,
  \end{equation*}
  where $\widetilde w$ is a polynomial such that $\widetilde w(z_{\star})\neq
  0$. Inserting the above expression into the definition of $g$  we
  obtain 
  \begin{equation}
    \label{eq:gum}
    g(z)=(z-z_{\star})^{l-1}\widetilde g(z),
  \end{equation}
  where
  \begin{equation}
    \label{eq:wtg}
    \widetilde
    g=\left[l(z^2+1)-kz(z-z_{\star})\right]\widetilde w
    u+(z^2+1)(z-z_{\star})(\widetilde w' u-u'\widetilde w).
  \end{equation}
  Hence
  \begin{equation}
    \label{eq:gzs}
    \widetilde g(z_{\star})=l(z_{\star}^2+1)\widetilde w(z_{\star}) u(z_{\star}).
  \end{equation}
  Now, if $z_{\star}\not \in \{-\rmi,\rmi\}$, then $\widetilde
  g(z_{\star})\neq0$ because $l\neq 0$, $\widetilde w(z_{\star}) u(z_{\star})\neq
  0$, and $z_{\star}^2+1\neq 0$. Hence, in this case
  $\mult(g,z_{\star})=l-1$.

  Now, assume that $z_{\star}=\rmi$. Then we obtain
  \begin{equation}
    \label{eq:gum1}
    g(z)=(z-z_{\star})^{l-1}\widetilde g(z),
  \end{equation}
  where
  \begin{equation}
    \label{eq:wtg1}
    \widetilde
    g=\left[l(z+\rmi)-kz\right]\widetilde w
    u+(z^2+1)(\widetilde w' u-u'\widetilde w), 
  \end{equation}
  and thus
  \begin{equation*}
    g(\rmi)=\rmi l(2l-k)\widetilde w(\rmi)u(\rmi).
  \end{equation*}
  From the above formula, it follows directly that if $k\neq 2l$, then
  $\mult(g,\rmi)=l$, and otherwise $\mult(g,\rmi)>l$. Similar
  calculations for $z_{\star}=-\rmi$ give that if $k\neq 2l$, then
  $\mult(g,-\rmi)=l$, and otherwise $\mult(g,-\rmi)>l$.  In this way
  we conclude our proof.
\end{proof}
Notice that polynomials $u$ and $w$ enter symmetrically in the
definition~\eqref{funGH} of $g$. Hence we have also the following statement.
\begin{lemma}
  \label{lem:uzgz}
  Assume that $\mult(u,z_{\star})=l\geq 1$. Then
  \begin{equation}
    \label{eq:gmulu1}
    \mult(g,z_{\star})=
    \begin{cases}
      > l & \mtext{if} z_{\star}\in  \{-\rmi, \rmi\}, \mtext{and} k=2l, \\
      = l & \mtext{if} z_{\star}\in  \{-\rmi, \rmi\}, \mtext{and} k\neq2l, \\
      = l-1 & \mtext{otherwise.}
    \end{cases}
  \end{equation}
\end{lemma}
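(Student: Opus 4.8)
The plan is to prove this by the same computation that established Lemma~\ref{lem:wzgz}, now with the multiple root sitting in the denominator. The conceptual shortcut is the near-symmetry of $g$ in $w$ and $u$ alluded to above: in $g=(1+z^2)(w'u-u'w)-kzwu$ the product $wu$ is symmetric under $w\leftrightarrow u$ while the Wronskian $w'u-u'w$ is antisymmetric, and interchanging numerator and denominator turns the degree $k=r-s$ into $-k$. Hence the combined substitution $(w,u,k)\mapsto(u,w,-k)$ sends $g$ to $-g$, which has the same roots with the same multiplicities. Equivalently, one may apply Lemma~\ref{lem:wzgz} to the reciprocal potential $U/W$, whose numerator is $U$ and whose degree is $-k$; since $\mult(u,z_{\star})=l$ is exactly the multiplicity of that numerator, the statement about $\mult(g,z_{\star})$ transfers. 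To keep the bookkeeping transparent I would nevertheless redo the substitution directly rather than lean on the symmetry.

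Concretely, write $u=(z-z_{\star})^l\widetilde u$ with $\widetilde u(z_{\star})\neq0$. Substituting into~\eqref{funGH} gives $w'u-u'w=(z-z_{\star})^{l-1}[(z-z_{\star})(w'\widetilde u-\widetilde u'w)-l\widetilde u w]$ and $wu=(z-z_{\star})^l\widetilde u w$, so that $g=(z-z_{\star})^{l-1}\widetilde g$ for a cofactor with $\widetilde g(z_{\star})=-l(z_{\star}^2+1)\widetilde u(z_{\star})w(z_{\star})$. For $z_{\star}\notin\{-\rmi,\rmi\}$ this value is nonzero: here $l\geq1$ and $z_{\star}^2+1\neq0$, while $\widetilde u(z_{\star})\neq0$ by construction and $w(z_{\star})\neq0$ because $u$ and $w$ are relatively prime and $u(z_{\star})=0$. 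This settles the generic case, $\mult(g,z_{\star})=l-1$.

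The delicate point, and the step I expect to be the main obstacle, is the isotropic case $z_{\star}\in\{-\rmi,\rmi\}$. There the factor $z_{\star}^2+1$ annihilates the leading term of $\widetilde g$, so writing $1+z^2=(z-\rmi)(z+\rmi)$ one extracts a further power of $(z-z_{\star})$ and must evaluate the next coefficient. Collecting the surviving terms yields a value proportional to $(2l+k)\widetilde u(z_{\star})w(z_{\star})$ -- the mirror image, in the sign of $k$, of the coefficient $2l-k$ encountered in the proof of Lemma~\ref{lem:wzgz}, exactly as the substitution $(w,u,k)\mapsto(u,w,-k)$ predicts. Consequently the multiplicity equals $l$ whenever this coefficient does not vanish and jumps strictly above $l$ precisely on the borderline where it does. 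Keeping this sign straight, and separating the borderline degree from the generic one, is the only genuinely error-prone part; once it is resolved the three alternatives of~\eqref{eq:gmulu1} drop out, with $w$ and $u$ playing structurally identical roles to those in Lemma~\ref{lem:wzgz}.
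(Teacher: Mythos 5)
Your route is essentially the paper's own: the paper disposes of this lemma with the single remark that $u$ and $w$ enter symmetrically into \eqref{funGH}, and your refinement of that remark --- the true symmetry is $(w,u,k)\mapsto(u,w,-k)$, under which $g\mapsto -g$, because the Wronskian $w'u-u'w$ is antisymmetric and the reciprocal potential $U/W$ has degree $-k$ --- is the correct version of it. Your direct computation is also right: factoring $u=(z-z_{\star})^l\widetilde u$ gives $g=(z-z_{\star})^{l-1}\widetilde g$ with $\widetilde g(z_{\star})=-l(z_{\star}^2+1)\widetilde u(z_{\star})w(z_{\star})$, and at an isotropic $z_{\star}$ the next coefficient is proportional to $(2l+k)\widetilde u(z_{\star})w(z_{\star})$.

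This is exactly where your write-up contradicts itself, though. A coefficient $2l+k$ vanishes when $k=-2l$, so what your computation establishes is $\mult(g,z_{\star})>l$ iff $k=-2l$ and $\mult(g,z_{\star})=l$ iff $k\neq-2l$ --- not the alternatives of \eqref{eq:gmulu1} as printed, which carry the condition $k=2l$ verbatim from Lemma~\ref{lem:wzgz}. You cannot both obtain the ``mirror image in the sign of $k$'' of Lemma~\ref{lem:wzgz} and then assert that the unmirrored cases ``drop out.'' The resolution is that the printed condition in Lemma~\ref{lem:uzgz} is a sign error, and your computation is the correct one: the condition should read $k=-2l$, which is what the rest of the paper actually uses (condition C3 of Theorem~\ref{thm:rell2} excludes $s_{\pm}=-k/2$, and the residue $\frac{r_{\pm}-s_{\pm}}{k-2(r_{\pm}-s_{\pm})}$ computed in its proof blows up at $k=-2s_{\pm}$ when $r_{\pm}=0$). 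A concrete check: for $w=1$ and $u=(z-\rmi)(z-1)$ one has $k=-2$, $l=\mult(u,\rmi)=1$, and $g=-(1+\rmi)(z-\rmi)^2$, so $\mult(g,\rmi)=2>l$ even though $k\neq 2l$. State the lemma with $k=-2l$ and your argument closes; as written, your last sentence claims more than your (correct) calculation proves.
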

In our considerations we have to know multiplicities of common roots of
polynomials $g(z)$ and $h(z)$.
\begin{lemma}
  \label{lem:crhz}
  Assume that $z_{\star}$ is a common root of polynomials $g(z)$ and
  $h(z)$, and
  \begin{equation*}
    l:=\max \left\{ \mult(w,z_{\star}),  \mult(u,z_{\star}) \right\}.
  \end{equation*}
  Then
  \begin{enumerate}
  \item if $z_{\star}=0$, then
    $\mult(h,z_{\star})>\mult(g,z_{\star})=l-1$;
  \item if $z_{\star}\not\in\{-\rmi,0,\rmi\}$, then
    $\mult(h,z_{\star})=\mult(g,z_{\star})=l-1$;
  \item if $z_{\star}\in\{-\rmi,\rmi\}$, and $l\neq k/2$, then
    $\mult(g,z_{\star})=\mult(h,z_{\star})+1=l$;
  \item if $z_{\star}\in\{-\rmi,\rmi\}$, and $l= k/2$, then
    $\mult(g,z_{\star})>\mult(h,z_{\star})+1=l$.
  \end{enumerate}
\end{lemma}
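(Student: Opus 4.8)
The plan is to take $\mult(g,z_\star)$ for granted from Lemmas~\ref{lem:wzgz} and~\ref{lem:uzgz}, which already settle the behaviour of $g$, and to reduce the whole statement to a short computation of $\mult(h,z_\star)$ followed by the comparison of the two orders. First I would record the structural consequences of the hypotheses. Since $z_\star$ is a common root of $g$ and $h$ and $k\neq 0$, Proposition~\ref{pro:ghcz} shows that $z_\star$ is a multiple root of $w$ or of $u$; as $w$ and $u$ are relatively prime, it is a root of exactly one of them, so $l\geq 2$ and the other polynomial is nonzero at $z_\star$. It is convenient to abbreviate $P:=w'u-u'w$ and $Q:=wu$, so that by \eqref{funGH} one has $g=(1+z^2)P-kzQ$ and $h=kQ-zP$.

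Next I would compute the local orders of $P$ and $Q$ at $z_\star$. Assume $\mult(w,z_\star)=l$ and $u(z_\star)\neq 0$ (the case with $w$ and $u$ interchanged is identical, and citations to Lemma~\ref{lem:wzgz} are then replaced by Lemma~\ref{lem:uzgz}). Writing $w=(z-z_\star)^l\widetilde w$ with $\widetilde w(z_\star)\neq 0$, a direct differentiation gives
\[
Q=(z-z_\star)^l\widetilde w\, u, \qquad P=(z-z_\star)^{l-1}\left[ l\,\widetilde w\, u+(z-z_\star)(\widetilde w'u-u'\widetilde w)\right].
\]
Since $\widetilde w(z_\star)u(z_\star)\neq 0$, this yields $\mult(Q,z_\star)=l$ and $\mult(P,z_\star)=l-1$; moreover the value of the bracket at $z_\star$ equals $l\,\widetilde w(z_\star)u(z_\star)\neq 0$, which pins down the leading coefficient of $P$.

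With these orders in hand the four cases follow by inspecting $h=kQ-zP$. If $z_\star=0$, then $zP$ has order $1+(l-1)=l$ and $kQ$ has order $l$, so $\mult(h,0)\geq l$; combined with $\mult(g,0)=l-1$ from Lemma~\ref{lem:wzgz} (the ``otherwise'' branch, as $0\notin\{-\rmi,\rmi\}$) this gives item~1 without even computing the exact order of $h$. If $z_\star\neq 0$ then $\mult(z,z_\star)=0$, so $zP$ has order $l-1$ while $kQ$ has the strictly larger order $l$; hence $h$ inherits the order of $zP$, namely $\mult(h,z_\star)=l-1$, its leading coefficient being the nonzero $-z_\star\, l\,\widetilde w(z_\star)u(z_\star)$. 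For $z_\star\notin\{-\rmi,0,\rmi\}$ Lemma~\ref{lem:wzgz} gives $\mult(g,z_\star)=l-1$ as well, which is item~2; for $z_\star\in\{-\rmi,\rmi\}$ the same lemmas give $\mult(g,z_\star)=l$ when $l\neq k/2$ and $\mult(g,z_\star)>l$ when $l=k/2$, and comparison with $\mult(h,z_\star)=l-1$ produces items~3 and~4.

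The computation is short, and I expect the only genuine care to be needed at the two borderline configurations. At $z_\star=0$ the two summands of $h$ collide at the common order $l$, so one must resist computing $\mult(h,0)$ exactly: the statement only asserts $\mult(h,0)>l-1$, and reading off $\geq l$ from the two orders is both sufficient and robust against a possible cancellation of leading terms. At $z_\star=\pm\rmi$ the essential asymmetry is that the factor $(1+z^2)$ sits in $g$ but not in $h$, raising the order of $g$ by one (to $l$, or more when its leading terms cancel at $l=k/2$) while leaving $\mult(h,z_\star)=l-1$; this is precisely the mechanism behind items~3 and~4. The nonvanishing of the leading coefficient of $P$, hence the coprimality of $w$ and $u$, is what guarantees $\mult(h,z_\star)=l-1$ and must be used, whereas the refined threshold $l=k/2$ is already isolated in Lemmas~\ref{lem:wzgz} and~\ref{lem:uzgz} and need not be re-derived here.
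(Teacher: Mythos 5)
Your proposal is correct and follows essentially the same route as the paper: reduce via Proposition~\ref{pro:ghcz} to the case where $z_{\star}$ is a multiple root of exactly one of $w$, $u$, factor out $(z-z_{\star})^{l-1}$ from $h$ to find its leading coefficient $-l\,z_{\star}\,u(z_{\star})\widetilde w(z_{\star})$ (whence $\mult(h,z_{\star})=l-1$ for $z_{\star}\neq0$ and $>l-1$ at $z_{\star}=0$), and then import $\mult(g,z_{\star})$ from Lemmas~\ref{lem:wzgz} and~\ref{lem:uzgz}. The only cosmetic difference is that you organize the computation as a comparison of the orders of the two summands $kQ$ and $zP$ rather than writing out the cofactor $\widetilde h$ explicitly.
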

\begin{proof}
  By Proposition~\ref{pro:ghcz}, $z_{\star}$ is either a multiple root
  of $u(z)$, or a multiple root of $w(z)$. Let us assume that it is a
  multiple root of $w(z)$.  We can write
  \begin{equation*}
    w(z)=(z-z_{\star})^l\widetilde w(z), \qquad \widetilde
    w(z_{\star})\neq 0,
  \end{equation*}
  where $ \widetilde w(z)$ is a polynomial. Then, from definition of
  $h(z)$, see~\eqref{funGH}, we obtain
  \begin{equation*}
    h(z)=(z-z_{\star})^{l-1}\widetilde h(z), 
  \end{equation*}
  where
  \begin{equation*}
    \widetilde h(z)= k(z-z_{\star}) u(z)\widetilde w(z)-z \left[ l u(z)
      \widetilde w(z) + (z-z_{\star}) \left(u'(z)\widetilde w(z) + u(z){\widetilde w}'(z)\right)\right]. 
  \end{equation*}
  Hence
  \begin{equation*}
    \widetilde h(z_{\star})= -lz_{\star}u(z_{\star})  \widetilde
    w(z_{\star}).  
  \end{equation*}
  As $l>0$, $u(z_{\star}) \neq 0$, and $\widetilde w(z_{\star})\neq
  0$, we have that for $z_{\star}\neq 0$, $\mult(h,z_{\star})=l-1$,
  and $\mult(h,0)>l-1$.  Now, combining this with Lemma~\ref{lem:wzgz}
  we obtain the desired results.  In the case when $z_{\star}$ is a
  multiple root of polynomial $u(z)$ the proof is similar.
\end{proof}

Now we want to distinguish those potentials which do not admit any
proper Darboux point.
\begin{lemma}
  \label{lem:bezde}
  Let $V=W/U$ with relatively prime polynomials $W,U\in\C[\vq]$ of
  respective degrees $r$ and $s$ be a homogeneous rational potential of
  degree $k=r-s\neq 0$. If $V$ does not have any proper Darboux point,
  then it is equivalent to the potential
  \begin{equation}
    \label{eq:bezp}
    V(\vq)=c(q_1+\rmi q_2)^{\alpha}(q_1-\rmi q_2)^{\beta}, \quad
    \alpha+\beta=k, \quad \alpha,\beta\in \Z, \qquad c\in\C^{\star}.
  \end{equation}
  except for the case when $k=2l>0$, and either $W$, or $U$, has a
  factor $(q_1\pm\rmi q_2)$ with multiplicity~$l$.
\end{lemma}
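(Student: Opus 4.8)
The plan is to rewrite the condition \emph{no proper Darboux point} as a statement about the roots of $g(z)$ and then run a degree count. First I would dispose of the radial case. If $V$ were radial then, as computed in the proof of Proposition~\ref{pro:inf}, $V'(\vq)=(k/r^2)V(\vq)\vq$, so every non-isotropic $\vq$ with $V(\vq)\neq0$ is a \emph{proper} Darboux point; hence a potential with no proper Darboux point is non-radial. By Proposition~\ref{prop:reprezent} I may therefore pass to an equivalent representative with $\deg g=r+s$, and since an equivalence $V\mapsto V_A$ carries proper Darboux points to proper Darboux points bijectively, it suffices to show that this representative has the form~\eqref{eq:bezp}. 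With $\deg g=r+s$ all $r+s$ roots of the Darboux polynomial lie in the affine chart, so $[0:1]$ is not a Darboux point and every Darboux point is a root of $g$ with $u\neq0$.

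The local input I would isolate is a clean test for properness. From the identity $zh+g=w'u-u'w$ together with Proposition~\ref{pro:ghcz}, a root $z_\star$ of $g$ with $u(z_\star)\neq0$ is a common zero of $g$ and $h$ (hence improper) precisely when it is a \emph{multiple} root of $w$; in every other case $h(z_\star)\neq0$ and $z_\star$ is proper. This pins down exactly two sources of proper Darboux points: (A) roots of $g$ lying off $\{\pm\rmi\}$ that are roots of neither $w$ nor $u$, and (F) \emph{simple} roots of $w$ at $z=\pm\rmi$ (which are genuine roots of $g$ by Lemma~\ref{lem:wzgz} and, being simple, are not common zeros of $g,h$). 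Thus the hypothesis forces the complete absence of roots of type (A) and of type (F).

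Now I would count $\deg g=r+s$ root by root using Lemmas~\ref{lem:wzgz} and~\ref{lem:uzgz}. Writing $a,b$ for the numbers of distinct roots of $w,u$ lying off $\{\pm\rmi\}$ and $\rho_\pm,\sigma_\pm$ for the multiplicities of $w,u$ at $\pm\rmi$, each root off $\{\pm\rmi\}$ of multiplicity $m$ contributes $m-1$, so those roots contribute $(r-\rho_+-\rho_-)-a+(s-\sigma_+-\sigma_-)-b$, while $+\rmi$ and $-\rmi$ together contribute $\rho_++\sigma_++\rho_-+\sigma_-$ outside the exceptional subcases of Lemmas~\ref{lem:wzgz}--\ref{lem:uzgz}. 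Adding the multiplicity $N_A\geq0$ of the type-(A) roots and equating to $r+s$ gives, in this non-exceptional regime, $N_A=a+b$. Since the hypothesis forces $N_A=0$, we get $a=b=0$: both $w$ and $u$ are supported at $\pm\rmi$, i.e. $w=\gamma(z-\rmi)^{\rho_+}(z+\rmi)^{\rho_-}$ and $u=(z-\rmi)^{\sigma_+}(z+\rmi)^{\sigma_-}$, with $\rho_+\sigma_+=\rho_-\sigma_-=0$ by coprimality. Re-homogenising and using $(q_2-\rmi q_1)\propto(q_1+\rmi q_2)$ and $(q_2+\rmi q_1)\propto(q_1-\rmi q_2)$ yields $V=c(q_1+\rmi q_2)^{\alpha}(q_1-\rmi q_2)^{\beta}$ with $\alpha=\rho_+-\sigma_+$, $\beta=\rho_--\sigma_-$ and $\alpha+\beta=k$, which is~\eqref{eq:bezp}.

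Finally, the exceptional subcases of Lemmas~\ref{lem:wzgz}--\ref{lem:uzgz} are exactly where this count breaks, and they account for the stated exclusion. When $\pm\rmi$ is a root of $w$ with $k=2\rho_\pm$, or a root of $u$ with $k=-2\sigma_\pm$ (the borderline of conditions C2--C3 of Theorem~\ref{thm:rell2}), that point contributes an extra $e\geq1$ to $\deg g$, so the count becomes $N_A=a+b-e$; with $N_A=0$ this forces $a+b=e\geq1$, i.e. $w$ or $u$ must carry a factor off $\{\pm\rmi\}$ and the potential need not be of the form~\eqref{eq:bezp}. This is precisely the excluded situation in which $W$ (or $U$) has a factor $(q_2\pm\rmi q_1)$ of the critical multiplicity $l=|k|/2$. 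I expect the principal difficulty to be the bookkeeping at $z=\pm\rmi$: determining the exact surplus $e$ in each degenerate subcase, checking that $[0:1]$ genuinely drops out after the reduction to $\deg g=r+s$, and making sure the simple type-(F) factors at $\pm\rmi$ are counted as proper, so that the full force of the hypothesis (not merely $N_A=0$) is used.
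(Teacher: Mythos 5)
Your proposal is correct and follows essentially the same route as the paper's proof: pass to a representative with $\deg g=r+s$ via Proposition~\ref{prop:reprezent}, use Proposition~\ref{pro:ghcz} to show that every root of $g$ failing to be a proper Darboux point must be a root of $u$ or a multiple root of $w$, and then compare $\deg g=r+s$ with the multiplicity contributions supplied by Lemmas~\ref{lem:wzgz} and~\ref{lem:uzgz} to force every linear factor of $W$ and $U$ to be isotropic. The paper organises the same count as a case analysis on the number of distinct roots of $g$ (first $l'+m'\leq 2$, then $l'=l$, $m'=m$, then no roots off $\pm\rmi$) rather than your single identity $N_A=a+b$, but the substance is identical.
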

\begin{proof}
  From definitions~\eqref{zb2} it follows that if $V$ has no proper
  Darboux points, then all roots of polynomial $g$ are either roots of
  polynomial $u$, or polynomial $h$. If a root of $g$ is also a root
  of $h$, then by Proposition~\ref{pro:ghcz}, this common root is
  either a root of polynomial $u$, or a root of polynomial $w$. In
  effect, all roots of $g$ are either roots $u$, or roots of $w$.

  We notice, that the potential $V$ has only a finite number of
  Darboux points. In fact, for otherwise it is radial and has a proper
  Darboux point. Thanks to this fact, by
  Proposition~\ref{prop:reprezent}, we can assume that degree of
  polynomial $g$ is $r+s$.  Polynomials $u$ and $w$ have the
  form~\eqref{eq:wuab} with all $\alpha_i>1$ and all $\beta_i>1$. We
  can write polynomial $g$ in the following form
  \begin{equation}
    \label{eq:gnd}
    g(z)=\delta \prod_{i=1}^{l'}(z-a_i)^{\alpha'_i}\prod_{i=1}^{m'}(z-b_i)^{\beta'_{i}}
  \end{equation}
  with $l'\leq l$, $m'\leq m$.  From assumptions it follows that for
  $k=2l>0$, if $a_j\in\{-\rmi,\rmi\}$ for a certain $1\leq j\leq l'$,
  then $\alpha_j\neq l$. This is why, by Lemma~\ref{lem:wzgz},
  $\alpha'_j\leq\alpha_j$, for $1\leq j\leq l'$. By the same reason
  $\beta_j'\leq\beta_j$ for $1\leq j\leq m'$.

  Now, let us assume that $l'+m' >2$. Then, either there exists $1\leq
  j\leq l'$ such that $\alpha_j'=\alpha_j-1$, or there exists $1\leq
  j\leq m'$ such that $\beta_j'=\beta_j-1$. By
  Proposition~\ref{prop:reprezent}, we can assume that $\deg g = r+s$,
  thus from~\eqref{eq:gnd} we obtain
  \begin{equation}
    \label{eq:rs}
    r+s = \sum_{i=1}^{l'}\alpha_i' + \sum_{i=1}^{m'} \beta'_{i}\leq \sum_{i=1}^{l'}\alpha_i + \sum_{i=1}^{m'} \beta_{i}-1<\sum_{i=1}^{l}\alpha_i + \sum_{i=1}^{m} \beta_{i}=r+s.
  \end{equation}
  But this is impossible. This is why $l'+m' \leq 2$.

  If $l'+m'<l+m$, then
  \begin{equation*}
    r+s = \sum_{i=1}^{l'}\alpha_i' + \sum_{i=1}^{m'} \beta'_{i}<\sum_{i=1}^{l}\alpha_i + \sum_{i=1}^{m} \beta_{i}=r+s.
  \end{equation*}
  But it is impossible. So, in effect, $l'+m'=l+m$ and it is possible
  only when $l'=l$ and $m'=m$.

  If for a certain $1\leq i \leq l$, $a_i\not\in\{-\rmi, \rmi\}$, then
  $\alpha_i'=\alpha_i-1$, and
  \begin{equation*}
    r+s = \sum_{i=1}^{l}\alpha_i' + \sum_{i=1}^{m} \beta'_{i}<\sum_{i=1}^{l}\alpha_i + \sum_{i=1}^{m} \beta_{i}-1<r+s.
  \end{equation*}
  But it is impossible, so $a_i\in\{-\rmi,\rmi\}$, for $1\leq i, \leq
  l$. In a similar way we show that $b_i\in\{-\rmi,\rmi\}$, for $1\leq
  i, \leq m$. In effect the potential has the form~\eqref{eq:bezp},
  and this finishes our proof.
\end{proof}

Now we characterize potentials with multiple Darboux points.  We say
that $z_{\ast}$ is a multiple proper Darboux point if
$g(z_{\ast})=g'(z_{\ast})=0$, $h(z_{\ast})\neq 0$ and $u(z_{\ast})\neq
0$.
\begin{proposition}
  \label{propmult}
  If a homogeneous rational potential of degree homogeneity
  $k\in\Z^{\ast}$ with two degrees of freedom has a multiple proper
  non-isotropic Darboux point, then it is equivalent to a potential of
  the form \eqref{pot2zm} with coefficients satisfying the following
  conditions
  \begin{enumerate}
  \item $w_{r-1}u_s=u_{s-1}w_r$,
  \item $kw_ru_s=2\left(w_{r-2}u_s-w_ru_{s-2} \right)$,
  \item $kw_ru_s\neq 0$.
  \end{enumerate}
\end{proposition}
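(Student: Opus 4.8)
The plan is to exploit the $\mathrm{PO}(2,\C)$-equivalence to move the given multiple proper non-isotropic Darboux point to the origin $z=0$ of the affine chart, and then to read the three relations off the low-order Taylor data of $g$ and $h$ at $0$. Throughout I use the coefficient labelling of \eqref{pot2zm}, so that $w(0)=w_r$, $w'(0)=w_{r-1}$, $w''(0)=2w_{r-2}$ and likewise $u(0)=u_s$, $u'(0)=u_{s-1}$, $u''(0)=2u_{s-2}$, with the convention that coefficients carrying a negative index vanish (this only matters when $r<2$ or $s<2$).

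First I would normalise the point. Let $z_{\ast}$ be the multiple proper Darboux point and set $A=\left[\begin{smallmatrix}1&-z_{\ast}\\ z_{\ast}&1\end{smallmatrix}\right]$. Because $z_{\ast}$ is non-isotropic we have $a^2+b^2=1+z_{\ast}^2\neq0$, hence $A\in\mathrm{PO}(2,\C)$; its induced Möbius map is $\tau_A(z)=(z+z_{\ast})/(1-z_{\ast}z)$, so $\tau_A(0)=z_{\ast}$ and the equivalent potential $V_A$ has a Darboux point at $z=0$. The step that needs the most care is verifying that this point is again multiple and proper. Properness is an equivalence invariant, since it is just the condition $V'(\vd)\neq0$, which transforms covariantly. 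For the multiplicity I would use that $F(\vq)=\vq\wedge V'(\vq)$ is covariant: a one-line computation with $A^TA=(a^2+b^2)\vE_2$ gives $F_A=F\circ A$, whence by \eqref{eq:ra} its dehomogenisation satisfies $f_A(z)=(a+bz)^k f(\tau_A(z))$. Near $z=0$ the factor $(a+bz)^k=(1-z_{\ast}z)^k$ is non-vanishing and $\tau_A$ is a local biholomorphism with $\tau_A'(0)=1+z_{\ast}^2\neq0$, so $f_A$ vanishes at $0$ to exactly the order to which $f$ vanishes at $z_{\ast}$, namely order $\geq2$. Since $g=fu^2$ and $u_A(0)=a^s u(z_{\ast})\neq0$, the polynomial $g_A$ inherits this, i.e. $g_A(0)=g_A'(0)=0$. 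Everything here hinges on the covariance $F_A=F\circ A$ together with the non-vanishing of the Möbius factor and of $u$ at the point, both guaranteed precisely by non-isotropy and properness.

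Replacing $V$ by $V_A$ and dropping the subscript, I may thus assume the potential has a multiple proper Darboux point at $z=0$, i.e. $g(0)=g'(0)=0$, $h(0)\neq0$ and $u(0)\neq0$. It then remains to evaluate these from \eqref{funGH}. Writing $P=w'u-u'w$, so that $g=(1+z^2)P-kzwu$ and $h=kwu-zP$, a direct computation gives
\begin{equation*}
  g(0)=P(0)=w_{r-1}u_s-u_{s-1}w_r,\qquad h(0)=k\,w(0)u(0)=k\,w_ru_s,
\end{equation*}
and, using $g'=2zP+(1+z^2)P'-kwu-kz(wu)'$ together with $P'=w''u-u''w$,
\begin{equation*}
  g'(0)=P'(0)-k\,w_ru_s=2\left(w_{r-2}u_s-w_ru_{s-2}\right)-k\,w_ru_s.
\end{equation*}
Now $g(0)=0$ yields condition~(1), $w_{r-1}u_s=u_{s-1}w_r$; the equation $g'(0)=0$ yields condition~(2), $k\,w_ru_s=2(w_{r-2}u_s-w_ru_{s-2})$; and $h(0)\neq0$ yields condition~(3), $k\,w_ru_s\neq0$, which in turn (since $k\neq0$) recovers the proper-point requirement $u(0)=u_s\neq0$. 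This produces a representative of the form \eqref{pot2zm} satisfying the three stated conditions, proving the proposition.
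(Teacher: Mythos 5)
Your proposal is correct and follows essentially the same route as the paper: rotate the non-isotropic multiple proper Darboux point to $z_{\ast}=0$ via a $\mathrm{PO}(2,\C)$ element and read the three conditions off $g(0)=g'(0)=0$, $h(0)\neq0$, $u(0)\neq0$ using \eqref{funGH}. The only difference is that you explicitly verify the covariance $f_A(z)=(a+bz)^k f(\tau_A(z))$ and hence that multiplicity and properness are preserved under the rotation, a point the paper's proof leaves implicit.
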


\begin{proof}
  Let us calculate $g'(z)$
  \begin{equation}
    g'(z)=(1+z^2)(w''(z)u(z)-u''(z)w(z))+(2-k)z(w'(z)u(z)-u'(z)w(z))-kw(z)u(z).
    \label{eq:difg}
  \end{equation}
  From above-mentioned considerations we know that at a multiple
  proper Darboux point we have $g(z_{\star})=g'(z_{\star})=0$ and
  $h(z_{\star})\neq 0$.  Because, by assumption, the considered Darboux
  point is not isotropic we can move it using a rotation into a point
  with affine coordinate $z_{\star}=0$.  Using explicit forms of
  polynomial $w(z)$ and $u(z)$ substituted into conditions
  $g(0)=g'(0)$, $h(0)\neq 0$ and $u(0)\neq0$ we obtain the thesis of
  this proposition.
\end{proof}

We recall that $z_{\star}$ is a multiple improper Darboux point iff
$g(z_{\star})=g'(z_{\star})=h(z_{\star})=0$ and $u(z_{\star})\neq0$.
\begin{proposition}
  \label{midp}
  If a homogeneous rational potential of degree homogeneity
  $k\in\Z^{\star}$ with two degrees of freedom has a multiple improper
  non-isotropic Darboux point, then it is equivalent to a potential of
  the form \eqref{pot2zm} with coefficients satisfying the following
  conditions
  \[
  w_r=w_{r-1}=w_{r-2}=0,\quad \text{and}\quad u_s\neq 0.
  \]
\end{proposition}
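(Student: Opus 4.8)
The plan is to mimic the proof of Proposition~\ref{propmult}, the only change being that the defining inequality $h(z_\star)\neq 0$ of a \emph{proper} multiple point is now replaced by the equality $h(z_\star)=0$; this single change cascades through the computation and kills one further coefficient of $w$. First I would use the non-isotropy hypothesis to normalise the location of the point. Since $z_\star$ is non-isotropic, $z_\star\notin\{-\rmi,\rmi\}$; as $\pm\rmi$ are the only points of $\CP^1$ fixed by the action of $\mathrm{PO}(2,\C)$, there is a rotation $A\in\mathrm{PO}(2,\C)$ carrying $z_\star$ to the affine coordinate $0$. Replacing $V$ by the equivalent potential $V_A$ (which by \eqref{eq:ra} is again of the form \eqref{pot2zm}, with the same degrees $r,s$ and with $W,U$ still coprime), I may assume $z_\star=0$ from the outset. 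The conditions for a multiple improper Darboux point then read $g(0)=g'(0)=h(0)=0$ together with $u(0)\neq 0$.

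Next I would evaluate $h$, $g$ and $g'$ at the origin using the explicit forms \eqref{funGH} and \eqref{eq:difg}, together with the identities $w(0)=w_r$, $w'(0)=w_{r-1}$, $w''(0)=2w_{r-2}$ and $u(0)=u_s$, obtaining
\[
h(0)=k\,w_r u_s,\qquad g(0)=w_{r-1}u_s-u_{s-1}w_r,\qquad g'(0)=2\!\left(w_{r-2}u_s-u_{s-2}w_r\right)-k\,w_r u_s.
\]
Because $u(0)=u_s\neq 0$ and $k\neq 0$, the equation $h(0)=0$ forces $w_r=0$. Substituting this into $g(0)=0$ leaves $w_{r-1}u_s=0$, hence $w_{r-1}=0$; and substituting $w_r=w_{r-1}=0$ into $g'(0)=0$ leaves $2w_{r-2}u_s=0$, hence $w_{r-2}=0$. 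This gives precisely the asserted conclusion $w_r=w_{r-1}=w_{r-2}=0$ with $u_s\neq 0$. (Invariantly, this says $z_\star$ is a root of $w$ of multiplicity at least three, consistent with Lemma~\ref{lem:wzgz}, since a double root of $g$ at a non-isotropic point requires $\mult(w,z_\star)\geq 3$.)

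Since every step after the reduction is a one-line substitution, there is no genuine obstacle here. The only point demanding care is the legitimacy of the rotation sending $z_\star$ to the origin, and this is exactly where the non-isotropy hypothesis enters: it guarantees $z_\star\neq\pm\rmi$, so that $z_\star$ lies off the fixed locus of $\mathrm{PO}(2,\C)$ and can be moved to $0$ by a rotation that preserves both the degrees and the coprimality of $W$ and $U$. This is the same reduction already carried out in the proof of Proposition~\ref{propmult}.
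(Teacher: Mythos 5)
Your proof is correct and follows essentially the same route as the paper: normalise the non-isotropic point to $z_\star=0$ by a rotation, then solve the linear system $g(0)=g'(0)=h(0)=0$ under $k\neq 0$ and $u_s\neq 0$, which is exactly the paper's argument. Your version merely spells out the two details the paper leaves implicit (why non-isotropy permits the rotation to $0$, and the order in which the three equations cascade to give $w_r=w_{r-1}=w_{r-2}=0$), and both the evaluations and the conclusion agree with the paper's.
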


\begin{proof}
  As in the previous proposition we can assume without loss of
  generality that $z_{\star}=0$. Then conditions $g(0)=g'(0)=h(0)$ give
  the following system
  \[
  w_{r-1}u_s=u_{s-1}w_r,\qquad kw_ru_s=2\left(w_{r-2}u_s
    -w_ru_{s-2}\right),\qquad kw_ru_s= 0,
  \]
  and taking into account that, by assumption $k\neq 0$ and
  $u(0)=u_s\neq 0$ the solution of this system gives the thesis of our
  proposition.
\end{proof}
An interested reader can formulate analogous propositions for potentials with
multiple isotropic Darboux points. Without loss and generality
one can assume that the affine coordinate of an isotropic Darboux point is
$z_{\star}=\rmi$.

\subsection{Relations between non-trivial eigenvalues at various
  proper Darboux points}
\label{NFI}
We start this section with calculation of non-trivial eigenvalue of
$V''(\vd)$ in terms of affine coordinate $z$ of Darboux point $[\vd]$.
We assume that the considered Darboux point is proper and it
satisfies $V'(\vd)=\gamma\vd$ with $\gamma\in\C^{\star}$.  Hence, our
aim is to evaluate function
\begin{equation}
  \label{eq:lq}
  \lambda=\gamma^{-1}\tr V''(\vq)-(k-1)
\end{equation}
at a point $\vq\neq\vzero$ satisfying $V'(\vq)=\gamma\vq$, and express
the result by means of the affine coordinate $z=q_2/q_1$.

Since $V(q_1, q_2)=q_1^kv(z)$ we can express partial derivatives of
$V$ at point $(q_1,q_2)$ in terms of variables $(q_1,z)$. They have
the following forms
\begin{equation}
  \begin{split}
    &\dfrac{\partial V}{\partial
      q_1}=q_1^{k-1}\left(kv-zv'\right),\qquad
    \dfrac{\partial V}{\partial q_2}=q_1^{k-1}v',\\
    &\dfrac{\partial^2 V}{\partial q_1^2}=q_1^{k-2}\left\{(k-1)\left(
        kv-zv' \right)+z\left(zv''-(k-1)v'\right)\right\},\qquad
    \dfrac{\partial^2 V}{\partial q_2^2}=q_1^{k-2}v''.
  \end{split}
\end{equation} 
As $v=w/u $, the first component of equality $V'(\vq)=\gamma \vq$
gives
\[
\gamma=q_1^{k-2}\left(kv-zv'\right)=\dfrac{kwu-z(w'u-wu')}{u^2}=\dfrac{h}{u^2},
\]
so we have immediately that $q_1^{k-2}=\gamma u^2/h$. Then
\[
\begin{split}
  &\lambda=\gamma^{-1}\tr V''(\vq)-(k-1)=\gamma^{-1}\left\{
    q_1^{k-2}[(k-1)(kv-zv')]+q_1^{k-2}[(1+z^2)v''-(k-1)zv']
  \right\}\\
  &-(k-1)=\dfrac{u^2}{h}[(1+z^2)v''-(k-1)zv'].
\end{split}
\]
Taking into account that
\[
v'=\dfrac{w'u-wu'}{u^2},\quad
v''=\dfrac{w''u-wu''}{u^2}+\dfrac{2u'}{u^3}(wu'-w'u)
\]
we obtain
\[
\begin{split}
  &\lambda=\dfrac{u^2}{h}\left\{
    \dfrac{(1+z^2)(w''u-wu'')-(k-1)z(w'u-wu')}{u^2}-\dfrac{2u'(1+z^2)(w'u-wu')}{u^3}
  \right\}.
\end{split}
\]
But the above expression is evaluated only at proper Darboux points,
so $g=0$ at these points. Thus, we have $(1+z^2)(w'u-wu')=kzwu$, and
the final form is the following
\begin{equation}
  \lambda=\dfrac{1}{h}\left[(1+z^2)(w''u-wu'')-(k-1)zw'u-(k+1)zu'w\right]=
  \dfrac{g'}{h}+1.
\end{equation} 
The last equality one can verify by the direct check. We see that if a
proper Darboux point is multiple, then $\lambda=1$.

If the considered Darboux point $z_{\star}$ is simple, then the
inverse of the corresponding shifted eigenvalue
$\Lambda_{\star}=\lambda_{\star}-1$
is given by the following formula
\begin{equation}
  \label{eq:rat}
  \frac{1}{\Lambda_{\star}}=\frac{h(z_{\star})}{g'(z_{\star})}. 
\end{equation}
It is crucial to notice that in the right hand side of the above formula is the residue of rational
function $h(z)/g(z)$ at point $z=z_{\star}$.

\subsubsection{Proof of Theorem~\ref{thm:rel}}

We choose such a representative potential that the polynomial $g$
given by \eqref{funGH} has degree $r+s$, see
Proposition~\ref{prop:reprezent} and comments after it.  This
guarantees that all Darboux points are located in the affine part of
the projective line $\CP^1$, and are roots of $g$.  Next, we define a
meromorphic differential form $\omega$ which in the affine part of
$\CP^1$ is given by
\[
\omega(z)=\dfrac{h(z)}{g(z)}\operatorname{d}z,
\]
and, in a neighbourhood of infinity, in local coordinate $\zeta=1/z$
it has the form
\begin{equation}
  \label{eq:ot}
  \widetilde
  \omega(\zeta)=\omega\left(\dfrac{1}{\zeta}\right)=-\dfrac{h(1/\zeta)}{g(1/\zeta)
  } \dfrac{ \operatorname { d } \zeta } { \zeta^2}.
\end{equation}
At first we assume that polynomials $g$ and $h$ are relatively prime
and all roots of $g$ are simple. This implies that $l=r+s$ simple
roots of $g$ are proper Darboux points.  Taking into account the above
facts we can calculate residues $\res(\omega,z_i)$ for
$i=1,\ldots,r+s$
\begin{equation}
  \label{lambdaGH}
  \res(\omega,z_i)=\dfrac{h(z_i)}{g'(z_i)}=\dfrac{1}{\Lambda(z_i)}=
  \dfrac{1}{\Lambda_i}.
\end{equation}
Now, we calculate residue at infinity
$\res(\omega,\infty)=\res(\widetilde\omega,0)$.  Let us write
polynomials $g$ and $h$ in the following form
\begin{equation}
  \label{eq:}
  g=\sum_{i=0}^{r+s}g_{r+s-i}z^i, \qquad h=\sum_{i=0}^{r+s-1}h_{r+s-1-i}z^i,.
\end{equation}
Using definition \eqref{funGH} we find that
\begin{equation}
  \label{eq:g0h0}
  g_0=-h_0= u_1w_0-u_0w_1\neq 0, 
\end{equation}
see Proposition~\ref{pro:maxd}.

Now, we rewrite \eqref{eq:ot} in the following form
\begin{equation}
  \label{eq:ott}
  \widetilde\omega(\zeta)=  \frac{A(\zeta)}{\zeta} \rmd \zeta,
  \mtext{where}A(\zeta):= -\dfrac{h(1/\zeta)}{\zeta g(1/\zeta)
  } 
\end{equation}
We show that $A(0)=1$. In fact, we can write
\begin{equation*}
  A(\zeta)=\frac{a(\zeta)}{b(\zeta)}, \qquad
  a(\zeta):=-\zeta^{r+s-1}h(1/\zeta), \quad b(\zeta):=\zeta^{r+s}g(1/\zeta)
\end{equation*}
Clearly, $a(\zeta)$ and $b(\zeta)$ are polynomials and
$a(0)=b(0)=g_0$. Hence, $A(0)=1$ as we claimed.
\[
\res(\omega,\infty)=\res(\widetilde\omega,0)= 1 .
\]
According to the global residue theorem we have
\[
\sum_{i=1}^{r+s}\dfrac{1}{\Lambda_i}=\sum_{i=1}^{r+s}\res(\omega,
z_i)=-\res(\omega,\infty)=-1.
\]
In this way we proved Theorem~\ref{thm:rel} for $l=r+s$.

Now, we consider the general case.  Let $z_1, \ldots, z_l$ denotes
roots of $g$ corresponding to proper Darboux points.  By assumption
C2, if $g(z_{\star})=0=u(z_{\star})$, then $z_{\star}\not \in\{-\rmi,\rmi\}$.
Them, since $\mult(g,z_{\star})\geq1$ by Lemma~\ref{lem:uzgz} and assumption
C2, $\mult(u,z_{\star})=1+\mult(g,z_{\star})>1$. So, $z_{\star}$ is a multiple
root of $u(z)$. Thus by
Proposition~\ref{pro:ghcz}, $z_{\star}$ is  a root of $h(z)$.  In
effect, all roots of $g(z)$ which are not proper Darboux points are
also roots of $h(z)$.  By assumptions $C2$, $C3$, and
Lemma~\ref{lem:crhz} we know that if $z_{\star}$ is a common root of
$g$ and $h$, then $\mult(g,z_{\star}) \leq \mult(h,z_{\star})
$. Hence, we have $g=f\bar{g}$, and $h=f\bar{h}$, where $f$, $\bar{g}$
and $\bar{h} $ are polynomials, and $\bar{g}(z_{\star}) \neq 0$.  We
denote also $\Lambda_i=\Lambda(z_i)$ for $i=1,\ldots,l$.  Now, the
form $\omega$ reads
\[
\omega(z)=\dfrac{h(z)}{g(z)}\operatorname{d}z=
\dfrac{\bar{h}(z)}{\bar{g}(z)}\operatorname{d}z.
\]
It has poles at $z_1,\ldots, z_l$ and at the infinity. Notice that we
have
\[
\dfrac{1}{\Lambda_i}=
\dfrac{h(z_i)}{g'(z_i)}=\dfrac{f(z_i)\bar{h}(z_i)}{f'(z_i)\bar{g}(z_i)+f(z_i)\bar{g}'(z_i)}=\dfrac{\bar{h}(z_i)}{\bar{g}'(z_i)}=\res(\omega,z_i),\quad
i=1,\ldots,l,
\]
where $f(z_i)\neq 0$ and $\bar{g}(z_i)=0$ for $i=1,\ldots, l$.
Additionally, we have also $\res(\omega,\infty)=1$, hence by the
global residue theorem we obtain~\eqref{eq:relka1} and this finishes
the proof.

\subsubsection{Proof of Theorem~\ref{thm:rell2} }
If $s_{+}=s_{-}=0$, and $r_{+},r_{-}\in\{0,1\}$, then assumptions of
Theorem~\ref{thm:rel} are satisfied.  So, in this case our theorem is
valid.

Now, we consider general case. We can write
\begin{equation}
  \label{eq:wupm}
  w(z)=(z-\rmi)^{r_{+}}(z+\rmi)^{r_{-}}\widetilde w(z), \qquad
  u(z)=(z-\rmi)^{s_{+}}(z+\rmi)^{s_{-}}\widetilde u(z), 
\end{equation}
where $\widetilde w(z)$ and $\widetilde u(z)$ are polynomials, and
$\widetilde w(\pm\rmi)\neq 0 $, $\widetilde u(\pm\rmi)\neq 0 $. As we
assumed that $u(z)$ and $w(z)$ are relatively prime, at most one number
in pairs $(r_{+},s_{+})$ and $(r_{-},s_{-})$ is different from zero.

Calculating function $g(z)$ and $h(z)$ using \eqref{eq:wupm} we obtain
\begin{equation}
  \label{bar}
  \begin{split}
    g(z)=&(z-\operatorname{i})^{r_++s_{+}}(z+\operatorname{i})^{r_-+s_{-}}
    \bar{g}(z),\\
    h(z)=&(z-\operatorname{i})^{r_++s_{+}-1}(z+\operatorname{i})^{r_-+s_{-}-1}
    \bar{h}(z),
  \end{split}
\end{equation}
where
\begin{equation}
  \begin{split}
    \bar{g}(z)=&(1+z^2)(\widetilde{w}'(z)\widetilde{u}(z)-\widetilde{u}
    '(z)\widetilde {w}(z))\\ &+\left[
      (r_+-s_+)(z+\rmi)+(r_--s_-)(z-\rmi)-kz
    \right]\widetilde{w}(z)\widetilde{u}(z),\\
    \bar{h}(z)=&\left\{
      k(z^2+1)-z\left[(r_+-s_+)(z+\rmi)+(r_--s_-)(z-\rmi)\right]\right\}\widetilde
    {w}(z)\widetilde{u}(z)\\
    & - z(z^2+1)(\widetilde{w}'(z)\widetilde{u}(z)-\widetilde{u}
    '(z)\widetilde {w}(z)).
  \end{split}
  \label{eq:ghbar}
\end{equation}
Thanks to above equations we obtain differential form $\omega(z)$ as
follows
\[
\omega(z)=\dfrac{h(z)}{g(z)}\operatorname{d}z=\dfrac{\bar{h}(z)}{g_1(z)}
\operatorname{d}z,\quad \text{where} \quad g_1(z)=(1+z^2)\bar{g}(z)
\]
and polynomials $\bar{h}(z)$, $\bar{g}(z)$ are relatively prime.

Notice that all finite poles of form $\omega(z)$ are all proper
Darboux points $z_i$, infinity and $\pm\rmi$. Under assumptions of
Theorem~\ref{thm:rell2}, all poles are simple.  Calculations of
residues at proper Darboux points as well as at infinity are similar
to these in the proof of the previous theorem and we obtain
\[
\operatorname{res}(\omega,z_i)=\dfrac{1}{\Lambda_i},\qquad
\operatorname{res}(\omega,\infty)=1.
\]

Residues at $z=\pm\rmi$ are calculated in the following way
\[
\operatorname{res}(\omega,\pm\rmi)=\dfrac{\bar{h}(\pm\rmi)}{g_1'(\pm\rmi)}.
\]
Because
\[
g_1'(z)=2z\bar{g}(z)+(z^2+1)\bar{g}'(z),
\]
thus we obtain
\[
g_1'(\pm\rmi)=\pm2\rmi \bar{g}(\pm\rmi)
\]
but
\[
\begin{split}
  &\bar
  g(\rmi)=2\rmi\left(r_+-s_+-\dfrac{k}{2}\right)u(\rmi)\widetilde{w}(\rmi),\qquad
  \qquad
  \bar{h}(\rmi)=2(r_{+}-s_{+})u(\rmi)\widetilde{w}(\rmi),\\
  &\bar
  g(-\rmi)=-2\rmi\left(r_--s_--\dfrac{k}{2}\right)u(-\rmi)\widetilde{w}(-\rmi),
  \quad \bar{h}(-\rmi)=-2 (r_{-}-s_-)u(-\rmi)\widetilde{w}(-\rmi).
\end{split}
\]
Now we finish calculation of the residues
\[
\operatorname{res}(\omega,\pm\rmi)=\dfrac{\bar{h}(\pm\rmi)}{\pm2\rmi
  \bar g(\pm\rmi)}=\dfrac{r_{\pm}-s_{\pm}}{k-2(r_{\pm}-s_{\pm})}.
\]
Applying the global residue theorem we obtain
relation~\eqref{eq:relka2}.

\subsection{Finiteness of admissible spectra for integrable
  potentials}
\label{sec:finiteness}

At first we recall the notion of a limit point of a set.  Let $T$ be a
topological space and $A$ a subset of $T$. A point $x\in T$ is a limit
point of $A$, iff in every open neighbourhood $U$ contains points of
$A$, i.e., $U\cap A\neq\emptyset$.

We start this section with a certain technical lemma.
 Let $\scX$ be a discrete subset of $\R_+=(0,\infty)$ such that its
  only limit point is 0. We consider the following equation
  \begin{equation}
    X_1+\cdots+X_m=c,\qquad c>0,
    \label{eq:eqX}
  \end{equation} 
  and we look for its solutions $\vX=(X_1,\ldots,
  X_m)\in\underbrace{\scX \times \cdots \times \scX}_{m}=\scX^m$.  We
  prove the following lemma.
  \begin{lemma}
    \label{lem:dum}
    For arbitrary $c>0$ equation~\eqref{eq:eqX} has at most a finite
    number of solutions in $\scX^m$.
  \end{lemma}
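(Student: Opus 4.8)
The plan is to prove this by induction on $m$, combined with the key topological fact that $\scX$ can accumulate only at $0$. First I would observe that because $\scX \subset (0,\infty)$ and its only limit point is $0$, for every $\delta > 0$ the set $\scX \cap [\delta, \infty)$ is finite. Indeed, a bounded infinite subset of $\R$ must have a limit point (Bolzano--Weierstrass), and any such limit point would lie in $[\delta,\infty)$, contradicting that $0$ is the only limit point of $\scX$. This finiteness of the ``tail away from zero'' is the engine of the whole argument.

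The base case $m=1$ is immediate: equation \eqref{eq:eqX} reads $X_1 = c$, which has at most one solution. For the inductive step, suppose the claim holds for $m-1$ summands and consider solutions $\vX = (X_1,\ldots,X_m) \in \scX^m$ of $X_1 + \cdots + X_m = c$. Since every $X_i > 0$, each coordinate satisfies $0 < X_i < c$; moreover, because the $m$ positive numbers sum to $c$, at least one of them is $\geq c/m$. Fix an index $j$ and restrict attention to solutions in which $X_j \geq c/m$. By the observation above, $X_j$ can take only finitely many values in $\scX \cap [c/m, c)$; call this finite set $\scX_j$. For each fixed value $X_j = \xi \in \scX_j$, the remaining coordinates satisfy $\sum_{i \neq j} X_i = c - \xi$ with $c - \xi > 0$, which by the inductive hypothesis has only finitely many solutions in $\scX^{m-1}$.

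Combining these counts, the number of solutions with $X_j \geq c/m$ is finite: it is bounded by $\sum_{\xi \in \scX_j} N(c-\xi, m-1)$, a finite sum of finite quantities. Finally, every solution $\vX$ has at least one coordinate $X_j \geq c/m$, so the full solution set is contained in the union over $j = 1, \ldots, m$ of the sets just shown to be finite. A finite union of finite sets is finite, which completes the induction and proves the lemma.

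I do not expect a serious obstacle here; the argument is elementary once the accumulation hypothesis is correctly exploited. The one point requiring care is the passage from ``$0$ is the only limit point'' to ``$\scX$ is bounded away from $0$ on any interval $[\delta,\infty)$'': one must invoke that $\scX$ is a subset of $\R$ and use Bolzano--Weierstrass rather than assuming discreteness gives this directly, since a discrete set can still be unbounded. The pigeonhole step (some coordinate is at least $c/m$) and the induction bookkeeping are routine.
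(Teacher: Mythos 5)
Your proof is correct and follows essentially the same route as the paper: induction on $m$, the pigeonhole observation that some coordinate is at least $c/m$, finiteness of $\scX$ away from its unique limit point $0$, and the inductive hypothesis applied to the remaining $m-1$ coordinates. The only nitpick is that your preliminary claim that $\scX\cap[\delta,\infty)$ is finite is not literally implied by the hypotheses when $\scX$ is unbounded, but this is harmless since you only ever apply it to the bounded set $\scX\cap[c/m,c)$, where your Bolzano--Weierstrass argument is valid.
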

  \begin{proof}
    We prove this lemma by induction with respect to $m$. For $m=1$
    the statement of the lemma is evidently true. So, assume that it
    is true for $m-1$. We have to show that it is true for $m$.

    First we notice that if $\vX=(X_1,\ldots,X_m)$ is a solution of
    equation \eqref{eq:eqX}, than at least one its component, let us
    say $X_m$, is not smaller than $c/m$.  In fact, if we have
    $X_i<c/m$, for $i=1,\ldots,m$, then
    \[
    c=X_1+\ldots+X_m<m\dfrac{c}{m}=c.
    \]
    This contradiction proves our claim. Hence without loss of the
    generality we can assume that $X_m\geq c/m$ and we rewrite
    equation \eqref{eq:eqX} in the following form
    \begin{equation}
      \label{x2}
      X_1+\ldots+X_{m-1}=c-X_m.
    \end{equation}
    Now, it is enough to notice that $X_m$ belongs to the set
    $\{X\in\cX\ |\ X>c/m\}$ which is finite because $\scX$ is discrete
    and its only limit point is 0. Thus we have only a finite number
    of possible choices for $X_m$. For each such choice, equation
    \eqref{x2}, considered as equation with unknown
    $(X_1,\ldots,X_{m-1})$ has, by the induction assumption, a finite
    number of solutions. Thus, the number of solution of equation
    \eqref{eq:eqX} is finite.
  \end{proof}

We recall that   $\mathscr{M}_k$ denotes  a subset of rational numbers
$\lambda$
specified by the table in Theorem~\ref{thm:MoRa} for a given $k$,
e.g., for $\abs{k}>5$ we have
\begin{equation}
  \mathscr{M}_k=\defset{ p + \dfrac{k}{2}p(p-1)}{ p\in\Z}\cup
  \defset{\dfrac 1
    {2}\left[\dfrac {k-1} {k}+p(p+1)k\right]}{p\in\Z}.
\end{equation}
We define the following sets
\begin{equation}
  \label{eq:Ik}
  \scI_{k}:=\defset{\Lambda\in\Q}{\lambda=\Lambda+1\in\scM_k},
\end{equation}
and
\begin{equation}
  \label{eq:xk}
  \scX_k:=\defset{X\in\Q}{X=\frac{1}{\Lambda}, \quad\Lambda\in\scI_{k}}.
\end{equation}
In order to describe properties of the above defined sets its
convenient to consider them as subsets of compactified real line
$\R\cup\{\infty\}\simeq\R\PP^1$.
\begin{proposition}
  Assume that $k\in\N\setminus\{2\}$. Then
  \begin{enumerate}
  \item If $\Lambda\in \scI_k$, then $\Lambda\geq-1$.
  \item The infinity $\{\infty\}$ is the only limit point of set
    $\scI_k$ in $\R\cup\{\infty\}$.
  \item If $x\in \scX_k$, then either $x\leq-1$, or $x>0$,
    i.e. $\scX_k\subset \R\setminus (-1,0]$.
  \item Only $\{0\}$ is a limit point of $ \scX_k$.
  \end{enumerate}
\end{proposition}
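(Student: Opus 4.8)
The plan is to reduce all four statements to two facts about the Morales--Ramis set $\scM_k$ itself, namely (A) $\scM_k\subset[0,\infty)$ and (B) $\scM_k$ is discrete in $\R$ with $\infty$ as its unique accumulation point in $\R\PP^1$. Indeed, by definition $\scI_k=\{\lambda-1:\lambda\in\scM_k\}$ is just $\scM_k$ translated by $-1$, and $\scX_k=\{1/\Lambda:\Lambda\in\scI_k,\ \Lambda\neq0\}$ is its image under the inversion $\iota(x)=1/x$; since translation and inversion are homeomorphisms of $\R\PP^1$ (with $\iota$ swapping $0$ and $\infty$), every topological claim will transfer mechanically once (A) and (B) are in hand.

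First I would establish (A), which is exactly Claim~(1). For a fixed $k\in\N\setminus\{2\}$ only finitely many items of table \eqref{tabMo} are active: items 1 and 2 for every $k$, together with items 3--6 when $k=3$, item 7 when $k=4$, and items 8--9 when $k=5$ (items 10--16 require $k<0$). Each active item presents $\lambda$ as a value $q(p)$ of a quadratic $q$ in the integer $p$ whose leading coefficient is positive because $k>0$; hence each family is bounded below, and I would simply minimise each $q$ over $\Z$ by evaluating at the integer nearest its vertex. Family 1 attains minimum $0$ at $p=0$, family 2 attains minimum $\tfrac{k-1}{2k}\ge0$ at $p\in\{0,-1\}$, and the sporadic families 3--9 each attain a positive minimum by a one-line evaluation (e.g. $(1+3p)^2\ge1$ gives $\lambda\ge\tfrac18$ in item 3). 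This yields $\scM_k\subset[0,\infty)$, hence $\scI_k\subset[-1,\infty)$, which is Claim~(1); the value $\Lambda=-1$ is actually attained (family 1, $p=0$), so the bound is sharp.

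Next I would prove (B), giving Claim~(2). Each family $\{q(p):p\in\Z\}$ is discrete with $q(p)\to\infty$, so its intersection with any bounded interval $[a,b]$ is finite; a finite union of such families retains this property, whence $\scM_k\cap[a,b]$ is finite for every bounded interval. Consequently no real number is an accumulation point of $\scM_k$, while the unboundedness of family 1 forces $\infty$ to be one; thus $\infty$ is the unique accumulation point in $\R\PP^1$. Translating by $-1$ transfers this verbatim to $\scI_k$, which is Claim~(2).

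Finally Claims~(3) and~(4) follow formally. For Claim~(3), the inequality $\Lambda\ge-1$ together with $\Lambda\neq0$ forces $1/\Lambda\in(-\infty,-1]\cup(0,\infty)$, i.e. $\scX_k\subset\R\setminus(-1,0]$. For Claim~(4), I would apply the homeomorphism $\iota$ of $\R\PP^1$: it sends the unique accumulation point $\infty$ of $\scI_k$ to $0$, and since $0\in\scI_k$ is isolated (the only accumulation point being $\infty$), its image $\infty$ is isolated in $\iota(\scI_k)$; deleting it to form $\scX_k=\iota(\scI_k\setminus\{0\})$ neither creates nor destroys accumulation points, so $0$ is the unique accumulation point of $\scX_k$. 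The only genuinely computational step is the minimisation in (A); I expect the sporadic families $k\in\{3,4,5\}$ to be the main obstacle, though each reduces to a routine finite evaluation, and everything else is soft point-set topology carried along the two homeomorphisms.
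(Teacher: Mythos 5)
Your proof is correct; note that the paper itself gives no argument here (it states ``Easy proofs of the above propositions we left to the reader''), and your route --- bounding each active family of table \eqref{tabMo} below by minimising the quadratic in $p$, observing discreteness from the finiteness of $\scM_k\cap[a,b]$, and transporting both facts to $\scI_k$ and $\scX_k$ by the translation and the inversion on $\R\PP^1$ --- is exactly the intended elementary verification. The computations check out (in particular $\scM_k\subset[0,\infty)$ with the minimum $0$ attained in family 1 at $p=0$, so $\Lambda=-1$ is sharp), and your handling of the excluded value $\Lambda=0$ when passing to $\scX_k$ is careful and correct.
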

\begin{proposition}
\label{pro:neg}
  Assume that $k\in-\N\setminus\{-2\}$. Then
  \begin{enumerate}
  \item If $\Lambda\in\scI_k$, then $\Lambda\leq0$.
  \item The minus infinity $\{-\infty\}$ is the only limit point of
    set $\scI_k$ in $\R\cup\{-\infty\}$.
  \item If $x\in \scX_k$, then $x<0$, i.e. $\scX_k\subset
    (-\infty,0]$.
  \item Only $\{0\}$ is the  limit point of $\scX_k$.
  \end{enumerate}
\end{proposition}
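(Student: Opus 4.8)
The plan is to reduce all four assertions to two structural facts about $\scM_k$ for $k<0$: that it is bounded above by $1$, and that it is discrete with its only accumulation at $-\infty$. Both rest on the observation that, for negative $k$, every family in the Morales--Ramis table~\eqref{tabMo} is of the form $p\mapsto\lambda(p)$ with $\lambda$ a quadratic polynomial in the integer parameter $p$ whose leading coefficient is strictly negative. Granting these facts, assertions 1 and 2 are immediate after the shift $\Lambda=\lambda-1$, while assertions 3 and 4 follow by transporting the picture through the involution $\iota(\Lambda)=1/\Lambda$ of $\R\PP^1$, which swaps $0$ and $\infty$ and restricts to an order-reversing self-homeomorphism of the negative reals sending the end $-\infty$ to $0$.

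First I would establish the bound $\lambda\le1$ (equivalently $\Lambda\le0$), which is assertion 1. For the two universal families (items~1 and~2 of~\eqref{tabMo}) this is a short sign computation. For item~1 one factors
\[
\lambda-1=\tfrac12(p-1)(2+kp);
\]
for $k<0$ the two roots $p=1$ and $p=2/|k|$ have no integer strictly between them, and since the leading coefficient $k$ is negative the product is $\le0$ at every integer $p$. For item~2 one uses $p(p+1)\ge0$ together with $(k-1)/k=1+1/|k|\le2$ to get $\lambda\le\tfrac12(1+1/|k|)\le1$. For the finitely many special families (items~10--16, occurring only for $k\in\{-3,-4,-5\}$) each $\lambda$ has the shape $C-D\,(\,\cdot\,)^2$ with $D>0$, so its maximum over $p\in\Z$ is attained at the $p$ minimizing the squared term; evaluating these shows the maximum is strictly below $1$ in every case. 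Hence $\lambda\le1$ throughout, with the value $\lambda=1$ (that is, $\Lambda=0$) attained for every $k<0$ already at $p=1$ in item~1.

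Next I would prove assertion 2. Each family $\lambda(p)$ is a downward quadratic, so $\lambda(p)\to-\infty$ as $p\to\pm\infty$ and meets any half-line $[M,0]$ in only finitely many points. As $\scI_k$ is the union of the finitely many shifted families, it too meets every $[M,0]$ in a finite set; hence it has no finite accumulation point and, being unbounded below, accumulates exactly at $-\infty$. In particular $\scI_k$ is discrete.

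Finally, assertions 3 and 4 follow by applying $\iota$. Since $X=1/\Lambda$ requires $\Lambda\ne0$ while assertion 1 gives $\Lambda\le0$, every $\Lambda$ producing a point of $\scX_k$ satisfies $\Lambda<0$, whence $x=1/\Lambda<0$; this is assertion 3. For assertion 4, $\iota$ restricts to an order-reversing homeomorphism of $(-\infty,0)$ carrying the end $-\infty$ to $0$, and $\scX_k=\iota(\scI_k\setminus\{0\})$ is infinite; since $\scI_k$ accumulates only at $-\infty$, its image $\scX_k$ accumulates only at $0$. I expect the only genuine work---and the main obstacle---to be the uniform bound of assertion 1: once $\lambda\le1$ is secured across every row of~\eqref{tabMo}, discreteness and the limit-point statements are formal. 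The sharp value $\Lambda=0$, attained at $p=1$ in item~1, is precisely what must be deleted before inverting, a point worth flagging to keep assertion 3 clean.
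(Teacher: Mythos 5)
Your proof is correct, and it is exactly the direct table inspection the authors have in mind: the paper gives no argument here, stating only ``Easy proofs of the above propositions we left to the reader.'' Your factorisation $\Lambda=\tfrac12(p-1)(2+kp)$ for item~1, the sign analysis of item~2, the finite check of items~10--16, and the transport of the accumulation statement through $\Lambda\mapsto1/\Lambda$ together supply a complete and correct version of that omitted verification.
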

Easy proofs of the above propositions we left to the reader. Now we
can pass to the proof of Theorem~\ref{thm:finiteness}.

\begin{proof}
  At first we assume $k\in\N\setminus\{2\}$.  We can write
  equation~\eqref{eq:relka2} in the form
  \begin{equation}
    \label{proofrel}
    \sum_{i=1}^l X_i=c, \qquad c\in\Q,
  \end{equation}
  where $X_i:=1/\Lambda_i$, for $i=1,\ldots,l$.  Let $N$ denote the
  number of solutions of this equation in $\scX_k^l$, and $N_p$ for
  $0\leq p\leq l$, denote the number of solutions of this equation in
  $\scX_k^l$ such that they have $p$ negative components.
We have
\[
N=\sum_{p=0}^lN_p.
\]
We will show that $N_p<\infty$ for $0\leq p\leq l$. In fact, set $\scX_k$ has
only a finite number of negative elements. Thus for a given $p$ we have a finite
number of choices of $p$ negative elements $X_i$. Let
$X_l,X_{l-1},\ldots,X_{l-p+1}\in\scX_k$ be negative. Then
\[
\sum_{i=1}^{l-p}X_i=c'>0,
\]
where $X_1,\ldots,X_{l-p}\in\scX_k$ and $X_i>0$ for $i=1,\ldots,l-p$. By
  Lemma~\ref{lem:dum}, there is only a finite number of solution of
  this equation, and this finishes the proof for
  $k\in\N\setminus\{2\}$. 

For $k\in -\N\setminus\{-2\}$ we write  equation~\eqref{eq:relka2} in the form
  \begin{equation}
    \label{proofrel1}
    -\sum_{i=1}^l X_i=c, \qquad c\in\Q,
  \end{equation}
where $X_i:=1/\Lambda_i$, for $i=1,\ldots,l$. Putting $Y_i:=-X_i$   for $i=1, \ldots, n$, and $\scY_k:=-\scX_k$ we have to show that number of solutions  of equation 
\begin{equation*}
\sum_{i=1}^lY_i =c, \qquad c\in\Q, 
\end{equation*}
such that $(Y_1, \ldots, Y_l)\in\scY_k^l$ is finite.  By
Proposition~\ref{pro:neg}, if $y\in\scY_k$, then $y>0$ and only $0$ is
a limit point of $\scY_k$. Hence we can proceed like in the case of
positive $k$.
\end{proof}
\section{Applications}
\label{appl}

After determination of all necessary integrability conditions for a
given $r$ and $s$:
\begin{itemize}
\item all non-trivial eigenvalues $\Lambda_i$ of the Hessian must
  belong to $\scM_{k}$, where $k=r-s$, and
\item appropriate relations between these eigenvalues calculated at
  all proper Darboux points must exist
\end{itemize}
one can try to classify all integrable rational potentials.

This problem will be considered elsewhere. Here we only show two
one-parameter families of potentials satisfying these conditions.
Both belong to the special class of rational potentials of the
following form
\begin{equation}
  \label{specjal}
  V=q_1^{-s}\sum_{i=0}^r v_{r-i}q_1^{r-i}q_2^i,\quad v_i\in\C.
\end{equation} 
In such a case
\[
w=v=\sum_{i=0}^r v_{r-i}z^i,\qquad u=1,\qquad v=w.
\]
The maximal number of proper Darboux points is $r+1$ provided
$s\neq0$. In this section we show two families of such potentials that
satisfy all known integrability conditions and between them some
integrable cases were found.

\subsection{Example with $k\in\Z_-\setminus \{-2\}$}
\label{k-}
In this section we will consider a family of potentials
\eqref{specjal} that are homogeneous of degree $k=-\kappa$, where
$\kappa>0$. We assume that they possess maximal number $r+1$ proper
Darboux points and the non-trivial eigenvalues are the same at all
proper Darboux points i.e.  $\lambda_1=\ldots=\lambda_{r+1}=\lambda$
and belong to the first item of the Morales-Ramis table. This means
that
\[
\lambda= p-\frac{1}{2}\kappa p(p-1),\qquad \text{for some}\quad p\in\Z
\]
or
\begin{equation}
  \label{c1}
  \Lambda:=\lambda-1=\frac{1}{2}\left(-\kappa p^2+(\kappa+2)p-2\right).
\end{equation}

If we assume that we have $r+1$ simple proper Darboux points different
from $\pm\rmi$ and this number is exactly equal to
\begin{equation}
  r+1=-\frac{1}{2}\left(-\kappa p^2+(\kappa+2)p-2\right),
\end{equation}
then relation 
\begin{equation}
 \sum_{i=1}^{r+1}\frac{1}{\Lambda}=-1
\label{eq:relpotek}
\end{equation} 
 obviously
holds. Now we will reconstruct the form of potential starting from
these values of spectra of Hessians at all proper Darboux points.

Because the number of proper Darboux points is maximal thus function
$g(z) $ defined in~\eqref{funGH} has only simple roots and can be
written as
\begin{equation}
  \label{c2}
  g(z)=\alpha\prod_{i=1}^{r+1}(z-z_i),\qquad \alpha\in\C^{\star}.
\end{equation}
Using above form of $g(z)$ and decomposition on simple fractions we
can calculate $h(z)/g(z)$
\[
\frac{h(z)}{g(z)}=\sum_{i=1}^{r+1}\frac{h(z_i)}{g'(z_i)}\frac{1}{z-z_1}=\sum_{
  i=1
}^{r+1}\frac{1}{\Lambda_i}\frac{1}{z-z_i}=\frac{1}{\Lambda}\sum_{i=1}^{r+1}\frac
{1}{z-z_i}=\frac{1}{\Lambda}\frac{g'(z)}{g(z)}.
\]
Putting equation~\eqref{c1} into above formula we find
\[
h(z)=\frac{2}{-\kappa p^2+(\kappa+2)p-2}g'(z).
\]
Finally if we use expressions~\eqref{funGH} for function $g(z)$ and
$h(z)$ we obtain second order linear differential equation
\[
(1+z^2)\frac{\operatorname{d}^2v(z)}{\operatorname{d}z}+\left(\kappa+2-\frac{\kappa
    p^2-(\kappa+2)p}{2}\right)z\frac{\operatorname{d}v(z)}{\operatorname{d}z}+\left(\kappa-\kappa\frac{\kappa
    p^2-(\kappa+2)p}{2}\right)v(z)=0.
\]
Applying transformation in a form $z\mapsto x=\rmi z$, where
$\rmi^2=-1$ we obtain
\begin{equation*}
  (1-x^2)\frac{\operatorname{d}^2v(x)}{\operatorname{d}x^2}-\left(\kappa+2-\frac{
      \kappa p^2-(\kappa+2)p}{2} 
  \right)x\frac{\operatorname{d}v(x)}{\operatorname{d}x}+\frac{1}{2}\kappa
  p(\kappa p-(\kappa+2))v(x)=0.
\end{equation*}
If we rewrite this equation as
\begin{equation}
  \label{c4}
  (1-x^2)\frac{\operatorname{d}^2v(x)}{\operatorname{d}x^2}
  +\left(\beta-\alpha-(\alpha+\beta+2)x
  \right)\frac{\operatorname{d}v(x)}{\operatorname{d}x}
  +r(r+\alpha+\beta+1)v(x)=0
\end{equation}
where
\begin{equation}
  \alpha=\beta=\frac{2(p-1)+\kappa(2+p-p^2)}{4},\qquad
  r=\frac{p\left(\kappa(p-1)-2 \right)}{2},
  \label{eq:wyry}
\end{equation}
then we recognise immediately that this is equation defining Jacobi
polynomials $P_r^{(\alpha,\beta)}(x)$ of degree $r$. Jacobi
polynomials can be written as
\[
P_r^{(\alpha,\beta)}(x)=2^{-r}\sum_{i=0}^r{r+\alpha \choose i}{
  r+\beta \choose r-i}(x-1)^{r-i}(x+1)^i.
\]
It means that potential $v(z)$ has the form
\[
v(z)=P^{(\alpha,\beta)}_r(\rmi z)=P^{(\alpha,\beta)}_r\left(\rmi
  \frac{q_2}{q_1}\right).
\]
and its homogenization gives the final form of our potential
\begin{equation}
  V(q_1,q_2)=q_1^{-\kappa}v(\rmi z)=q_1^{-\kappa}P_r^{(\alpha,\beta)}\left(\rmi
    \frac{q_2}{q_1}\right).
  \label{eq:dorki}
\end{equation}
Notice that we obtain two-parameter family of potentials which satisfy
all known necessary integrability conditions. The role of parameters
play $p\in\Z$ and $\kappa\in\N$, see expressions in \eqref{eq:wyry}.

\subsection{Special class of integrable rational potentials}
\label{dorizzi}
We can try to fix $p$ and $\kappa$ and use the direct method in order
to find additional first integral. Condition $r>0$, where $r$ is given
by \eqref{eq:wyry} gives the following inequality for $\kappa$
\[
\kappa>\dfrac{2}{p-1}.
\]
For $p=1$ and small $k$ we made such experiments but without
success. The situation changes for $p=2$ that gives $r+1=\kappa-1$
i.e. $r=\kappa-2$ for $\kappa>2$ and $\alpha=\beta=1/2$. In this case
we have
\[
P_r^{\left(\frac{1}{2},\frac{1}{2}\right)}(z)=\dfrac{1}{(r+1)!}\left(\dfrac{3}{2
  } \right)_rU_r(z)
\]
where $(x)_r$ means the Pochhammer symbol, see e.g. formulae for
Jacobi polynomials on Wolfram page \cite{Wolfram}. Here $U_r$ is the
Chebyshev polynomial of the second kind determined by the following
formula
\[
U_r(z)=\sum_{i=0}^{\left[\frac{r}{2}\right]}\dfrac{(-1)^i(r-i)!(2z)^{r-2i}}{
  i!(r-2i)!}.
\]
Substitution $z=\rmi q_2/q_1$ gives the following form
\begin{equation}
  \begin{split}
    &V(q_1,q_2)=q_1^{ -r-2
    }P_r^{\left(\frac{1}{2},\frac{1}{2}\right)}\left(\rmi\dfrac{q_2}{q_1}\right)=
    \dfrac{C}{q_1^{2r+2}}\sum_ { i=0 } ^ { \left [ \frac {r } { 2 }
      \right]}2^{-2i}\dfrac{(r-i)!}{i!(r-2i)!}q_1^{ 2i } q_2^ { r-2i
    }\\
    &=\dfrac{C}{q_1^{2r+2}\rho}\left[\left(\frac{\rho+q_2}{2}\right)^{r+1}
      +(-1)^r\left(\frac{\rho-q_2} { 2 }\right)^{r+1}\right]
  \end{split}
\end{equation}
where $\rho=\sqrt{q_1^2+q_2^2}$ and
\[
C=\frac{\left(\dfrac{3}{2 } \right)_r(r\rmi)^r}{(r+1)!}.
\]

Such potentials have appeared in \cite{Dorizzi:83::}.  They can be
written also as
\begin{equation}
  V_n=\dfrac{1}{\rho}\left[\left(\frac{\rho+q_2}{2}\right)^{n+1}
    +(-1)^n\left(\frac{\rho-q_2}
      { 2 }\right)^{n+1}\right],
  \label{s1}
\end{equation} 
for negative $n=-r-2$.

\begin{example}
  \label{e1}
  The first two non-trivial integrable rational potentials are
  following
  \[
  \begin{split}
  V_{-3}(q_1,q_2)&=q_1^{-3}P_1^{\left(\frac{1}{2},\frac{1}{2}\right)}\left(\rmi
\frac{q_2}{q_1}\right)=\frac{q_2}{q_1^4},\\
  V_{-4}(q_1,q_2)&=q_1^{-4}P_2^{\left(\frac{1}{2},\frac{1}{2}\right)}\left(\rmi
      \frac{q_2}{q_1}\right)=\frac{q_1^2+4q_2^2}{q_1^6}.
  \end{split}
  \]
  with the corresponding first integrals
  \[
  \begin{split}
    I_{-3}(q_1,q_2,p_1,p_2)&=p_1(q_2p_1-q_1p_2)+\frac{q_1^2+4q_2^2}{2q_1^4},\\
  I_{-4}(q_1,q_2,p_1,p_2)&=p_1(q_2p_1-q_1p_2)+\frac{4q_2(q_1^2+2q_2^2)}{q_1^6}.
  \end{split}
  \]
\end{example}
In papers \cite{Ramani:82::,Dorizzi:83::} authors shown that
potential~\eqref{s1} is integrable for all $n\in\Z$. The first
integral of the Hamiltonian system with potential~\eqref{s1} is the
following
\[
I(q_1,q_2,p_1,p_2)=p_1(q_2p_1-q_1p_2)+\frac{1}{2}q_1^2V_{n-1}.
\]

Failures of  the direct search of additional first integrals for
other small values of $p$ and $\kappa$ suggests the non-integrability
for $p\neq 2$.  Since all known integrability conditions due to
variational equations are satisfied the only tool for proving the
non-integrability are higher order variational equations. It is known
result that if the identity component of differential Galois group of
$m$-th order variational equations for some $m\geq1$ is non-Abelian,
then Hamiltonian system is not integrable in the Liouville
sense~\cite{Morales:99::c}.  But the dimension of homogeneous linear
systems corresponding to higher order variational equations grows very
quickly with growing $m$ and determination of its differential Galois group
becomes a very hard problem. Thus in practice usually we look for
logarithms in solutions of first as well as higher order variational
equations. In some cases one can prove that the presence of logarithms
in solutions of higher order variational equations of a certain degree $m\geq 1$
implies that their differential Galois group has non-Abelian identity
component and as result to prove strictly the non-integrability. This
is the case when the first order variational equations are the direct
sum of Lam\'e equations \cite{Morales:99::c,mp:04::d,mp:05::c} or
when linear homogeneous equations are completely reducible
\cite{Boucher:00::}. None of these cases holds for Hamiltonian systems
governed for potentials of the form~\eqref{eq:dorki} but one can check
for small $p$ and $\kappa$ that for $p\neq 2$ logarithmic terms in
solutions of higher order variational equations appear, in contrary
for $p=2$, there is no such terms.

\subsection{Example with  $k\in\Z_+\setminus \{0,2\}$}
\label{k+}
In this section we consider homogeneous potentials of  a positive  degree. 
Similarly to the previous case we assume that potential
possesses maximal number $r+1$ of simple proper Darboux points
different from $\pm\rmi$, and all eigenvalues $\lambda_i$ for
$i=1,\ldots, r+1$ belong to first item of table~\eqref{tabMo}, i.e., 
\[
\lambda_i\in\left\lbrace p+\frac{1}{2}kp(p-1) \ | \ p\in\Z
\right\rbrace.
\]
so, we have
\begin{equation}
  \label{c21}
  \Lambda_i:=\lambda_i-1\in \mathcal{L}_k,\qquad 
 \mathcal{L}_k:=\left\lbrace \frac{1}{2}(p-1)(kp+2) \ | \ p\in\Z
\right\rbrace.
\end{equation}
Set $\mathcal{L}_k$ possesses only one negative element equal to $-1$, so
to satisfy relation~\eqref{eq:relpotek}
 we must
take at least two negative values, for example
$\Lambda_r=\Lambda_{r+1}=-1$ and $\Lambda_i>0$ for
$i=1,\ldots,r-1$. We assume also that
$\Lambda_1=\Lambda_2=\ldots=\Lambda_{r-1}=\Lambda=(p-1)(kp+2)/2$, then
obviously
\begin{equation}
  r-1=\frac{1}{2}(p-1)(kp+2).
  \label{eq:rr2}
\end{equation}
We can decompose quotient $h(z)/g(z)$ as follows
\[
\frac{h(z)}{g(z)}=\frac{1}{\Lambda}\sum_{i=1}^{r-1}\frac{1}{z-z_i}-\frac{1}{z-z_
  r}-\frac{1}{z-z_{r+1}}=\frac{1}{\Lambda}\sum_{i=1}^{r+1}\frac{1}{z-z_i}
-\left(1+\frac{1}{\Lambda}\right)\left(
  \frac{1}{z-z_r}+\frac{1}{z-z_{r+1}}\right),
\]
where affine coordinates of Darboux points $z_r$ and $z_{r+1}$ play role
of parameters. Now using~\eqref{c2} and~\eqref{c21} we obtain
\[
\begin{split}
  h(z)&=\frac{1}{\Lambda}g'(z)-\left(1+\frac{1}{\Lambda}\right)\left(\frac{1}{z-z_
      r}+\frac{1}{z-z_{r+1}}\right)g(z)\\
  &=\frac{2}{(p-1)(2+kp)}g'(z)-\frac{p(2+k(p-1))}{(p-1)(2+kp)}\left(\frac{1}{z-z_r
    } +\frac{1}{z-z_{r+1}}\right)g(z).
\end{split}
\]
  From definitions of $g(z)$ and $h(z)$ given 
in~\eqref{funGH} we obtain second order differential equation
\begin{equation}
  \label{c22}
  \begin{split}
    &v''(z)+p(z)v'(z)+q(z)v(z)=0,\\
    &p(z)=-\dfrac{p (2 - k + k p)}{2 (z-a )}-\dfrac{p (2 - k + k p)}{2
      (z-b)}
    + \dfrac{(1 + p) (2 - 2 k + k p) z}{ 2 (1 + z^2)},\\
    & q(z)=\dfrac{k (2 + k (-1 + p)) p (a b - z^2)}{2 (z-a ) (z-b) (1
      + z^2)},
  \end{split}
\end{equation}
where $a=z_r$ and $b=z_{r+1}$.  Now we have to find its solutions which 
are polynomials of degree $r$. This degree, for given  $p$ and $k$, is defined 
 by equation \eqref{eq:rr2}.     Substitution of the expansion
\[
v(z)=\sum_{i=0}^rv_iz^i,\qquad v_i\in\C.
\]
into \eqref{c22} gives a system of homogeneous linear equations of
dimension $r+1$ on indefinite coefficients $v_i$ for $i=0,\ldots,r$. This
system reads
\begin{equation}
  \vA\vv=\vzero,\qquad \vv=[v_0,\ldots,v_r]^T.
  \label{eq:coeffv2}
\end{equation}
Entries of matrix $\vA$ depend on $a$ and $b$. This system has a non-zero
solution iff 
all minors of $\vA$ of degree $r+1$ vanish. These conditions have the form of
non-linear  polynomial equations for $a$ and $b$. Among all solutions of these
equations we choose those which give  as solution $v(z)$ a polynomial  of 
degree $r$.  Then we  calculate
$V(q_1,q_2)=q_1^kv(q_2/q_1)$.  In order to obtain a rational
potential which is not a polynomial one, the number $s=r-k$  must be greater
than zero. This   gives the following condition on
$k$ and $p$
\[
s=p + \dfrac{k}{2} (p-2) (1 + p)>0.
\]
Thus, let us start our analysis for small $k$, and  small $|p|$. For $k=1$ and
$p\in\{-2,-1,0,1\}$ there is no any  non-trivial solution of~\eqref{eq:coeffv2}.
For $p=2$, and
$p=-3$, system \eqref{eq:coeffv2} possesses non-trivial solutions with
$r+1=4$ proper Darboux points. But all obtained potentials possess some multiple
proper Darboux points or  $a=\rmi$ or $b=\rmi$. For $p=3$ one
can find the following potential satisfying all assumptions
\begin{equation}
V=q_1^{-5}(147 q_1^6 + 441 q_1^4 q_2^2 + 56 q_1^2 q_2^4 + 4 q_2^6).
\label{potprut}
\end{equation}
Second order variational equations for this potential possess
logarithmic terms that suggests its non-integrability. But analysis of
this case is not finished because equations on unknowns $a$ and $b$ obtained
from condition of
vanishing of all minors of $\vA$ of degree $r+1$ are very complicated
polynomials of
$a$ and $b$  of degree 11.  Thus, the problem of finding all their
solutions is very hard. Potential \eqref{potprut} is just one example
corresponding to the simplest solution of this system.

 For $k=3$ and $p\in\{-1,0,1\}$ system
\eqref{eq:coeffv2} does not possesses  solution which  gives a potential with
required properties. For $p=2$ obtained potentials of required degree
$r=5$ have very complicated coefficients and verification if  they
possess all required properties as well as their effective subsequent
analysis seems to be impossible.

These  examples show that the reconstruction of the potential from the
given non-trivial eigenvalues of Hessians at all proper Darboux points
is very hard task. Furthermore, if we are lucky and we made this
reconstruction effectively, then we obtain potential that satisfies
all known necessary integrability conditions but very often it is
non-integrable.
\subsection{Potentials without proper Darboux points}
Another problem is related  to the integrability analysis of rational
potentials without proper Darboux points given in equation~\eqref{eq:bezp}. For
them we have no integrability obstructions.  Thus at this moment the only tool
that we have to our disposal is the direct method of search of first integrals
\cite{MR879243}. Such analysis for this class of potentials was already
done.   Namely, if we make canonical transformation
\[
z_1=q_2-\rmi q_1,\quad z_2=q_2+\rmi q_1, \qquad
y_1=\frac{1}{2}(p_2+\rmi p_1),\quad y_2=\frac{1}{2}(p_2-\rmi p_1),
\]
then Hamiltonian 
\[
H=\dfrac{1}{2}(p_1^2+p_2^2)+c(q_1+\rmi q_2)^{\alpha}(q_1-\rmi q_2)^{\beta},
\quad
    \alpha+\beta=k, \quad \alpha,\beta\in \Z, \qquad c\in\C^{\star}
\]
takes the form
\begin{equation}
  H=2y_1y_2+c z_1^{\alpha}z_2^{\beta}.
\end{equation} 
Integrability analysis for this class of Hamiltonian systems  with not only 
integer but also with rational $\alpha$ and $\beta$ was
already
made in \cite{mp:11::c} and many integrable rational potentials were
found.

\section*{Acknowledgments}
The authors are very grateful to Andrzej J. Maciejewski for many helpful
comments and suggestions concerning improvements and simplifications of some
results.

\newcommand{\noopsort}[1]{}\def\polhk#1{\setbox0=\hbox{#1}{\ooalign{\hidewidth
  \lower1.5ex\hbox{`}\hidewidth\crcr\unhbox0}}} \def\cprime{$'$}
  \def\cydot{\leavevmode\raise.4ex\hbox{.}} \def\cprime{$'$} \def\cprime{$'$}
  \def\polhk#1{\setbox0=\hbox{#1}{\ooalign{\hidewidth
  \lower1.5ex\hbox{`}\hidewidth\crcr\unhbox0}}} \def\cprime{$'$}
  \def\cprime{$'$} \def\cprime{$'$} \def\cprime{$'$} \def\cprime{$'$}
  \def\cprime{$'$}

\end{document}